\newcommand{\das}{\operatorname{DAS}}
\newcommand{\dtcwt}{\operatorname{DT-\mathbb{C}WT}}
\newcommand{\tpctf}{\operatorname{TP-\mathbb{C}TF}}
\newcommand{\ctf}{\operatorname{\mathbb{C}TF}}
\newcommand{\uft}{\operatorname{UFT}}
\newcommand{\sampled}{\,!\,} 
\newcommand{\reduced}{\downarrow}
\date{}
\newtheorem{lemma}{Lemma}
\newtheorem{theorem}[lemma]{Theorem}
\numberwithin{equation}{section}
\begin{document}

\title[Directional Tensor Product Complex Tight Framelets]{Directional Tensor Product Complex Tight Framelets with\\Low Redundancy}

\author{Bin Han}
\address{Department of Mathematical and Statistical Sciences,
University of Alberta, Edmonton, Alberta T6G 2G1, Canada.\newline
\quad  {\tt bhan@ualberta.ca}\qquad
 {\tt http://www.ualberta.ca/$\sim$bhan}
}

\author{Zhenpeng Zhao}
\address{Department of Mathematical and Statistical Sciences,
University of Alberta, Edmonton, Alberta T6G 2G1, Canada. \newline
{\tt zzhao7@ualberta.ca}
}

\author{Xiaosheng Zhuang}

\address{Department of Mathematics, City University of Hong Kong, Tat Chee Avenue, Kowloon Tong, Hong Kong.\newline
 {\tt xzhuang7@cityu.edu.hk}
}

\thanks{Research of Bin Han and Zhenpeng Zhao was supported in part by
the Natural Sciences and Engineering Research Council of Canada (NSERC Canada under
Grant 05865). Research of Xiaosheng Zhuang was supported by Research Grants Council of Hong Kong (Project No. CityU 108913).}

\thanks{Contact information of corresponding author Bin Han: E-mail: bhan@ualberta.ca, Phone: 1-780-4924289, Fax: 1-780-4926826,  Web: http://www.ualberta.ca/$\sim$bhan}

\makeatletter \@addtoreset{equation}{section} \makeatother
\begin{abstract}
Having the advantages of redundancy and flexibility, various types of tight frames have already shown impressive performance in applications such as image and video processing. For example, the undecimated wavelet transform, which is a particular case of tight frames, is known to have good performance for the denoising problem. Empirically, it is widely known that higher redundancy rate of a (tight) frame often leads to better performance in applications. The wavelet/framelet transform is often implemented in an undecimated fashion for the purpose of better performance in practice.
Though high redundancy rate of a tight frame can improve performance in applications,
as the dimension increases, it also makes the computational cost
skyrocket and the storage of frame coefficients increase exponentially.
This seriously restricts the usefulness of such tight frames for problems in moderately high dimensions such as video processing in dimension three.
Inspired by the directional tensor product complex tight framelets $\tpctf_m$ with $m\ge 3$ in \cite{Han:mmnp:2013,HanZhao:siims:2014} and their impressive performance for image processing in \cite{HanZhao:siims:2014,SHB:2014}, in this paper we introduce a directional tensor product complex tight framelet $\tpctf^{\reduced}_6$ (called reduced $\tpctf_6$) with low redundancy. Such $\tpctf_6^{\reduced}$ is a particular example of tight framelet filter banks with mixed sampling factors.
The $\tpctf^{\reduced}_6$ in $d$ dimensions
not only offers good directionality but also has the low redundancy rate $\frac{3^d-1}{2^d-1}$ (e.g., the redundancy rates are $2, 2\mathord{\frac{2}{3}}, 3\mathord{\frac{5}{7}}, 5\mathord{\frac{1}{3}}$ and $7\mathord{\frac{25}{31}}$ for dimension $d=1,\ldots, 5$, respectively).
Moreover, our numerical experiments on image/video denoising and inpainting show that
the performance using our proposed $\tpctf^{\reduced}_6$
is often comparable or sometimes better than several state-of-the-art frame-based methods which have much higher redundancy rates than that of $\tpctf^{\reduced}_6$.
\end{abstract}

\keywords{Directional complex tight framelets, tight framelet filter banks with mixed sampling factors, redundancy rate, tensor product, image and video denoising, image and video inpainting}

\subjclass[2010]{42C40, 42C15, 94A08, 68U10} \maketitle

\pagenumbering{arabic}

\section{Introduction and Motivations}

Though wavelets have many useful applications, they have several shortcomings in dealing with multidimensional problems. For example, tensor product real-valued wavelets are known for lack of the desired properties of translation invariance and directionality (\cite{CoifDo,KL,SBK}).
There are a lot of papers in the current literature to improve the performance of classical tensor product (i.e., separable) real-valued wavelets by remedying these two shortcomings. In one direction, translation invariance of wavelets can be improved by using wavelet frames instead of orthonormal wavelets (see \cite{CHS,CoifDo,DHRS,Han:acha:1997,Han:acha:2012,Han:acha:2013,Han:acha:2014,RonShen:twf,SBK} and many references therein).
For example, the undecimated wavelet transform (\cite{CoifDo}) using Daubechies orthonormal wavelets has been known to be effective for the denoising problem. In fact, such an undecimated wavelet transform employs a particular case of tight frames with high redundancy.
Comparing with an orthonormal basis, a (tight) frame is more general and has redundancy by allowing more elements into its system. The added redundancy of a tight frame not only improves the property of translation invariance but also makes the design of a tight frame more flexible (see \cite{CHS,CoifDo,DHRS,Han:acha:1997,Han:acha:2012,Han:acha:2013,Han:acha:2014,HanZhuang:acha:2014,RonShen:twf} and references therein).
In the other direction, many papers in the literature have been studying directional representations, to only mention a few here, curvelets in \cite{FDCT}, contourlets in \cite{DV}, shearlets in \cite{HanZhuang:acha:2014,KL,KSZ,LimIEEE, LimIEEE2} and many references therein, surfacelets in \cite{LD}, dual tree complex wavelet transform in \cite{K:acha:2001,SBK,SS:bslocal}, complex tight framelets in \cite{Han:acha:2012,Han:mmnp:2013,Han:acha:2014,HanZhao:siims:2014}, plus many other directional representations. To improve directionality of tensor product real-valued wavelets, due to the requirement of the additional angular resolution for a directional representation, it is almost unavoidable to employ a (tight) frame instead of an orthonormal basis by allowing redundancy. In fact, to our best knowledge, all representations having either directionality and/or (near) translation invariance, which have been known in the literature so far, use either a frame or a tight frame with various degrees of redundancy.

In the following, let us explain by what we mean the redundancy rate of a transform or a system.
To this end, let us recall the definition of a tight framelet filter bank.
For $u=\{u(k)\}_{k\in \dZ} \in \dlp{1}$, we define the \emph{Fourier series} (or \emph{symbol}) $\wh{u}$ of the sequence $u$ to be $\wh{u}(\xi):=\sum_{k\in \dZ} u(k) e^{-ik\cdot \xi}$, $\xi\in \dR$. Note that $\wh{u}$ is a $2\pi\dZ$-periodic function satisfying $\wh{u}(\xi+2\pi k)=\wh{u}(\xi)$ for all $k\in \dZ$.
For $a,b_1, \ldots, b_s\in \dlp{1}$, we say that $\{a; b_1, \ldots, b_s\}$ is a ($d$-dimensional dyadic) \emph{tight framelet filter bank} if
\[
|\wh{a}(\xi)|^2+\sum_{\ell=1}^s |\wh{b_\ell}(\xi)|^2=1 \quad \mbox{and}\quad
\wh{a}(\xi)\ol{\wh{a}(\xi+\pi \omega)}+\sum_{\ell=1}^s \wh{b_\ell}(\xi) \ol{\wh{b_\ell}(\xi+\pi \omega)}=0, \qquad \forall\; \omega\in ([0,1]^d\cap \dZ)\backslash\{0\}
\]
for almost every $\xi\in \dR$. Moreover, a ($d$-dimensional dyadic) tight framelet filter bank $\{a; b_1, \ldots, b_s\}$ with $s=2^d-1$ is called a ($d$-dimensional dyadic) \emph{orthonormal wavelet filter bank}. A $d$-dimensional tight framelet (or orthonormal wavelet) filter bank is often obtained through tensor product. For one-dimensional filters $u_1, \ldots, u_d\in \lp{1}$, we define their $d$-dimensional tensor product filter $u_1\otimes \cdots \otimes u_d$ by $(u_1\otimes \cdots \otimes u_d)(k_1,\ldots, k_d):=u_1(k_1)\cdots u_d(k_d)$ for $k_1, \ldots, k_d\in \Z$. In particular, we define $\otimes ^d u:=u\otimes \cdots \otimes u$ with $d$ copies of $u$.
If $\{a; b_1,\ldots,b_s\}$ is a one-dimensional (dyadic) tight framelet filter bank (or an orthonormal wavelet filter bank with $s=1$), then it is straightforward to check that $\otimes^d \{a; b_1, \ldots, b_s\}$ is a $d$-dimensional dyadic tight framelet (or orthonormal wavelet) filter bank.
See \cite{CHS,DHRS,Han:acha:1997,Han:acha:2010,Han:acha:2012,Han:acha:2013,Han:acha:2014,RonShen:twf}
and Section~\ref{sec:tffb:ms} for connections of tight framelet filter banks with tight framelets in $L_2(\R^d)$.

A fast wavelet/framelet transform is implemented through the operations of convolution and sampling. Let $v\in \dlp{\infty}$ be a $d$-dimensional input signal and
let $u$ be a filter from a given $d$-dimensional tight framelet filter bank $\{a; b_1, \ldots, b_s\}$.
Roughly speaking, for the decomposition/forward transform,
the data $v$ is first convolved with the flip-conjugate filter $u^\star$ (that is, $u^\star(k):=\ol{u(-k)}, k\in\dZ$) as $v*u^\star:=\sum_{k\in \dZ} v(k) u^\star(\cdot-k)$ and then it is downsampled as $w:=(v*u^\star) \ds 2I_d:= (v*u^\star)(2\cdot)$, where $w$ is called the sequence of frame coefficients.
The decomposition transform can be applied recursively $J$ times with $v$ being replaced by $(v*a^\star)\ds 2I_d$ (that is, $u=a$) as the new input data, where $J\in \N$ is the decomposition level. For the reconstruction/backward transform, the coefficient sequence $w$ is upsampled as $(w\us 2I_d)(k):=w(k/2)$ if $k\in 2\dZ$ and $(w\us 2I_d)(k):=0$ for $k\in \dZ\bs [2\dZ]$,
and then it is convolved with $u$ as $(w\us 2I_d)*u$. Finally, all the reconstructed sequences are added together as one reconstructed data. See Figure~\ref{fig:ffrt} for an illustration of a two-level fast framelet transform employing a one-dimensional dyadic tight framelet filter bank $\{a; b_1, \ldots, b_s\}$ (but with $\sqrt{4},\, \ds 4,\, \us 4$ in Figure~\ref{fig:ffrt} being replaced by $\sqrt{2}, \,\ds 2,\, \us 2$, respectively). See Section~\ref{sec:tffb:ms} for more details on a fast framelet transform.

Most $d$-dimensional problems and data in applications have finite length. For a given real-valued data $v$ of finite length, one first extends it into a periodic sequence $v^e$ on $\dZ$. Then one performs a wavelet/framelet transform on the extended data $v^e$. This induces a linear transform on the original data $v$ and the decomposition transform can be rewritten using a matrix $\wa$.
More precisely, we can arrange the $d$-dimensional real-valued data $v$ properly so that it can be regarded as an $n\times 1$ column vector in $\R^n$, that is, $v\in \R^n$. Performing a linear transform $\wa$ on $v$, we obtain another column vector $w:=\wa v\in \R^N$ of frame coefficients. If $\{a; b_1, \ldots, b_s\}$ with $s=2^d-1$ is a real-valued orthonormal wavelet filter bank, then $N=n$ and $\wa$ is a real-valued $n\times n$ orthogonal matrix satisfying $\wa^\tp \wa=I_n$. If $\{a;b_1, \ldots, b_s\}$ is a real-valued tight framelet filter bank, then we must have $N\ge n$ and $\wa$ is a real-valued $N\times n$ matrix satisfying $\wa^\tp \wa=I_n$. Therefore, we call the ratio $N/n$ the \emph{redundancy rate} of the linear transform $\wa$ or its underlying tight frame, since it is the ratio between the $N$ number of frame coefficients over the $n$ number of original input data. Also note that the redundancy rate $N/n$ is independent of the length $n$ of input data and depends only on the number $s$ of high-pass filters and the sampling factor (which is $2I_d$ here).

We now look at the redundancy rate of an undecimated wavelet/framelet transform (denoted by $\uft_s$) using tensor products of a one-dimensional real-valued tight framelet filter bank $\{a; b_1, \ldots, b_s\}$ (when $s=1$, it is an orthonormal wavelet filter bank and $\uft_1$ becomes $\operatorname{UWT}$--the undecimated wavelet transform).
Here the word undecimated means that the upsampling and downsampling operations in a standard wavelet/framelet transform are completely removed.
Undecimated framelet transforms using spline tight framelet filter banks $\{a^B_2; \mathring{b}_1, \mathring{b}_2\}$ and $\{a^B_4; b_1,b_2,b_3,b_4\}$ with  $\wh{a^B_m}(\xi):=2^{-m}(1+e^{-i\xi})^m$
have applications to many image restoration problems as reported in \cite{CCS:framelet:inpainting,CCS:ipi:2010,CDOS,CRSS,DongShen:tutorial,LiShenSuter:ieee:2013}
and many references therein.
The tensor product $d$-dimensional tight framelet filter bank is $\otimes^d \{a; b_1, \ldots, b_s\}$ which consists of one real-valued low-pass filter $\otimes^d a$ and $(s+1)^d-1$ real-valued high-pass filters.
If the decomposition level is $J\in \N$, since all the filters are implemented in an undecimated fashion, the redundancy rate of the $d$-dimensional undecimated framelet transform using the tensor product real-valued tight framelet filter bank $\otimes^d\{a; b_1, \ldots, b_s\}$
is $((s+1)^d-1)J+1$. To take advantages of the multiscale structure of wavelets, it is necessary that the decomposition level $J$ should be as high as possible. For example, for a standard $512\times 512$ grayscale image, the wavelet decomposition level is often set to be at least $J=5$ (note that $512=2^9$). Let us here take a moderate choice of $J=3$ and use the smallest $s=1$ (that is, we are using an orthonormal wavelet filter bank). For dimension $d=3$ and $J=3$, the redundancy rate of a tensor product undecimated wavelet transform is $22$. However, as we mentioned before, tensor product real-valued orthonormal wavelets lack directionality and translation invariance. To improve directionality or translation invariance, we must use a tight framelet filter bank with $s\ge 2$.
For $d=3$ and $J=3$, the redundancy rates of $\uft$s
are $22,79,190,373,646$ for $s=1,\ldots, 5$, respectively. 
See Table~\ref{tab:redundancy} for a numerical illustration on redundancy rates of an undecimated wavelet/framelet transform.

By employing a pair of two correlated one-dimensional real-valued orthonormal wavelet filter banks, the dual tree complex wavelet transform ($\dtcwt$) offers directionality and translation invariance with the redundancy rate $2^d$ in $d$ dimensions for any decomposition level $J\in \N$.
See \cite{K:acha:2001,SBK,SS:bslocal} and \cite[Section~2]{HanZhao:siims:2014} as well as references therein for more details on $\dtcwt$. One-dimensional finitely supported complex-valued tight framelet filter banks have been extensively studied in \cite{Han:acha:2013,Han:acha:2014}.
A family of directional tensor product complex tight framelet filter banks ($\tpctf$) has been initially introduced in \cite{Han:mmnp:2013} and further developed in \cite{HanZhao:siims:2014} for the purpose of image denoising. The family of one-dimensional complex tight framelet filter banks introduced and used in \cite{Han:mmnp:2013,HanZhao:siims:2014} is $\ctf_m$, where $m\ge 3$ is the total number of filters in $\ctf_m$.
The low-pass filter in $\ctf_m$ is real-valued but its high-pass filters are complex-valued. If $m$ is odd, then the $d$-dimensional tensor product tight framelet filter bank $\tpctf_m$ has one real-valued low-pass filter and $(m-1)^d-1$ complex-valued  high-pass filters.
Consequently, its redundancy rate is no more than $\frac{m^d-1}{2^d-1}$ for dimension $d$ and for any decomposition level $J\in \N$.
If $m$ is even, then the $d$-dimensional tensor product tight framelet filter bank $\tpctf_m$ has one real-valued low-pass filter and $m^d-2^d$ complex-valued high-pass filters. Therefore, its redundancy rate is no more than $\frac{m^d-2^d}{2^d-1}$ for dimension $d$ and for any decomposition level $J\in \N$. For both the dual tree complex wavelet transform $\dtcwt$ and the tensor product complex tight framelets $\tpctf_m$, a complex frame coefficient is counted as two real frame coefficients in the calculation of their redundancy rates. See Section~\ref{sec:tpctf} for more detailed explanation about the redundancy rates of $\tpctf_m$.
The frequently used tensor product complex tight framelets for image denoising in \cite{HanZhao:siims:2014} are $\tpctf_4$ and $\tpctf_6$.
The $\tpctf_4$ has almost the same performance, directionality and redundancy rate
as those of $\dtcwt$. The $\tpctf_6$ has much better performance than $\tpctf_4$ and $\dtcwt$ for image denoising in \cite{HanZhao:siims:2014} and image inpainting in \cite{SHB:2014}, but it has higher redundancy rate $\frac{6^d-2^d}{2^d-1}$ for dimension $d$. See Table~\ref{tab:redundancy} for some numerical illustration on redundancy rates of $\tpctf_m$.
See \cite{Han:mmnp:2013,HanZhao:siims:2014} as well as Section~\ref{sec:tpctf} for more detailed discussion on directional tensor product complex tight framelets and their redundancy rates.

Beyond the above tensor product (i.e., separable) transforms for multidimensional problems, to achieve directionality, there are also many nonseparable approaches. We shall use the notation $d$D to stand for $d$ dimensions or $d$-dimensional. Some examples of such nonseparable transforms are 2D curvelets in \cite{FDCT}, 2D contourlets in \cite{DV}, 2D and 3D shearlets in (\cite{HanZhuang:acha:2014,KL,KSZ,LimIEEE,LimIEEE2} and references therein), 3D surfacelets in \cite{LD}, and directional tight framelets in \cite{Han:acha:1997,Han:acha:2012,HanZhuang:acha:2014}, as well as quite a few more nonseparable transforms in the literature.
The redundancy rates of such nonseparable transforms depend on the choices of the numbers of directions at each resolution level and the decomposition level $J\in \N$.
Generally speaking, to achieve reasonable performance in applications,
those nonseparable transforms often have much higher redundancy rates than those of the tensor product transforms using the dual tree complex wavelet transform and directional complex tight framelets. See Section~\ref{sec:experiments} for the redundancy rates and performance of several nonseparable transforms using directional representations.

Though empirically higher redundancy rate of a tight frame often leads to better performance in applications, the computational costs increase exponentially with respect to higher redundancy rate and dimensionality. This causes serious constraints on computational expenses and storage requirement for multidimensional problems. To our best knowledge, most of the above mentioned directional representations and tight frames can achieve reasonably good performance with computational costs being manageable by a standard PC for two-dimensional problems. However, for applications in three or higher dimensions such as video processing, the expensive computational cost
becomes a serious issue, without even mentioning the fact that one often tends to increase the redundancy rates in order
to achieve reasonably good performance for applications in three or higher dimensions.
This difficulty seriously restricts the usefulness of such tight frames and directional representations for multidimensional problems (in particular, for problems in moderately high dimensions such as video processing in dimension three).
Motivated by the approach of directional tensor product complex tight framelets in \cite{Han:mmnp:2013,HanZhao:siims:2014}, to remedy the above mentioned difficulty,
in this paper we shall construct a tight wavelet frame having the following desired properties:

\begin{enumerate}
\item[(i)] The tight frame is obtained through the tensor product of a one-dimensional tight framelet filter bank.
\item[(ii)] The tight frame has low redundancy rate and all its high-pass elements have good directionality.
\item[(iii)] The tight frame has good performance for applications such as denoising and inpainting, comparing with more complicated directional representations and tight frames with much higher redundancy rates.
\end{enumerate}

The tensor product structure in item (i) and low redundancy rate in item (ii) of such a tight frame make it computationally efficient and attractive, while low redundancy also significantly reduces the storage requirement for frame coefficients. Good directionality in item (ii) is needed in order to have good performance as required in item (iii).
In this paper we shall achieve all the above goals by modifying the construction of directional tensor product complex tight framelet filter banks $\tpctf_m$ with $m\ge 3$ in \cite{Han:mmnp:2013,HanZhao:siims:2014}. Though our approach can be easily applied to all $\tpctf_m$, for simplicity of presentation, in this paper we restrict our attention to one particular example: the directional tensor product complex tight framelet $\tpctf_6$, whose underlying one-dimensional tight framelet filter bank is $\ctf_6$.
As demonstrated in \cite{HanZhao:siims:2014} for image denoising and in \cite{SHB:2014} for image inpainting, this $\tpctf_6$ has much better performance than $\dtcwt$, $\tpctf_4$, curvelets, 2D shearlets, real-valued spline tight frames, discrete cosine transform, and many other frame-based methods. We are hoping to be able to significantly reduce the redundancy rate of $\tpctf_6$ while trying to keep almost all the desirable properties of $\tpctf_6$.
As a consequence, we denote our modified directional tensor product complex tight framelet by $\tpctf^{\reduced}_6$ and call it (redundancy) reduced $\tpctf_6$, where the superscript $\reduced$ here means that $\tpctf^{\reduced}_6$ is a reduced (or further downsampled) version of $\tpctf_6$ by decreasing its redundancy rate while trying to keep all the good properties of the original $\tpctf_6$. As we shall see in Section~\ref{sec:tpctf}, the redundancy rate of $\tpctf_6^{\reduced}$ is $\frac{3^d-1}{2^d-1}$ for dimension $d$ and for any decomposition level $J\in \N$, while as we discussed before, the redundancy rate of $\tpctf_6$ is $\frac{6^d-2^d}{2^d-1}=2^d \times \frac{3^d-1}{2^d-1}$ (that is, the redundancy rate of $\tpctf_6$ is $2^d$ times that of $\tpctf_6^{\reduced}$ in dimension $d$).
See Table~\ref{tab:redundancy} for an illustration and comparison of redundancy rates of various tight frames.

\begin{table}[ht]
\begin{center}
\begin{tabular}{|c||c|c|c|c|c|c|c|c|c|c||} \hline
$d$        &$1$ &$2$ &$3$ &$4$ &$5$ &$6$ &$7$ &$8$ &$9$ &$10$ \\ \hline
$\operatorname{UWT}$ &$4$ &$10$ &$22$ &$46$ &$94$ &$190$ &$383$ &$766$
&$1534$ &$3070$
\\ \hline
$\uft_2$ &$7$ &$25$ &$79$ &$241$ &$727$ &$2185$ &$6559$
&$19681$ &$59047$ &$177145$
\\ \hline
$\uft_4$ &$13$ &$73$ &$373$ &$1873$ &$9373$ &$46873$ &$234373$
&$1171873$ &$5859373$ &$29296873$
\\ \hline
$\dtcwt$ &$2$  &$4$  &$8$ &$16$ &$32$ &$64$ &$128$ &$256$ &$512$ &$1024$
\\ \hline
$\tpctf_3$ &$2$ &$2\mathord{\frac{2}{3}}$ &$3\mathord{\frac{5}{7}}$
&$5\mathord{\frac{1}{3}}$  &$7\mathord{\frac{25}{31}}$
&$11\mathord{\frac{5}{9}}$ &$17\mathord{\frac{27}{127}}$
&$25\mathord{\frac{37}{51}}$  &$38\mathord{\frac{264}{511}}$
&$57\mathord{\frac{67}{93}}$\\ \hline
$\tpctf_4$ &$2$  &$4$  &$8$ &$16$ &$32$ &$64$ &$128$ &$256$ &$512$ &$1024$
\\ \hline
$\tpctf_5$ &$4$ &$8$ &$17\mathord{\frac{5}{7}}$  &$41\mathord{\frac{3}{5}}$
&$100\mathord{\frac{24}{31}}$ &$248$ &$615\mathord{\frac{19}{127}}$
&$1531\mathord{\frac{73}{85}}$ &$3822\mathord{\frac{82}{511}}$
&$9546\mathord{\frac{2}{31}}$
\\ \hline
$\tpctf_6$ &$4$      &$10\mathord{\frac{2}{3}}$   &$29\mathord{\frac{5}{7}}$
&$85\mathord{\frac{1}{3}}$  &$249\mathord{\frac{25}{31}}$  &$739\mathord{\frac{5}{9}}$ &$2203\mathord{\frac{27}{127}}$ &$6585\mathord{\frac{37}{51}}$ &$19720\mathord{\frac{264}{511}}$
&$59105\mathord{\frac{67}{93}}$
\\ \hline
$\tpctf_6^{\reduced}$ &$2$ &$2\mathord{\frac{2}{3}}$ &$3\mathord{\frac{5}{7}}$
&$5\mathord{\frac{1}{3}}$  &$7\mathord{\frac{25}{31}}$
&$11\mathord{\frac{5}{9}}$ &$17\mathord{\frac{27}{127}}$
&$25\mathord{\frac{37}{51}}$  &$38\mathord{\frac{264}{511}}$
&$57\mathord{\frac{67}{93}}$\\ \hline
\end{tabular}
\medskip
\begin{caption}{Comparison of redundancy rates of various tight frames for different dimensions $d$.
$\operatorname{UWT}$ is the undecimated wavelet transform with decomposition level $J=3$ and using the tensor product of a 1D real-valued orthonormal wavelet filter bank $\{a;b\}$.
$\uft_s$ is the undecimated framelet transform with decomposition level $J=3$ and using the tensor product of a 1D real-valued tight framelet filter bank $\{a;b_1,\ldots,b_s\}$ (Hence, $\operatorname{UWT}$ is just $\uft_1$).
$\dtcwt$ is the dual tree complex wavelet transform. $\tpctf_m$ is the tensor product complex tight framelet with $m=3,4,5,6$. $\tpctf_6^{\reduced}$ is our proposed tensor product complex tight framelet with low redundancy. It is interesting to point out here that $\tpctf_6^{\reduced}$ has the same low redundancy rate as $\tpctf_3$, but $\tpctf_6^{\reduced}$ enjoys the same directionality as $\tpctf_6$.}
\end{caption}\label{tab:redundancy}
\end{center}
\end{table}

The structure of the paper is as follows. In order to study tensor product complex tight framelets with low redundancy,
in Section~\ref{sec:tffb:ms} we shall generalize the notion of dyadic tight framelet filter banks by introducing tight framelet filter banks with mixed sampling factors. Then we shall study their various properties and fast framelet transforms of such tight framelet filter banks with mixed sampling factors in Section~\ref{sec:tffb:ms}.
In Section~\ref{sec:tpctf}, we shall recall the tensor product complex tight framelet filter banks $\tpctf_m$ and their underlying one-dimensional complex tight framelet filter banks $\ctf_m$ with $m\ge 3$ from \cite{Han:mmnp:2013,HanZhao:siims:2014}.
Then we shall discuss the redundancy rates of $\tpctf_m$.
Next we shall provide details on our construction of directional tensor product complex tight framelet $\tpctf^{\reduced}_6$ with low redundancy. Such $\tpctf_6^{\reduced}$ is a particular example of tight framelet filter banks with mixed sampling factors in Section~\ref{sec:tffb:ms}.
Though our approach can be easily applied to all $\tpctf_m$ with $m\ge 3$, for simplicity of presentation, we only deal with $\tpctf_6$ in Section~\ref{sec:tpctf}.
In Section~\ref{sec:experiments}, we shall test the performance of our proposed directional complex tight framelet $\tpctf^{\reduced}_6$ with low redundancy rate and compare its performance with several state-of-the-art frame-based methods. Our numerical experiments on image/video denoising and inpainting show that the performance using our tensor product directional complex tight framelet $\tpctf^{\reduced}_6$ with low redundancy is often comparable or sometimes better than several state-of-the-art frame-based methods which often have much higher redundancy rates.
Moreover, our numerical experiments show that $\tpctf_6^{\reduced}$ is particularly effective for images and videos having rich textures.

\section{Tight framelet filter banks with mixed sampling factors}
\label{sec:tffb:ms}

In this section we shall introduce tight framelet filter banks with mixed sampling factors and then study their properties.
As we shall see later in Section~\ref{sec:tpctf}, our proposed directional tensor product complex tight framelet $\tpctf_6^{\reduced}$ with low redundancy is a particular case of tight framelet filter banks with mixed sampling factors.

\subsection{Fast framelet transform using tight framelet filter banks with mixed sampling factors}

Our key idea to derive a directional tight framelet with low redundancy from the tensor product complex tight framelet filter banks $\tpctf_m$ in \cite{Han:mmnp:2013,HanZhao:siims:2014} is to use higher sampling factors such as $4I_d$ instead of $2 I_d$. To this end, let us generalize the definition of a ($d$-dimensional dyadic) tight framelet filter bank $\{a; b_1, \ldots, b_s\}$, which uses the uniform sampling matrix $2I_d$, where $I_d$ is the $d\times d$ identity matrix.

Let $\dm$ be a $d\times d$ invertible integer matrix. For a sequence $u=\{u(k)\}_{k\in \dZ}: \dZ \rightarrow \C$, the downsampling sequence $u\ds \dm$ and the upsampling sequence $u\us \dm$ with the sampling matrix $\dm$ are defined by
\[
[u\ds \dm](k):=u(\dm k), \quad k\in \dZ
\quad \mbox{and}\quad
[u\us \dm](k):=\begin{cases} u(\dm^{-1} k), &\text{if $k\in \dm\dZ$},\\
0, &\text{if $k\in \dZ \bs [\dm \dZ]$}.\end{cases}
\]
We call $\dm$ the \emph{sampling factor or matrix}.
To explicitly specify the sampling matrix $\dm$ associated with a filter $u$, we shall adopt the notation $u\sampled \dm$. Under the new notation, a ($d$-dimensional dyadic) tight framelet filter bank $\{a; b_1,\ldots, b_s\}$ will be denoted more precisely as
$\{a\sampled 2I_d; b_1 \sampled 2I_d,\ldots, b_s\sampled 2I_d\}$, since the sampling matrix is uniformly $2I_d$.

For $1\le p<\infty$, $\dlp{p}$ consists of all the sequences $v: \dZ \rightarrow \C$ satisfying $\|v\|_{\dlp{p}}^p:=\sum_{k\in \dZ} |v(k)|^p<\infty$. Similarly, $v\in \dlp{\infty}$ if $\|v\|_{\dlp{\infty}}:=\sup_{k\in \dZ} |v(k)|<\infty$.

A discrete framelet transform can be described using the subdivision operator and the transition operator. For a filter $\tu \in \dlp{1}$ and a $d\times d$ integer matrix $\dm$, the \emph{subdivision operator} $\sd_{\tu, \dm}: \dlp{\infty}\rightarrow \dlp{\infty}$ and the \emph{transition operator} $\tz_{\tu,\dm}: \dlp{\infty}\rightarrow \dlp{\infty}$ are defined to be
\begin{align*}
[\sd_{u,\dm} v](n)&:=|\det(\dm)| \sum_{k\in \dZ} v(k) u(n-\dm k), \qquad n\in \dZ,\\
[\tz_{u,\dm} v](n)&:=|\det(\dm)|\sum_{k\in \dZ} v(k) \ol{\tu(k-\dm n)}, \qquad n\in \dZ,
\end{align*}
for $\tv\in \dlp{\infty}$. Define $\Omega_{\dm}:=[\dm^{-\tp}\dZ]\cap [0,1)^d$.
In terms of Fourier series, we have
\begin{equation}\label{def:sd:tz:F}
\wh{\sd_{\tu,\dm} \tv}(\xi)=|\det(\dm)| \hv(\dm^\tp \xi) \hu(\xi),\qquad
\wh{\tz_{\tu, \dm}\tv}(\xi)=\sum_{\omega\in \dmfc} \hv(\dm^{-\tp}\xi+2\pi \omega)
\ol{\hu(\dm^{-\tp}\xi+2\pi \omega)}.
\end{equation}
Define the flip-conjugate sequence $u^\star$ of $u$ by $u^\star(k):=\ol{u(-k)}, k\in \dZ$, that is, $\wh{u^\star}(\xi)=\ol{\wh{u}(\xi)}$. Then $\sd_{u,\dm} v=|\det (\dm)| (v\us \dm)*u$ and
$\tz_{u,\dm} v=|\det(\dm)| (v*u^\star)\ds \dm$, where $v*u:=\sum_{k\in \dZ} v(k)u(\cdot-k)$ is the convolution of $v$ and $u$.

Let $a, b_1, \ldots, b_s\in \dlp{1}$ and let $\dm, \dm_1, \ldots, \dm_s$ be $d\times d$ invertible integer matrices.
For $J\in \N$, we now describe a $J$-level ($d$-dimensional) discrete/fast framelet transform
employing a filter bank $\{a\sampled \dm; b_1 \sampled \dm_1, \ldots, b_s\sampled \dm_s\}$.
For a given data $v_0\in \dlp{\infty}$, the $J$-level discrete framelet decomposition (or forward transform) employing the filter bank $\{a\sampled \dm; b_1 \sampled \dm_1, \ldots, b_s\sampled \dm_s\}$ is
\begin{equation}\label{def:wa}
v_j:=|\det(\dm)|^{-1/2} \tz_{a,\dm} v_{j-1}
\quad \mbox{and}\quad w_{\ell,j}:=|\det(\dm_\ell)|^{-1/2} \tz_{ b_\ell, \dm_\ell} v_{j-1}, \qquad \ell=1, \ldots, s,\, j=1,\ldots,J,
\end{equation}
where $v_j$ are called sequences of low-pass coefficients and
all $w_{\ell,j}$ are called sequences of high-pass coefficients of the input signal $v_0$.
A $J$-level discrete framelet reconstruction (or backward transform) employing the filter bank
$\{a\sampled \dm; b_1 \sampled \dm_1, \ldots, b_s\sampled \dm_s\}$
can be described by
\be \label{def:ws}
\mathring{v}_{j-1}:=|\det(\dm)|^{-1/2}\sd_{a,\dm} \mathring{v}_j+ \sum_{\ell=1}^s |\det(\dm_\ell)|^{-1/2} \sd_{b_\ell,\dm_\ell} \mathring{w}_{\ell,j},\qquad j=J,\ldots,1,
\ee
where $\mathring{v}_0$ is a reconstructed sequence on $\dZ$. The \emph{perfect reconstruction property} requires that the reconstructed sequence $\mathring{v}_0$ should be exactly the same as the original input data $v_0$ if $\mathring{v}_J=v_J$ and $\mathring{w}_{\ell,j}=w_{\ell,j}$ for $j=1,\ldots, J$ and $\ell=1,\ldots, s$.
See Figure~\ref{fig:ffrt} for an illustration of a two-level fast framelet transform using a one-dimensional tight framelet filter bank $\{a \sampled 2; b_1 \sampled 4,\ldots, b_s\sampled 4\}$.

Using \cite[Theorem~2.1]{Han:mmnp:2013},
we have the following result on the perfect reconstruction property of a filter bank $\{a\sampled \dm; b_1 \sampled \dm_1, \ldots, b_s\sampled \dm_s\}$.

\begin{theorem}\label{thm:dft:pr}
Let $a, b_1, \ldots, b_s\in \dlp{1}$ and let $\dm, \dm_1, \ldots, \dm_s$ be $d\times d$ invertible integer matrices.
Then the following statements are equivalent to each other:
\begin{enumerate}
\item[(i)] For every $J\in \N$, the $J$-level fast framelet transform employing the filter bank $\{a\sampled \dm; b_1 \sampled \dm_1, \ldots, b_s\sampled \dm_s\}$ has perfect reconstruction property.
\item[(ii)] The one-level discrete framelet transform employing the filter bank $\{a\sampled \dm; b_1 \sampled \dm_1, \ldots, b_s\sampled \dm_s\}$ has \emph{perfect reconstruction property}, that is, for all $\tv\in \dlp{\infty}$,
\begin{equation}\label{pr}
v=
|\det(\dm)|^{-1}\sd_{a,\dm}\tz_{a,\dm} v+ |\det(\dm_1)|^{-1} \sd_{b_1,\dm_1}\tz_{b_1,\dm_1} v+\cdots+|\det(\dm_s)|^{-1} \sd_{b_s,\dm_s} \tz_{b_s,\dm_s} v.
\end{equation}
\item[(iii)] The filter bank $\{a\sampled \dm; b_1 \sampled \dm_1, \ldots, b_s\sampled \dm_s\}$ is \emph{a tight framelet filter bank with mixed sampling factors}, that is, the following perfect reconstruction conditions hold:
\be \label{fb:pr:1}
|\wh{a}(\xi)|^2+|\wh{b_1}(\xi)|^2+ \cdots+|\wh{b_s}(\xi)|^2=1, \qquad a.e.\, \xi\in \dR
\ee
and
\be \label{fb:pr:0}
\chi_{\dm^{-\tp}\dZ}(\omega)\wh{a}(\xi) \ol{\wh{a}(\xi+2\pi\omega)}+
\sum_{\ell=1}^s \chi_{\dm_\ell^{-\tp}\dZ}(\omega) \wh{b_\ell}(\xi) \ol{\wh{b_\ell}(\xi+2\pi\omega)}=0,
\ee
for almost every $\xi\in \dR$ and for all $\omega \in [\Omega_{\dm}\cup \cup_{\ell=1}^s \Omega_{\dm_\ell}]\backslash\{0\}$, where $\Omega_{\dm_\ell}:=(\dm_\ell^{-\tp} \dZ) \cap [0,1)^d$ and
$\chi_{\dm_\ell^{-\tp}\dZ}(\omega)=1$ if $\omega\in \dm_\ell^{-\tp}\dZ$ and $\chi_{\dm_\ell^{-\tp}\dZ}(\omega)=0$ if $\omega\not \in \dm_\ell^{-\tp}\dZ$.
\end{enumerate}
\end{theorem}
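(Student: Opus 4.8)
The plan is to prove the implications (i) $\Rightarrow$ (ii), (ii) $\Leftrightarrow$ (iii), and (ii) $\Rightarrow$ (i). The first is immediate, since (ii) is exactly the case $J=1$ of (i). The equivalence (ii) $\Leftrightarrow$ (iii) is a single Fourier-domain computation read in both directions, while (ii) $\Rightarrow$ (i) follows from the cascade structure of the $J$-level transform by induction on $J$.

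For (ii) $\Leftrightarrow$ (iii): I would start from the identities $\tz_{u,\dm}v=|\det(\dm)|(v*u^\star)\ds\dm$ and $\sd_{u,\dm}v=|\det(\dm)|(v\us\dm)*u$ together with the Fourier formulas in \eqref{def:sd:tz:F}, and compute, for an arbitrary filter $u$ and an arbitrary $d\times d$ invertible integer matrix $\dm$, that $|\det(\dm)|^{-1}\wh{\sd_{u,\dm}\tz_{u,\dm}v}(\xi)=\wh{u}(\xi)\sum_{\omega\in\Omega_{\dm}}\wh{v}(\xi+2\pi\omega)\,\ol{\wh{u}(\xi+2\pi\omega)}$. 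Applying this to the low-pass channel $a\sampled\dm$ and to each high-pass channel $b_\ell\sampled\dm_\ell$, summing, and using that for $\omega\in[0,1)^d$ one has $\omega\in\Omega_{\dm_\ell}$ exactly when $\omega\in\dm_\ell^{-\tp}\dZ$, the one-level perfect reconstruction identity \eqref{pr} takes the form $\wh{v}(\xi)=\sum_{\omega\in\Omega}c_{\omega}(\xi)\,\wh{v}(\xi+2\pi\omega)$, where $\Omega:=\Omega_{\dm}\cup\bigcup_{\ell=1}^{s}\Omega_{\dm_\ell}$ and $c_{\omega}(\xi):=\chi_{\dm^{-\tp}\dZ}(\omega)\wh{a}(\xi)\ol{\wh{a}(\xi+2\pi\omega)}+\sum_{\ell=1}^{s}\chi_{\dm_\ell^{-\tp}\dZ}(\omega)\wh{b_\ell}(\xi)\ol{\wh{b_\ell}(\xi+2\pi\omega)}$. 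Since $\Omega$ is a finite subset of $[0,1)^d$ containing $0$, the frequencies $\{2\pi\omega:\omega\in\Omega\}$ are pairwise distinct modulo $2\pi\dZ$, so the contributions of distinct shifts can be separated: because \eqref{pr} must hold for \emph{every} $v\in\dlp{\infty}$, testing against sequences whose symbol $\wh{v}$ is the $2\pi\dZ$-periodization of the indicator of a sufficiently small box (on which the translates $\xi\mapsto\xi+2\pi\omega$, $\omega\in\Omega$, have pairwise disjoint images) forces $c_{0}\equiv1$ and $c_{\omega}\equiv0$ a.e.\ for $\omega\in\Omega\setminus\{0\}$. Since $\chi_{\dm^{-\tp}\dZ}(0)=\chi_{\dm_\ell^{-\tp}\dZ}(0)=1$, the identity $c_{0}\equiv1$ is precisely \eqref{fb:pr:1} and $c_{\omega}\equiv0$ for $\omega\neq0$ is precisely \eqref{fb:pr:0}; conversely, if \eqref{fb:pr:1} and \eqref{fb:pr:0} hold, the displayed identity collapses to $\wh{v}=\wh{v}$ and \eqref{pr} follows. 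This test-sequence step is exactly of the type carried out in \cite[Theorem~2.1]{Han:mmnp:2013}, which I would invoke (after splitting each channel into its coset components so as to work with a common sampling matrix) rather than reprove.

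For (ii) $\Rightarrow$ (i): I would induct on $J$, the base case $J=1$ being precisely (ii). For the inductive step, one reads off from \eqref{def:wa} and \eqref{def:ws} that the $J$-level decomposition is one application of the one-level decomposition to $v_{0}$ (producing $v_{1}$ and $w_{1,1},\ldots,w_{s,1}$) followed by the $(J-1)$-level decomposition applied to $v_{1}$, and that the $J$-level reconstruction is the exact mirror of this. By the induction hypothesis the $(J-1)$-level transform reconstructs $v_{1}$ exactly; feeding $v_{1}$ and $w_{1,1},\ldots,w_{s,1}$ into the one-level reconstruction and using \eqref{pr} (the base case) then returns $\mathring{v}_{0}=v_{0}$, which is (i).

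The step I expect to be the main obstacle is the bookkeeping in (ii) $\Rightarrow$ (iii): because the channels carry different sampling matrices $\dm_\ell$, the relevant set of aliasing frequencies is the union $\Omega$ rather than the fundamental domain of a single lattice, and a channel sampled by $\dm_\ell$ contributes only the shifts $2\pi\omega$ with $\omega\in\dm_\ell^{-\tp}\dZ$. Handling this correctly is exactly what produces the indicator functions $\chi_{\dm^{-\tp}\dZ}$ and $\chi_{\dm_\ell^{-\tp}\dZ}$ in \eqref{fb:pr:0}, and it is also where care is needed in the reduction to \cite[Theorem~2.1]{Han:mmnp:2013}, since the coset-splitting of a channel changes which filter symbols enter the perfect reconstruction conditions.
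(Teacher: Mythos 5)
Your proposal is correct and follows essentially the same route as the paper: the paper likewise computes the symbol of each $\sd_{b_\ell,\dm_\ell}\tz_{b_\ell,\dm_\ell}v$, rewrites \eqref{pr} as $\wh{v}(\xi)=\sum_{\omega\in\Omega_{\dm}\cup\cup_{\ell=1}^s\Omega_{\dm_\ell}}c_\omega(\xi)\wh{v}(\xi+2\pi\omega)$ with exactly your coefficients $c_\omega$, and then separates the distinct aliasing frequencies by appealing to the argument of \cite[Theorem~2.1]{Han:mmnp:2013}, treating (i)$\iff$(ii) as obvious. Your test-sequence separation and the induction for (ii)$\Rightarrow$(i) merely spell out details the paper leaves implicit.
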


\begin{proof} The equivalence between item (i) and item (ii) is obvious.
By \eqref{def:sd:tz:F}, we see that the Fourier series of the sequence $\sd_{b_\ell,\dm_\ell} \tz_{b_\ell,\dm_\ell} v$ is
\[
|\det(\dm_\ell)|\sum_{\omega_\ell\in \Omega_{\dm_\ell}}\wh{v}(\xi+2\pi \omega_\ell)\wh{b_\ell}(\xi)\ol{\wh{b_\ell}(\xi+2\pi \omega_\ell)}.
\]
Consequently, we see that \eqref{pr} holds if and only if
\begin{align*}
\wh{v}(\xi)&=\sum_{\omega_0\in \Omega_{\dm}}\wh{v}(\xi+2\pi \omega_0)\wh{a}(\xi)\ol{\wh{a}(\xi+2\pi \omega_0)}+\sum_{\ell=1}^s
\sum_{\omega_\ell\in \Omega_{\dm_\ell}}\wh{v}(\xi+2\pi \omega_\ell)\wh{b_\ell}(\xi)\ol{\wh{b_\ell}(\xi+2\pi \omega_\ell)}\\
&=\sum_{\omega\in \Omega_{\dm}\cup \cup_{\ell=1}^s \Omega_{\dm_\ell}}
\wh{v}(\xi+2\pi \omega)\left( \chi_{\dm^{-\tp}\dZ}(\omega)\wh{a}(\xi) \ol{\wh{a}(\xi+2\pi\omega)}+
\sum_{\ell=1}^s \chi_{\dm_\ell^{-\tp}\dZ}(\omega) \wh{b_\ell}(\xi) \ol{\wh{b_\ell}(\xi+2\pi\omega)}\right).
\end{align*}
Now using the above identity and employing a similar argument as in the proof of \cite[Theorem~2.1]{Han:mmnp:2013}, we can deduce that item (ii) is equivalent to item (iii).
\end{proof}

\subsection{Discrete affine systems of tight framelet filter banks with mixed sampling factors}

To understand the performance and properties of the $J$-level fast framelet transform using a tight framelet filter bank $\{a \sampled \dm; b_1 \sampled \dm_1,\ldots, b_s\sampled \dm_s\}$, as pointed out in \cite{Han:mmnp:2013}, it is very important to look at the $J$-level discrete affine systems associated with $\{a \sampled \dm; b_1 \sampled \dm_1,\ldots, b_s\sampled \dm_s\}$.

We now generalize the discrete affine systems in \cite[Section~4.3]{Han:mmnp:2013} to a $d$-dimensional tight framelet filter bank $\{a \sampled \dm; b_1 \sampled \dm_1,\ldots, b_s\sampled \dm_s\}$ with mixed sampling factors.
Let $a,b_1,\ldots,b_s\in \dlp{1}$. Note that $\dlp{1}\subseteq \dlp{2}$ and
$\dlp{2}$ is a Hilbert space equipped with the inner product $\la u,v\ra:=\sum_{k\in\dZ} u(k)\ol{v(k)}$ for $u,v\in \dlp{2}$.
Following \cite{Han:mmnp:2013}, we define the multilevel filters $a_j$ and $b_{\ell,j}$ with $j\in \N$ and $\ell=1,\ldots,s$ by
\be \label{aj}
\wh{a_j}(\xi):= \wh{a}(\xi)\wh{a}(\dm^\tp\xi)\cdots\wh{a}((\dm^\tp)^{j-2}\xi)\wh{a}((\dm^\tp)^{j-1}\xi)\\
\ee
and
\be \label{blj}
\wh{b_{\ell,j}}(\xi):=\wh{a_{j-1}}(\xi)\wh{b_\ell}((\dm^\tp)^{j-1}\xi)
=\wh{a}(\xi)\wh{a}(\dm^\tp\xi)\cdots\wh{a}((\dm^\tp)^{j-2}\xi)\wh{b_\ell}((\dm^\tp)^{j-1}\xi).
\ee
In particular, $a_1=a$ and $b_{\ell,1}=b_\ell$.
We shall also use the convention $a_0=\td$, where $\td$ is the Dirac/Kronecker sequence on $\dZ$ given by
\[
\td(0)=1 \quad \mbox{and}\quad \td(k)=0,\qquad \forall\, k\in \dZ \bs \{0\}.
\]
Since $a,b_1,\ldots,b_s\in \dlp{1}$, it is straightforward to see that all $a_j, b_{\ell,j}$ are well-defined filters in  $\dlp{1}\subseteq \dlp{2}$. For $j\in \N$ and $k\in \dZ$, we define
\be \label{abjk}
a_{j;k}:=|\det(\dm)|^{j/2} a_j(\cdot-\dm^j k), \qquad
b_{\ell,j;k}:=|\det(\dm)|^{(j-1)/2}|\det(\dm_\ell)|^{1/2} b_{\ell,j}(\cdot-\dm^{j-1}\dm_\ell k).
\ee
The \emph{$J$-level discrete affine system} associated with the filter bank $\{a\sampled \dm; b_1 \sampled \dm_1, \ldots, b_s \sampled \dm_s\}$ is defined by
\be \label{das}
\das_J(\{a\sampled \dm; b_1 \sampled \dm_1, \ldots, b_s \sampled \dm_s\}):=
\{ a_{J;k} \setsp k\in \dZ\} \cup \{ b_{\ell,j;k} \setsp k\in \dZ, \ell=1, \ldots, s, j=1,\ldots, J\}.
\ee
By a similar argument as in \cite[Section~4.3]{Han:mmnp:2013} (also see Theorem~\ref{thm:das} below), under the framework of the Hilbert space $l_2(\dZ)$,
we see that the $J$-level fast framelet transform using the tight framelet filter bank $\{a \sampled \dm; b_1 \sampled \dm_1,\ldots, b_s\sampled \dm_s\}$ is exactly to compute the following representation:
\be \label{das:ffrt}
v=\sum_{u\in \das_J(\{a\sampled \dm; b_1 \sampled \dm_1, \ldots, b_s \sampled \dm_s\})} \la v, u\ra u=
\sum_{k\in \dZ} \la v, a_{J;k}\ra a_{J;k}+\sum_{j=1}^J \sum_{\ell=1}^s \sum_{k\in \dZ} \la v, b_{\ell,j;k}\ra b_{\ell,j; k}, \qquad \forall\, v\in l_2(\dZ),
\ee
where the series converges unconditionally in $l_2(\dZ)$. More precisely, as we shall see later, $v_J(k)=\la v_0, a_{J;k}\ra$ and $w_{\ell,j}(k)=\la v_0,b_{\ell,j;k}\ra$ for all $j=1,\ldots,J$ and $k\in \dZ$.

Following the general theory developed in \cite{Han:mmnp:2013}, we have the following result.

\begin{theorem}\label{thm:das}
Let $a,b_1,\ldots,b_s\in \dlp{1}$ and $\dm, \dm_1, \ldots, \dm_s$ be $d\times d$ invertible integer matrices. For $J\in \N$, define $\das_J(\{a \sampled \dm; b_1 \sampled \dm_1,\ldots, b_s\sampled \dm_s\})$ as in \eqref{das} with $a_j$ and $b_{\ell,j}$ being given in \eqref{aj} and \eqref{blj}, respectively. Then the following statements are equivalent:
\begin{enumerate}
\item[(1)] $\{a \sampled \dm; b_1 \sampled \dm_1,\ldots, b_s\sampled \dm_s\}$ is a tight framelet filter bank with mixed sampling factors.
\item[(2)] The following identity holds:
\be \label{das:level1}
v=\sum_{k\in \dZ} \la v, a_{1;k}\ra a_{1;k}+\sum_{\ell=1}^s \sum_{k\in \dZ} \la v, b_{\ell,1;k}\ra b_{\ell,1;k},\qquad \forall\, v\in \dlp{2}.
\ee
\item[(3)] $\das_1(\{a \sampled \dm; b_1 \sampled \dm_1,\ldots, b_s\sampled \dm_s\})$ is a (normalized) tight frame for $\dlp{2}$, that is,
\be \label{das:frame:1}
\|v\|_{\dlp{2}}^2=\sum_{k\in \dZ} |\la v, a_{1;k}\ra|^2+\sum_{\ell=1}^s \sum_{k\in \dZ} |\la v, b_{\ell,1;k}\ra|^2, \qquad \forall\; v\in \dlp{2}.
\ee
\item[(4)] For every $j\in \N$, the following identity holds:
\be \label{cascade}
\sum_{k\in \dZ} \la v, a_{j-1;k}\ra a_{j-1;k}=\sum_{k\in \dZ} \la v, a_{j;k}\ra a_{j;k}+\sum_{\ell=1}^s \sum_{k\in \dZ} \la v, b_{\ell,j;k}\ra b_{\ell,j;k},\qquad \forall\, v\in l_2(\dZ),
\ee
where by convention $a_{0}:=\td$ and $a_{0;k}:=\td(\cdot-k)$ for $k\in \dZ$.

\item[(5)] For every $J\in \N$, the identity in \eqref{das:ffrt} holds.
\item[(6)] For every $J\in \N$,  $\das_J(\{a \sampled \dm; b_1 \sampled \dm_1,\ldots, b_s\sampled \dm_s\})$ is a (normalized) tight frame for $\dlp{2}$, that is,
\be \label{das:frame:J}
\|v\|_{\dlp{2}}^2=\sum_{k\in \dZ} |\la v, a_{J;k}\ra|^2+\sum_{j=1}^J\sum_{\ell=1}^s \sum_{k\in \dZ} |\la v, b_{\ell,j;k}\ra|^2, \qquad \forall\; v\in \dlp{2}.
\ee
\end{enumerate}
\end{theorem}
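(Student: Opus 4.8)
The plan is to exploit the fact that Theorem~\ref{thm:das} is a generalization of the corresponding result in \cite{Han:mmnp:2013} from the dyadic case $\dm=\dm_1=\cdots=\dm_s=2I_d$ to the mixed-sampling-factor setting, and to argue that the proof goes through essentially verbatim, with the only changes being notational (keeping track of the individual matrices $\dm_\ell$ in the high-pass branches). The natural order of implications is a cycle through (1)--(6), with (1) as the common anchor. First I would establish (1)$\Leftrightarrow$(2): by Theorem~\ref{thm:dft:pr}, statement (1) is equivalent to the one-level perfect reconstruction identity \eqref{pr}, and one checks that the operator identity $|\det(\dm)|^{-1}\sd_{a,\dm}\tz_{a,\dm}v=\sum_{k\in\dZ}\la v,a_{1;k}\ra a_{1;k}$ holds, together with the analogous identities $|\det(\dm_\ell)|^{-1}\sd_{b_\ell,\dm_\ell}\tz_{b_\ell,\dm_\ell}v=\sum_{k\in\dZ}\la v,b_{\ell,1;k}\ra b_{\ell,1;k}$. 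These follow by direct computation from the definitions \eqref{abjk} of $a_{1;k}=|\det(\dm)|^{1/2}a(\cdot-\dm k)$ and $b_{\ell,1;k}=|\det(\dm_\ell)|^{1/2}b_\ell(\cdot-\dm_\ell k)$ once one observes that $[\tz_{u,\dm}v](k)=|\det(\dm)|^{1/2}\la v, |\det(\dm)|^{1/2}u(\cdot-\dm k)\ra$ and $\sd_{u,\dm}$ is (up to the determinant normalization) the adjoint of $\tz_{u,\dm}$; this last adjoint relation is exactly the content needed and is routine from \eqref{def:sd:tz:F}. Then \eqref{pr} becomes \eqref{das:level1}, giving (1)$\Leftrightarrow$(2).

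Next I would close the loop at level one: (2)$\Leftrightarrow$(3) is the standard polarization fact that a Parseval/tight-frame expansion $v=\sum_u\la v,u\ra u$ for all $v$ in a Hilbert space is equivalent to the norm identity $\|v\|^2=\sum_u|\la v,u\ra|^2$ (one direction is immediate by taking $\la\cdot,v\ra$; the converse is polarization). For the multilevel statements, the key structural step is the one-level cascade identity \eqref{cascade}: applying the level-one identity \eqref{das:level1}, but with the input $v$ replaced by the appropriately rescaled low-pass coefficient sequence at level $j-1$, and using the refinement relations $\wh{a_j}(\xi)=\wh{a_{j-1}}(\xi)\wh{a}((\dm^\tp)^{j-1}\xi)$ and $\wh{b_{\ell,j}}(\xi)=\wh{a_{j-1}}(\xi)\wh{b_\ell}((\dm^\tp)^{j-1}\xi)$ from \eqref{aj}--\eqref{blj}. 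Concretely, one shows $\la v,a_{j;k}\ra$ and $\la v,b_{\ell,j;k}\ra$ are obtained from the sequence $c^{(j-1)}:=\{\la v,a_{j-1;k}\ra\}_{k\in\dZ}$ by exactly one application of the transition operators $\tz_{a,\dm}$ and $\tz_{b_\ell,\dm_\ell}$ (with the correct normalization), so that \eqref{das:level1} applied to $c^{(j-1)}$ yields \eqref{cascade}. Iterating \eqref{cascade} from $j=1$ up to $j=J$ and using $a_0=\td$, $a_{0;k}=\td(\cdot-k)$ (so that $\sum_k\la v,a_{0;k}\ra a_{0;k}=v$) telescopes to \eqref{das:ffrt}, giving (2)$\Rightarrow$(4)$\Rightarrow$(5); and (5)$\Leftrightarrow$(6) is again polarization applied to the $J$-level system, while (6)$\Rightarrow$(3) is the trivial specialization $J=1$ and (3)$\Rightarrow$(1) comes from (3)$\Rightarrow$(2)$\Rightarrow$(1). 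Finally I would record that $v_J(k)=\la v_0,a_{J;k}\ra$ and $w_{\ell,j}(k)=\la v_0,b_{\ell,j;k}\ra$, which is precisely the iterated-transition-operator computation just described and is what links \eqref{das:ffrt} to the fast transform \eqref{def:wa}.

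The main obstacle is bookkeeping rather than mathematics: in the mixed-sampling-factor setting the high-pass atom $b_{\ell,j;k}$ is shifted by $\dm^{j-1}\dm_\ell k$ and carries the normalization $|\det(\dm)|^{(j-1)/2}|\det(\dm_\ell)|^{1/2}$, so one must verify carefully that the adjoint/normalization constants in $\sd_{b_\ell,\dm_\ell}\tz_{b_\ell,\dm_\ell}$ match those in the frame atoms, and that the cascade step correctly peels off one factor of $\dm^\tp$ in frequency while the $\dm_\ell$ only ever appears in the final (non-iterated) high-pass branch. Once the level-one case is pinned down this is mechanical, and the multilevel equivalences follow by the same induction as in \cite[Section~4.3]{Han:mmnp:2013}; so the proof reduces to invoking Theorem~\ref{thm:dft:pr} for (1)$\Leftrightarrow$(2), elementary polarization for the norm-form equivalences, and induction on $J$ via \eqref{cascade} for the rest.
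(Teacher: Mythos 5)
Your proposal is correct and follows essentially the same route as the paper: the equivalence (1)$\iff$(2) via the one-level perfect reconstruction identity of Theorem~\ref{thm:dft:pr}, polarization for the norm-form statements, the key cascade identity obtained from $\la v,b_{\ell,j;k}\ra=\la \la v,a_{j-1;\cdot}\ra, b_{\ell,1;k}\ra$ (i.e., one application of the transition operators to the level-$(j-1)$ low-pass coefficients), and telescoping with $a_0=\td$. The only cosmetic differences are that the paper verifies (1)$\iff$(2) by plugging $v=\td(\cdot-n)$ into \eqref{das:level1} rather than through the operator identity for $\sd\tz$, and closes the logical cycle via (5)$\imply$(4)$\imply$(2) instead of your (6)$\imply$(3); both are equally valid.
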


\begin{proof} These claims have been established in \cite{Han:mmnp:2013} for the case $\dm_1=\cdots=\dm_s=\dm$. Using the same idea as in \cite{Han:mmnp:2013},
here we only present a sketch of a proof. Plugging $v=\td(\cdot-n)$ with all $n\in \dZ$ into \eqref{das:level1}, we observe by calculation that the resulting equations in \eqref{das:level1} with $v=\td(\cdot-n)$
are simply the spatial domain version of the conditions in \eqref{fb:pr:1} and \eqref{fb:pr:0} in the frequency domain. Hence, (1)$\iff$(2).

(2)$\imply$(3) is trivial. (3)$\imply$(2) is a direct application of the polarization identity to \eqref{das:frame:1}. Hence, (2)$\iff$(3).

(4)$\imply$(2) is obvious since it follows from the convention $a_0=\td$ that $\sum_{k\in \dZ}\la v, a_{0;k}\ra a_{0;k}=\sum_{k\in \dZ} v(k) \td(\cdot-k)=v$.
We now prove (2)$\imply$(4). By the definition of $b_{\ell,j}$ in \eqref{blj} and $b_{\ell,1}=b_\ell$,
\begin{align*}
b_{\ell,j}&=a_{j-1}*(b_\ell \us \dm^{j-1})=a_{j-1}*(b_{\ell,1} \us \dm^{j-1})\\
&= \sum_{n\in \dZ}a_{j-1}(\cdot-n)(b_{\ell,1} \us \dm^{j-1})(n)
=\sum_{m\in \dZ} a_{j-1}(\cdot-\dm^{j-1} m) b_{\ell,1}(m).
\end{align*}
Therefore, by the definition of $b_{\ell,j;k}$ in \eqref{abjk},
\begin{align*}
b_{\ell,j;k}&=
|\det(\dm)|^{(j-1)/2} |\det(\dm_\ell)|^{1/2} b_{\ell,j}(\cdot-\dm^{j-1}\dm_\ell k)\\
&=|\det(\dm)|^{(j-1)/2} |\det(\dm_\ell)|^{1/2} \sum_{m\in \dZ} a_{j-1}(\cdot-\dm^{j-1} \dm_\ell k-\dm^{j-1} m)b_{\ell,1}(m)\\
&=|\det(\dm)|^{(j-1)/2} |\det(\dm_\ell)|^{1/2} \sum_{m\in \dZ} a_{j-1}(\cdot-\dm^{j-1} m)b_{\ell,1}(m-\dm_\ell k)\\
&=\sum_{m\in \dZ} a_{j-1;m} b_{\ell,1;k}(m).
\end{align*}
Consequently, we proved
\be \label{vbljk}
\la v, b_{\ell,j;k}\ra
=\sum_{m\in \dZ} \la v, a_{j-1;m}\ra \ol{b_{\ell,1;k}(m)}=\la \la v, a_{j-1;\cdot}\ra, b_{\ell,1;k}(\cdot)\ra.
\ee
We now deduce from the above two identities that
\[
\sum_{k\in \dZ}\la v, b_{\ell,j;k}\ra b_{\ell,j;k}
=\sum_{m\in \dZ} a_{j-1;m} \left(\sum_{k\in \dZ} \la \la v, a_{j-1;\cdot}\ra, b_{\ell,1;k} \ra b_{\ell,1;k}(m)\right).
\]
The same argument can be applied to $a_{j;k}$ and the above identity still holds by replacing $b_{\ell,j;k}$ and $b_{\ell,1;k}$ with $a_{j;k}$ and $a_{1;k}$, respectively.
Therefore,
\begin{align*}
&\sum_{k\in \dZ} \la v, a_{j;k}\ra a_{j; k}+\sum_{\ell=1}^s \sum_{k\in \dZ} \la v, b_{\ell,j;k}\ra b_{\ell,j;k}\\
&=\sum_{m\in \dZ} a_{j-1;m} \left(\sum_{k\in \dZ} \la \la v, a_{j-1;\cdot}\ra, a_{1;k}\ra a_{1;k}(m)+\sum_{\ell=1}^s \sum_{k\in \dZ} \la \la v, a_{j-1;\cdot}\ra, b_{\ell,1;k}\ra b_{\ell,1;k}(m)\right)\\
&=\sum_{m\in \dZ}  \la v, a_{j-1;m}\ra a_{j-1;m},
\end{align*}
where we used \eqref{das:level1}, i.e., item (2), in the last identity.
This proves (2)$\imply$(4).

(4)$\imply$(5) is obvious. Conversely, considering the differences between $J=j$ and $J=j-1$ in \eqref{das:ffrt}, we see that (5)$\imply$(4).
The equivalence between item (5) and item (6) is straightforward and is similar to the equivalence between item (2) and item (3).
\end{proof}

We now show that
the coefficients in the representation in \eqref{das:ffrt} using a $J$-level discrete affine system can be exactly computed through the $J$-level fast framelet decomposition in \eqref{def:wa}.
Since $\tz_{u,\dm} v=|\det(\dm)|(v*u^\star)\ds \dm$ and $\wh{a_{j-1}}(\xi)=\wh{a}(\xi)\cdots \wh{a}((\dm^\tp)^{j-2}\xi)$, by \cite[Lemma~4.3]{Han:mmnp:2013}, we have
\begin{align*}
\la v, a_{j-1;k}\ra&=|\det(\dm)|^{(j-1)/2}\la v, a_{j-1}(\cdot-\dm^{j-1}k) \ra=|\det(\dm)|^{(1-j)/2} [\tz_{a_{j-1}, \dm^{j-1}} v](k)\\
&=
|\det(\dm)|^{(1-j)/2} [\tz_{a, \dm}^{j-1} v](k)=v_{j-1}(k),
\end{align*}
where $v_{j-1}$ is exactly the same sequence as obtained in the fast framelet decomposition in \eqref{def:wa} with $v_0:=v$. Similarly, by \eqref{vbljk} and the above identity, we have
\begin{align*}
\la v,b_{\ell,j;k}\ra&=\la \la v,a_{j-1;\cdot}\ra, b_{\ell,1;k}\ra=
|\det(\dm_\ell)|^{1/2} \la v_{j-1}, b_\ell(\cdot-\dm_\ell k)\ra\\
&=|\det(\dm_\ell)|^{1/2} \sum_{m\in \dZ} v_{j-1}(m)\ol{b_\ell(m-\dm_\ell k)}=
|\det(\dm_\ell)|^{-1/2} [\tz_{b_\ell,\dm_\ell} v_{j-1}](k)=w_{\ell,j}(k).
\end{align*}
This establishes the connection between the representation in \eqref{das:ffrt} under the $J$-level discrete affine system and the $J$-level fast/discrete framelet transform in \eqref{def:wa} and \eqref{def:ws}.

\subsection{Connections to tight framelets in $L_2(\R^d)$}

Following the general theory on frequency-based framelets in \cite{Han:acha:2010,Han:acha:2012}, we now discuss the natural connections of a tight framelet filter bank $\{a \sampled \dm; b_1 \sampled \dm_1, \ldots, b_s\sampled \dm_s\}$ with a tight framelet in $L_2(\R^d)$.

For a function $f: \R^d\rightarrow \C$ and a $d\times d$ real-valued matrix $U$, following \cite{Han:acha:2012}, we shall adopt the following notation:
\[
f_{U; k,n}(x):=f_{\lb U; k,n\rb }(x):=\lb U; k,n\rb f(x):=|\det(U)|^{1/2}e^{-in\cdot Ux} f(Ux-k), \qquad x,k,n\in \dR.
\]
In particular, we define $f_{U;k}:=f_{U;k,0}=|\det U|^{1/2} f(U\cdot-k)$. For $f\in \dLp{1}$, its Fourier transform is defined to be $\wh{f}(\xi):=\int_{\dR} f(x) e^{-ix\cdot \xi} dx$ for $\xi\in \dR$. Note that $\wh{f_{U;k}}=\wh{f}_{U^{-\tp};0,k}$.

The following result is based on the general theory developed in \cite{Han:acha:2010,Han:acha:2012} on frequency-based framelets.

\begin{theorem}\label{thm:tf}
Let $a,b_1,\ldots, b_s\in \dlp{1}$ and $\dm,\dm_1,\ldots,\dm_s$ be $d\times d$ invertible integer matrices.
Suppose that all the eigenvalues of $\dm$ are greater than one in modulus and there exist positive numbers $\gep, C, \tau$ such that
%
$|1-\wh{a}(\xi)|\le C\|\xi\|^\tau$ for all $\xi\in [-\gep,\gep]^d$.
%
Define
\be \label{phi:psi}
\wh{\phi}(\xi):=\prod_{j=1}^\infty \wh{a}((\dm^\tp)^{-j}\xi) \quad
\mbox{and}\quad
\wh{\psi^\ell}(\xi):=\wh{b_\ell}(\dm^{-\tp}\xi)\wh{\phi}(\dm^{-\tp}\xi), \qquad \xi\in \dR, \ell=1, \ldots,s.
\ee
If $\{a \sampled \dm; b_1 \sampled \dm_1,\ldots, b_s \sampled \dm_s\}$ is a tight framelet filter bank, then $\{\phi \sampled \dm; \psi^1 \sampled \dm_1,\ldots, \psi^s \sampled \dm_s\}$ is a tight framelet in $\dLp{2}$, that is, $\phi,\psi^1,\ldots, \psi^s\in \dLp{2}$ and $\AS_0(\{\phi \sampled \dm; \psi^1 \sampled \dm_1,\ldots, \psi^s \sampled \dm_s\})$ is a (normalized) tight frame for $\dLp{2}$:
\be \label{tf:L2}
\|f\|_{\dLp{2}}^2=\sum_{k\in \dZ} |\la f, \phi(\cdot-k)\ra|^2+
\sum_{j=0}^\infty \sum_{\ell=1}^s \sum_{k\in \dZ} |\la f, |\det(\dm^{-1}\dm_\ell)|^{1/2}\psi^\ell_{\dm^{j};\dm^{-1}\dm_\ell k}\ra|^2,
\ee
for all $f\in \dLp{2}$, where
\be \label{AS0}
\begin{split}
\AS_0(\{\phi \sampled \dm; &\psi^1 \sampled \dm_1,\ldots, \psi^s \sampled \dm_s\}):=\{\phi(\cdot-k) \setsp k\in \dZ\}\\
&\cup \{ |\det(\dm^{-1}\dm_\ell)|^{1/2}\psi^\ell_{\dm^{j};\dm^{-1}\dm_\ell k} \setsp k\in \dZ, \ell=1,\ldots,s, j\in \N\cup\{0\}\}.
\end{split}
\ee
The converse direction also holds provided in addition that $\sum_{k\in \dZ} |\wh{\phi}(\xi+2\pi k)|^2\ne 0$ for almost every $\xi\in \dR$.
\end{theorem}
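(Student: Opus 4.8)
The plan is to derive both implications from the general frequency-based framelet theory of \cite{Han:acha:2010,Han:acha:2012}, using the discrete results of this section (Theorem~\ref{thm:das}) as the bridge between the filter bank and the continuous framelet. First I would dispose of the function-theoretic preliminaries. Since every eigenvalue of $\dm$ exceeds one in modulus, $(\dm^\tp)^{-j}\xi\to 0$ with $\sum_{j\ge 1}\|(\dm^\tp)^{-j}\xi\|^\tau<\infty$ locally uniformly in $\xi$, so the bound $|1-\wh{a}(\xi)|\le C\|\xi\|^\tau$ near the origin makes the infinite product defining $\wh\phi$ in \eqref{phi:psi} converge locally uniformly there; hence $\wh\phi$ is continuous with $\wh\phi(0)=1$. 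From \eqref{fb:pr:1} one has $|\wh{a}|\le 1$, so $|\wh\phi|\le 1$, and a standard cascade/refinement estimate (as in \cite{Han:acha:2012}) upgrades this to $\phi\in\dLp{2}$; then $\psi^\ell\in\dLp{2}$ because $\wh{\psi^\ell}=\wh{b_\ell}(\dm^{-\tp}\cdot)\wh\phi(\dm^{-\tp}\cdot)$ with $\wh{b_\ell}$ bounded. I would also record the refinement relations $\wh\phi((\dm^\tp)^j\xi)=\wh{a_j}(\xi)\wh\phi(\xi)$ and $\wh{\psi^\ell}(\dm^\tp\xi)=\wh{b_\ell}(\xi)\wh\phi(\xi)$, which drive everything that follows.

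For the forward implication the main line is to exploit Theorem~\ref{thm:das}. Fix $f$ with $\wh{f}$ continuous and compactly supported (such $f$ are dense in $\dLp{2}$) and set $v^{(n)}:=(\la f,\phi_{\dm^n;k}\ra)_{k\in\dZ}\in\dlp{2}$. A short Fourier computation gives $\|v^{(n)}\|_{\dlp{2}}^2=(2\pi)^{-d}\int_{\dR}|\wh{f}(\xi)|^2|\wh\phi((\dm^\tp)^{-n}\xi)|^2\,d\xi\to\|f\|_{\dLp{2}}^2$ by dominated convergence (using $|\wh\phi|\le 1$ and continuity of $\wh\phi$ at $0$). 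Next, using the refinement relations I would verify on the Fourier side the discretization identities $\sum_{m\in\dZ}a_{n;k}(m)\phi_{\dm^n;m}=\phi(\cdot-k)$ and $\sum_{m\in\dZ}b_{\ell,j;k}(m)\phi_{\dm^n;m}=|\det(\dm^{-1}\dm_\ell)|^{1/2}\psi^\ell_{\dm^{n-j};\dm^{-1}\dm_\ell k}$ for $1\le j\le n$; since $a_{n;k},b_{\ell,j;k}\in\dlp{1}$, Fubini then yields $\la v^{(n)},a_{n;k}\ra=\la f,\phi(\cdot-k)\ra$ and $\la v^{(n)},b_{\ell,j;k}\ra=\la f,|\det(\dm^{-1}\dm_\ell)|^{1/2}\psi^\ell_{\dm^{n-j};\dm^{-1}\dm_\ell k}\ra$. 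Plugging $v=v^{(n)}$ and $J=n$ into the frame identity of Theorem~\ref{thm:das}(6) (which holds since $\{a\sampled\dm;b_1\sampled\dm_1,\ldots,b_s\sampled\dm_s\}$ is by hypothesis a tight framelet filter bank with mixed sampling factors), reindexing $j\mapsto n-j$ in the high-pass sum, and letting $n\to\infty$ (left side by the limit above, right side by monotone convergence) produces exactly \eqref{tf:L2} for all such $f$. Finally, \eqref{tf:L2} then asserts that the analysis operator of $\AS_0(\{\phi\sampled\dm;\psi^1\sampled\dm_1,\ldots,\psi^s\sampled\dm_s\})$ is an isometry on a dense subspace, so the system is Bessel with bound $1$; hence \eqref{tf:L2} extends to every $f\in\dLp{2}$ by continuity and $\AS_0$ is a normalized tight frame.

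For the converse I would run the same bridge backwards. Assuming $\AS_0$ is a normalized tight frame, \eqref{tf:L2} together with the identifications above shows that for $\wh{f}$ continuous with compact support and $n$ large the $\das_n$-analysis quadratic form of $v^{(n)}$ differs from $\|v^{(n)}\|_{\dlp{2}}^2$ by an amount that tends to $0$. The extra hypothesis $\sum_{k\in\dZ}|\wh\phi(\xi+2\pi k)|^2\ne 0$ for almost every $\xi$ is exactly what promotes this to the statement that $\das_n$ is a normalized tight frame for $\dlp{2}$ for every $n$: positivity a.e.\ of this integer-shift bracket makes the sequences $v^{(n)}$ dense in $\dlp{2}$ as $f$ varies and $n$ grows, which forces the quadratic-form identity on all of $\dlp{2}$; equivalently, in the spirit of \cite{Han:acha:2012}, one substitutes the refinement relations into the frequency-domain characterization of the tight-frame property of $\AS_0$ and divides out the common factors involving $\wh\phi$, which is legitimate precisely because the bracket does not vanish a.e. Once $\das_n$ is a normalized tight frame, Theorem~\ref{thm:das} returns the perfect reconstruction conditions \eqref{fb:pr:1}--\eqref{fb:pr:0}, i.e.\ $\{a\sampled\dm;b_1\sampled\dm_1,\ldots,b_s\sampled\dm_s\}$ is a tight framelet filter bank with mixed sampling factors. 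I expect this converse bridge to be the main obstacle: making rigorous --- with a measurable selection on the set where $\wh\phi$ vanishes --- that nonvanishing of the bracket is the exact hypothesis transferring the family of per-$f$ approximate identities into the global filter-bank identities. By comparison the forward direction is a clean application of Theorem~\ref{thm:das} once the two standard function-theoretic facts of the first paragraph are in place.
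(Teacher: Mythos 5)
Your proposal is correct in substance, but the forward direction follows a genuinely different route from the paper's. The paper never passes through the discrete affine systems: it invokes the general frequency-based framelet theory of \cite{Han:acha:2010,Han:acha:2012} to reduce the tight-frame property of $\AS_0$ to the limit condition \eqref{to1} together with a single two-level identity \eqref{twolevel} for band-limited $f$, then computes the three quadratic forms in \eqref{twolevel} on the Fourier side via \cite[Lemma~10]{Han:acha:2012} and shows that \eqref{twolevel} is equivalent to the pointwise condition \eqref{twolevel:phi}, which visibly follows from \eqref{fb:pr:1} and \eqref{fb:pr:0}. You instead discretize $f$ at level $n$ into $v^{(n)}$, identify the $\AS_0$-coefficients of $f$ with the $\das_n$-coefficients of $v^{(n)}$ (your discretization identities are correct, including the normalization $|\det(\dm^{-1}\dm_\ell)|^{1/2}$), apply Theorem~\ref{thm:das}(6) and let $n\to\infty$; both proofs share the same limit fact \eqref{to1}. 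Your route buys a more self-contained derivation of the forward implication from the discrete theory already established in this section, at the cost of verifying the discretization identities; the paper's route buys a single frequency-domain identity \eqref{twolevel:phi} that handles both implications symmetrically.

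For the converse, your primary suggestion (density of the sequences $v^{(n)}$ in $\dlp{2}$) has a genuine soft spot: the tight-frame identity \eqref{tf:L2} does not by itself hand you the exact $\das_n$-quadratic-form identity for $v^{(n)}$ at any fixed $n$ --- the per-$f$ statements only become exact in the limit, so a density argument over $f$ and $n$ does not directly force the level-one identity needed to invoke Theorem~\ref{thm:das}. One must first extract the single two-level identity \eqref{twolevel} from \eqref{tf:L2}, which is exactly what the cited equivalence in \cite{Han:acha:2010,Han:acha:2012} provides. Your alternative formulation --- substitute the refinement relations into the frequency-domain characterization and divide out the bracket $\sum_{k\in\dZ}|\wh{\phi}(\xi+2\pi k)|^2$, legitimate because it is nonzero almost everywhere --- is precisely the paper's argument for passing from \eqref{twolevel:phi} back to \eqref{fb:pr:1} and \eqref{fb:pr:0}, so the converse goes through provided you commit to that version rather than the density heuristic.
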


\begin{proof} By the same argument as in \cite[Theorem~13]{Han:acha:2012} and \cite[Theorem~6]{Han:acha:2010},
$\phi,\psi^1,\ldots, \psi^s\in \dLp{2}$ and \eqref{tf:L2} holds for all $f\in \dLp{2}$ if and only if
\be \label{to1}
\lim_{j\to +\infty} \sum_{k\in \dZ} |\la f, \phi_{\dm^j;k}\ra|^2=\|f\|_{\dLp{2}}^2
\ee
and
\be \label{twolevel}
\sum_{k\in \dZ} |\la f, \phi(\cdot-k)\ra|^2+\sum_{\ell=1}^s \sum_{k\in \dZ} |\la f, |\det(\dm^{-1}\dm_\ell)|^{1/2} \psi^\ell(\cdot- \dm^{-1}\dm_\ell k)\ra|^2=\sum_{k\in \dZ} |\la f,\phi_{\dm^{-1};k}\ra|^2
\ee
for all $f\in \dLp{2}$ such that $\wh{f}$ is a compactly supported $C^\infty$ function.

By our assumption on $\dm$ and $\wh{a}$, we see that $\wh{\phi}$ is a well-defined bounded function.  By a similar argument as in \cite[Lemma~4]{Han:acha:2010}, we see that \eqref{to1} is satisfied, since $\lim_{j\to +\infty}\wh{\phi}((\dm^\tp)^{-j}\xi)=1$.

Define $\dn:=\dm^{-\tp}$ and $\dn_\ell:=\dm_\ell^{-\tp}$.
Note that $\psi^\ell(\cdot-\dm^{-1}\dm_\ell k)=\eta^\ell(\dm_\ell^{-1}\dm\cdot-k)$ with $\eta^\ell:=\psi^\ell(\dm^{-1}\dm_\ell\cdot)$ and $\wh{\eta^\ell}(\xi)=|\det(\dm_\ell^{-1}\dm)|\wh{\psi^\ell}(\dn^{-1}\dn_\ell\xi)$.
By \cite[Lemma~10]{Han:acha:2012},
we have
\begin{align*}
&\sum_{k\in \dZ} |\la f, |\det(\dm^{-1}\dm_\ell)|^{1/2} \psi^\ell(\cdot- \dm^{-1}\dm_\ell k)\ra|^2 =|\det(\dm^{-1}\dm_\ell)|^2
\sum_{k\in \dZ} |\la f, \eta^\ell_{\dm_\ell^{-1} \dm; k}\ra|^2\\
&\quad =(2\pi)^{-2d} |\det(\dm^{-1} \dm_\ell)|^2 \sum_{k\in \dZ} |\la \wh{f},\wh{\eta^\ell}_{\dn_\ell^{-1}\dn;0,k}\ra|^2\\
&\quad =(2\pi)^{-d}  \int_{\dR}
\sum_{k\in \dZ} \wh{f}(\xi)\ol{\wh{f}(\xi+2\pi \dn^{-1}\dn_\ell k)}
\ol{\wh{\psi^\ell}(\xi)} \wh{\psi^\ell}(\xi+2\pi \dn^{-1}\dn_\ell k)d\xi\\
&\quad =(2\pi)^{-d}  \int_{\dR}
\sum_{k\in \dZ} \wh{f}(\xi)\ol{\wh{f}(\xi+2\pi \dn^{-1}\dn_\ell k)}
\ol{b_\ell(\dn\xi)} \wh{b_\ell}(\dn\xi+2\pi \dn_\ell k)
\ol{\wh{\phi}(\dn \xi)} \wh{\phi}(\dn \xi+2\pi \dn_\ell k)d\xi\\
&\quad =(2\pi)^{-d}  \int_{\dR} \wh{f}(\xi)  \ol{\wh{\phi}(\dn \xi)} \sum_{\omega_\ell\in \Omega_\ell}
 \ol{b_\ell(\dn\xi)} \wh{b_\ell}(\dn\xi+2\pi \omega_\ell)
\sum_{k\in \dZ} \ol{\wh{f}(\xi+2\pi \dn^{-1}\omega_\ell+2\pi \dn^{-1} k)}
\wh{\phi}(\dn \xi+2\pi \omega_\ell+2\pi k)d\xi,
\end{align*}
where we used \eqref{phi:psi} in the last second identity and the fact that $\dZ=\dm_\ell^\tp \Omega_\ell+\dm_\ell^\tp \dZ$.
Similarly, by \cite[Lemma~10]{Han:acha:2012} we have
\begin{align*}
&\sum_{k\in \dZ} |\la f, \phi(\cdot-k)\ra|^2=(2\pi)^{-d} \int_{\dR} \sum_{k\in \dZ}\wh{f}(\xi)\ol{\wh{f}(\xi+2\pi k)} \ol{\wh{a}(\dn\xi)} \wh{a}(\dn \xi+2\pi \dn k) \ol{\wh{\phi}(\dn \xi)} \wh{\phi}(\dn \xi+2\pi \dn k)d\xi\\
&\qquad =(2\pi)^{-d} \int_{\dR} \wh{f}(\xi) \ol{\wh{\phi}(\dn \xi)} \sum_{\omega_0\in \Omega_{\dm}}
 \ol{\wh{a}(\dn\xi)} \wh{a}(\dn \xi+2\pi \omega_0) \sum_{k\in \dZ} \ol{\wh{f}(\xi+2\pi \dn^{-1}\omega_0+ 2\pi \dn^{-1} k)}  \wh{\phi}(\dn \xi+2\pi \omega_0+2\pi k)d\xi
\end{align*}
and
\[
\sum_{k\in \dZ} |\la f, \phi_{\dm;k}\ra|^2=(2\pi)^{-d} \int_{\dR}
\wh{f}(\xi) \ol{\wh{\phi}(\dn\xi)} \sum_{k\in \dZ} \ol{\wh{f}(\xi+2\pi \dn^{-1} k)} \wh{\phi}(\dn\xi+2\pi k)d\xi.
\]
By a similar argument as in \cite[Lemma~5]{Han:acha:2010}, we can conclude that \eqref{twolevel} holds if and only if
\be \label{twolevel:phi}
\begin{split}
\ol{\wh{\phi}(\xi)}\wh{\phi}(\xi+2\pi \omega+2\pi k) &\left(
\chi_{\dm^{-\tp}\dZ}(\omega)\ol{\wh{a}(\xi)} \wh{a}(\xi+2\pi\omega)
+\sum_{\ell=1}^s \chi_{\dm_\ell^{-\tp}\dZ}(\omega) \ol{\wh{b_\ell}(\xi)} \wh{b_\ell}(\xi+2\pi\omega)\right)\\
&=\td(\omega) \ol{\wh{\phi}(\xi)}\wh{\phi}(\xi+2\pi k),\qquad a.e.\, \xi\in \dR
\end{split}
\ee
for all $\omega\in \Omega_\dm\cup \cup_{\ell=1}^s \Omega_{\ell}$ and for all $k\in \dZ$.
If $\{a \sampled \dm; b_1 \sampled \dm_1,\ldots, b_s \sampled \dm_s\}$ is a tight framelet filter bank, by \eqref{fb:pr:1} and \eqref{fb:pr:0}, it is obvious that \eqref{twolevel:phi} is satisfied and therefore,
$\{\phi \sampled \dm; \psi^1 \sampled \dm_1,\ldots, \psi^s \sampled \dm_s\}$ is a tight framelet for $\dLp{2}$.

If $\sum_{k\in \dZ} |\wh{\phi}(\xi+2\pi k)|^2\ne 0$ for almost every $\xi\in \dR$, then it is easy to deduce that \eqref{twolevel:phi} is equivalent to \eqref{fb:pr:1} and \eqref{fb:pr:0}. This proves the converse direction.
\end{proof}

Since $\dm^{-1}\dm_\ell \dZ =\dZ$ may not hold any more for all $\ell=1,\ldots,s$, the system $\AS_0(\{\phi \sampled \dm; \psi^1 \sampled \dm_1,\ldots, \psi^s \sampled \dm_s\})$ in \eqref{AS0} is not covered by the traditional theory of wavelet analysis.


\section{Directional Tensor Product Complex Tight Framelets with Low Redundancy}
\label{sec:tpctf}

In this section we first briefly recall the directional tensor product complex tight framelets from \cite{Han:mmnp:2013,HanZhao:siims:2014}.
Built on the results on tight framelet filter banks with mixed sampling factors in Section~\ref{sec:tffb:ms}, we shall provide the details on our proposed directional tensor product complex tight framelet filter bank $\tpctf^{\reduced}_6$ with low redundancy rate.

\subsection{Tensor product complex tight framelets and their redundancy rates}

For $c_L<c_R$ and positive numbers $\gep_L, \gep_R$ satisfying $\gep_L+\gep_R\le c_R-c_L$, we define a bump function $\chi_{[c_L, c_R]; \gep_L, \gep_R}$ on $\R$ (\cite{Han:acha:1997,Han:mmnp:2013,HanZhao:siims:2014}) by
\be \label{bump:func}
\chi_{[c_L, c_R]; \gep_L, \gep_R}(\xi):=
\begin{cases} 0, \quad &\xi\le c_L-\gep_L \; \mbox{or}\; \xi \ge c_R+\gep_R, \\
\cos\big(\tfrac{\pi(c_L+\gep_L-\xi)}{4\gep_L}\big), \quad &c_L-\gep_L<\xi<c_L+\gep_L,\\
1, \quad &c_L+\gep_L\le \xi\le c_R-\gep_R,\\
\cos\big(\tfrac{\pi(\xi-c_R+\gep_R)}{4\gep_R}\big),
\quad &c_R-\gep_R<\xi<c_R+\gep_R.
\end{cases}
\ee
Note that $\chi_{[c_L, c_R]; \gep_L, \gep_R}$ is a continuous function supported on $[c_L-\gep_L,c_R+\gep_R]$.

Let $s\in \N$ and $0<c_1<c_2<\cdots<c_{s+1}:=\pi$ and $\gep_0,\gep_1,\ldots, \gep_{s+1}$ be positive real numbers satisfying
\[
\gep_0+\gep_1\le c_1\le \tfrac{\pi}{2}-\gep_1
\quad \mbox{and}\quad
\gep_\ell+\gep_{\ell+1}\le
c_{\ell+1}-c_\ell \le \pi-\gep_{\ell}-\gep_{\ell+1}, \qquad\forall\; \ell=1,\ldots,s.
\]
A real-valued low-pass filter $a$ and $2s$ complex-valued high-pass filters $b^{p}_1, \ldots, b^{p}_s, b^{n}_1, \ldots, b^{n}_s$ are defined through their Fourier series on the basic interval $[-\pi, \pi)$ as follows:
\be \label{ctf}
\wh{a}:=\chi_{[-c_1, c_1]; \gep_1, \gep_1}, \quad
\wh{b^{p}_{\ell}}:=\chi_{[c_\ell,c_{\ell+1}]; \gep_\ell, \gep_{\ell+1}}, \quad
\wh{b_{\ell}^{n}}:=\ol{\wh{b_{\ell}^{p}}(-\cdot)}, \qquad \ell=1, \ldots, s.
\ee
Then  $\ctf_{2s+1}:=\{a; b^{p}_1, \ldots, b^{p}_s, b^{n}_1, \ldots, b^{n}_s\}$ is a (one-dimensional dyadic) tight framelet filter bank.
The tensor product complex tight framelet filter bank $\tpctf_{2s+1}$ for dimension $d$ is simply
\[
\tpctf_{2s+1}:=\otimes^d \ctf_{2s+1}=\otimes^d \{a; b^{p}_1, \ldots, b^{p}_s, b^{n}_1, \ldots, b^{n}_s\}.
\]
This tight framelet filter bank $\tpctf_{2s+1}$ has one real-valued low-pass filter $\otimes^d a$ and $(2s+1)^d-1$ complex-valued high-pass filters. This family of tensor product complex tight framelets has been introduced in \cite{Han:mmnp:2013}.

To further improve the directionality of $\tpctf_{2s+1}$, another closely related family of tensor product complex tight framelet filter banks $\tpctf_{2s+2}$ has been introduced in \cite{HanZhao:siims:2014}. Define filters $a, b^{p}_1, \ldots, b^{p}_s, b^{n}_1, \ldots, b^{n}_s$ as in \eqref{ctf}. Define two auxiliary complex-valued filters $a^p, a^n$ by
\begin{equation}\label{apn}
\wh{a^{p}}:=\chi_{[0, c_1]; \gep_0, \gep_1}, \qquad
\wh{a^{n}}:=\ol{\wh{a^p}(-\cdot)}.
\end{equation}
Then $\ctf_{2s+2}:=\{a^p, a^n; b^{p}_1, \ldots, b^{p}_s, b^{n}_1, \ldots, b^{n}_s\}$ is also a (one-dimensional dyadic) tight framelet filter bank. Now the tensor product complex tight framelet filter bank $\tpctf_{2s+2}$ for dimension $d$ is defined to be
\[
\tpctf_{2s+2}:=\{\otimes^d a; \tpctf\mbox{-HP}_{2s+2}\},
\]
where $\tpctf\mbox{-HP}_{2s+2}$ consists of total
$(2s+2)^d-2^d$ complex-valued high-pass filters given by
\[
\Big(\otimes^d \{a^p, a^n, b^{p}_1, \ldots, b^{p}_s, b^{n}_1, \ldots, b^{n}_s\}\Big)\bs \Big(\otimes^d \{a^p, a^n\}\Big).
\]
The sampling matrices/factors for all tensor product complex tight framelet filter banks $\tpctf_m$ with $m\ge 3$ are $2I_d$.
See \cite{Han:mmnp:2013,HanMoZhao:2013,HanZhao:siims:2014,SHB:2014} for detailed discussions on tensor product complex tight framelets and their applications to image processing.

We now discuss the redundancy rates of $\tpctf_m$ with $m\ge 3$.
Note that $\wh{b_{\ell}^{n}}=\ol{\wh{b_{\ell}^{p}}(-\cdot)}$ is equivalent to $b_{\ell}^{n}=\ol{b_{\ell}^{p}}$, that is, $b_{\ell}^{n}(k)=\ol{b_{\ell}^{p}(k)}$ for all $k\in \Z$.
Therefore, by the last identities in \eqref{ctf} and \eqref{apn}, we can always rewrite the tight framelet filter bank $\tpctf_{m}$ as
\be \label{tpctf:half}
\tpctf_{m}=\{ \otimes^d a;  u, \ol{u}\;\mbox{with}\;  u\in \tpctf\operatorname{-CHP}_{m}\},
\ee
where $\tpctf\operatorname{-CHP}_{m}$ is a subset of $\tpctf_{m}$ and has exactly $n_m$ filters, where $n_m:=\frac{m^d-1}{2}$ for odd integers $m$  and $n_m:=\frac{m^d-2^d}{2}$ for even integers $m$.
For a complex-valued filter $u:\dZ \rightarrow \C$, we can uniquely write
$u=\re(u)+i\im(u)$, where $\re(u)$ and $\im(u)$ are two real-valued filters defined by $\re(u)(k):=\re(u(k))$ and $\im(u)(k):=\im(u(k))$ for all $k\in \dZ$.
Due to the identity in \eqref{tpctf:half}, we observe that the complex-valued tight framelet filter bank $\tpctf_{m}$ is essentially equivalent to the following real-valued tight framelet filter bank:
\be \label{tpctfm:real}
\{\otimes^d a\}\cup\{ \sqrt{2}\re(u), \sqrt{2}\im(u) \setsp  u\in  \tpctf\operatorname{-CHP}_{m}\},
\ee
which has one real-valued low-pass filter and $2n_m$ real-valued high-pass filters. Therefore,
since the sampling matrices are $2I_d$ with determinant $2^d$, the redundancy rate of $\tpctf_m$ in dimension $d$ is no more than
\[
\frac{2n_m}{2^d}\sum_{j=0}^\infty \frac{1}{2^{dj}}=\frac{2n_m}{2^d-1}=\begin{cases}
\frac{m^d-1}{2^d-1}, &\text{if $m$ is an odd integer,}\\
\frac{m^d-2^d}{2^d-1}, &\text{if $m$ is an even integer.}
\end{cases}
\]

\subsection{Directional tensor product complex tight framelets with low redundancy}

Now we are ready to construct directional tensor product complex tight framelets with low redundancy by using large sampling factors for $\tpctf_m$. Though all our arguments in this subsection can be applied to every $\tpctf_m$ with $m\ge 3$, since the directional tensor product complex tight framelet $\tpctf_6$ has been known to have superior performance for image denoising in \cite{HanZhao:siims:2014} and for image inpainting in \cite{SHB:2014}, we shall only concentrate here on the modification of $\tpctf_6$.

As discussed in detail in \cite{Han:mmnp:2013,HanMoZhao:2013,HanZhao:siims:2014}, the directionality of the tensor product complex tight framelets is closely related to the frequency separation property of the high-pass filters in its underlying one-dimensional tight framelet filter bank. More precisely, for a filter $u$, we say that $u$ has \emph{good frequency separation property} if either $\wh{u}(\xi)\approx 0$ for all $\xi\in [-\pi,0]$ or $\wh{u}(\xi)\approx 0$ for all $\xi\in[0,\pi]$. Moreover, we say that a filter $u$ has the \emph{ideal frequency separation property} if either $\wh{u}(\xi)=0$ for all $\xi\in [-\pi,0]$ or $\wh{u}(\xi)=0$ for all $\xi\in[0,\pi]$.

In this subsection, we are interested in building a one-dimensional tight framelet filter bank $\ctf_6^{\reduced}$ (called reduced $\ctf_6$ or $\ctf_6$ down $4$), which consists of one real-valued low-pass filter $a$, two auxiliary complex-valued filters $a^p, a^n$, and four complex-valued high-pass filters $b^{p}_1, b^{p}_2, b^{n}_1, b^{n}_2$ such that
\begin{enumerate}
\item[(1)] $a^n=\ol{a^p}$, $b^{n}_1=\ol{b^{p}_1}$, and $b^{n}_2=\ol{b^{p}_2}$.
\item[(2)] Both $\{a \sampled 2; b^{p}_1\sampled 4, b^{p}_2\sampled 4, b^{n}_1\sampled 4, b^{n}_2\sampled 4\}$ and $\ctf_6^{\reduced}:=\{a^p \sampled 4, a^n \sampled 4; b^{p}_1\sampled 4, b^{p}_2\sampled 4, b^{n}_1\sampled 4, b^{n}_2\sampled 4\}$ are tight framelet filter banks.
\item[(3)] The auxiliary filters $a^p, a^n$ and all the high-pass filters $b^{p}_1, b^{p}_2, b^{n}_1, b^{n}_2$ have good frequency separation property.
\end{enumerate}

See Figure~\ref{fig:ffrt} for an illustration of a one-dimensional multilevel fast framelet transform employing a filter bank $\{a \sampled 2; b_1\sampled 4,\ldots, b_s\sampled 4\}$.

\begin{figure}[ht]
\centering
\pspicture[](0,0)(14,4.6)
\rput(0,1.5){\rnode{input}{\psframebox[framearc=.3]{\tiny input}}}
\rput(0.8,1.5){\rnode{inputjoint}{}}
\rput(1.7,3){\rnode{a}{\psframebox[framearc=.3]{\tiny $\sqrt{2} a^\star$}}}
\rput(1.7,1.82){\rnode{b1}{\psframebox[framearc=.3]{\tiny $\sqrt{4} b_1^\star$}}}
\rput(1.7,0){\rnode{bs}{\psframebox[framearc=.3]{\tiny $\sqrt{4} b_s^\star$}}}
\cnodeput(2.8,3){ds0}{\tiny $\downarrow \! 2$}
\cnodeput(2.8,1.82){ds1}{\tiny $\downarrow \! 4$}
\cnodeput(2.8,0){dss}{\tiny $\downarrow \! 4$}
\rput(3.4,3){\rnode{inputjoint2}{}}
\rput(4.3,4.2){\rnode{a2}{\psframebox[framearc=.3]{\tiny $\sqrt{2} a^\star$}}}
\rput(4.3,3.4){\rnode{b12}{\psframebox[framearc=.3]{\tiny $\sqrt{4} b_1^\star$}}}
\rput(4.3,2.4){\rnode{bs2}{\psframebox[framearc=.3]{\tiny $\sqrt{4} b_s^\star$}}}
\cnodeput(5.5,4.2){ds02}{\tiny $\downarrow \! 2$}
\cnodeput(5.5,3.4){ds12}{\tiny $\downarrow \! 4$}
\cnodeput(5.5,2.4){dss2}{\tiny $\downarrow \! 4$}
\rput(7,4.2){\rnode{proc02}{\psframebox[framearc=.3]{\tiny processing}}}
\rput(7,3.4){\rnode{proc12}{\psframebox[framearc=.3]{\tiny processing}}}
\rput(7,2.4){\rnode{procs2}{\psframebox[framearc=.3]{\tiny processing}}}
\cnodeput(8.5,4.2){us02}{\tiny $\uparrow \! 2$}
\cnodeput(8.5,3.4){us12}{\tiny $\uparrow \! 4$}
\cnodeput(8.5,2.4){uss2}{\tiny $\uparrow \! 4$}
\rput(9.6,4.2){\rnode{ta2}{\psframebox[framearc=.3]{\tiny $\sqrt{2} a$}}}
\rput(9.6,3.4){\rnode{tb12}{\psframebox[framearc=.3]{\tiny $\sqrt{4} b_1$}}}
\rput(9.6,2.4){\rnode{tbs2}{\psframebox[framearc=.3]{\tiny $\sqrt{4} b_s$}}}
\rput(10.3,3){\rnode{outputjoint2}{$\oplus$}}
\ncline{-}{ds0}{inputjoint2}
\ncangle[angleA=90,angleB=180,arm=0.05]{->}{inputjoint2}{a2}
\ncangle[angleA=90,angleB=180,arm=0.05]{->}{inputjoint2}{b12}
\ncangle[angleA=-90,angleB=180,arm=0.05]{->}{inputjoint2}{bs2}
\ncline{->}{a2}{ds02} \ncline{->}{ds02}{proc02}
\ncline{->}{proc02}{us02} \ncline{->}{us02}{ta2}
\ncline{->}{b12}{ds12} \ncline{->}{ds12}{proc12}
\ncline{->}{proc12}{us12} \ncline{->}{us12}{tb12}
\ncline{->}{bs2}{dss2} \ncline{->}{dss2}{procs2}
\ncline{->}{procs2}{uss2} \ncline{->}{uss2}{tbs2}
\ncangle[angleA=0,angleB=90,arm=0.05]{->}{ta2}{outputjoint2}
\ncangle[angleA=0,angleB=90,arm=0.05]{tb12}{outputjoint2}
\ncangle[angleA=0,angleB=-90,arm=0.05]{->}{tbs2}{outputjoint2}
\ncline[nodesep=1pt,linestyle=dotted]{b12}{bs2}
\ncline[nodesep=2pt,linestyle=dotted]{proc12}{procs2}
\ncline[nodesep=1pt,linestyle=dotted]{tb12}{tbs2}
\rput(7,1.82){\rnode{proc1}{\psframebox[framearc=.3]{\tiny processing}}}
\rput(7,0){\rnode{procs}{\psframebox[framearc=.3]{\tiny processing}}}
\cnodeput(11.1,3){us0}{\tiny $\uparrow \! 2$}
\cnodeput(11.1,1.82){us1}{\tiny $\uparrow \! 4$}
\cnodeput(11.1,0){uss}{\tiny $\uparrow \! 4$}
\rput(12.2,3){\rnode{ta}{\psframebox[framearc=.3]{\tiny $\sqrt{2} a$}}}
\rput(12.2,1.82){\rnode{tb1}{\psframebox[framearc=.3]{\tiny $\sqrt{4} b_1$}}}
\rput(12.2,0){\rnode{tbs}{\psframebox[framearc=.3]{\tiny $\sqrt{4} b_s$}}}
\rput(12.9,1.5){\rnode{outputjoint}{$\oplus$}}
\rput(14,1.5){\rnode{output}{\psframebox[framearc=.3]{\tiny output}}}
\ncline{-}{input}{inputjoint}\ncline{->}{outputjoint2}{us0}
\ncangle[angleA=90,angleB=180,arm=0.1]{->}{inputjoint}{a}
\ncangle[angleA=90,angleB=180,arm=0.1]{->}{inputjoint}{b1}
\ncangle[angleA=-90,angleB=180,arm=0.1]{->}{inputjoint}{bs}
\ncline{->}{a}{ds0} \ncline{->}{us0}{ta}
\ncline{->}{b1}{ds1} \ncline{->}{ds1}{proc1}
\ncline{->}{proc1}{us1} \ncline{->}{us1}{tb1}
\ncline{->}{bs}{dss} \ncline{->}{dss}{procs}
\ncline{->}{procs}{uss} \ncline{->}{uss}{tbs}
\ncangle[angleA=0,angleB=90,arm=0.1]{->}{ta}{outputjoint}
\ncangle[angleA=0,angleB=90,arm=0.1]{->}{tb1}{outputjoint}
\ncangle[angleA=0,angleB=-90,arm=0.1]{->}{tbs}{outputjoint}
\ncline{->}{outputjoint}{output}
\ncline[nodesep=14pt,linestyle=dotted]{b1}{bs}
\ncline[nodesep=14pt,linestyle=dotted]{proc1}{procs}
\ncline[nodesep=14pt,linestyle=dotted]{tb1}{tbs}
\endpspicture
\bigskip
\begin{caption}
{Diagram of the one-dimensional two-level discrete framelet transform using a one-dimensional tight framelet filter bank $\{a \sampled 2; b_1\sampled 4,\ldots, b_s\sampled 4\}$. Here each box with a filter inside it means convolution with the filter inside the box. Note that $\tz_{a,2} v=2(v*a^\star) \ds 2$ and $\sd_{a,2} v=2 (v \us 2)*a$, while $\tz_{b_\ell,4} v=4(v*b_\ell^\star) \ds 4$ and $\sd_{b_\ell,4} v=4 (v \us 4)*b_\ell$ for $\ell=1,\ldots,s$. Note that $a^\star$ is the flip-conjugate sequence of $a$ given by $a^\star(k):=\ol{a(-k)}$ for all $k\in \Z$, or equivalent, $\wh{a^\star}(\xi)=\ol{\wh{a}(\xi)}$.
}\label{fig:ffrt}
\end{caption}
\end{figure}

The directionality of the tensor product complex tight framelet $\tpctf_6^{\reduced}$, which we shall introduce later, largely depends on the frequency separation property of all the high-pass filters in the $J$-level discrete affine system $\das_J(\{a \sampled 2; b^{1,p}\sampled 4, b^{2,p}\sampled 4, b^{1,n}\sampled 4, b^{2,n}\sampled 4\})$ as well as the frequency separation property of the two auxiliary filters $a^p$ and $a^n$. For $j\in \N$ and $\ell=1,2$, we define
\begin{align}
&\wh{a_j}(\xi):= \wh{a}(\xi)\wh{a}(2\xi)\cdots\wh{a}(2^{j-2}\xi)\wh{a}(2^{j-1}\xi),
\label{ctfR:a}\\
&\wh{b^{p}_{\ell,j}}:=\wh{a_{j-1}}(\xi)\wh{b^{p}_\ell}(2^{j-1}\xi)= \wh{a}(\xi)\wh{a}(2\xi)\cdots\wh{a}(2^{j-2}\xi)\wh{b^{p}_\ell}(2^{j-1}\xi),
\label{ctfR:bp}\\
&\wh{b^{n}_{\ell,j}}:=\wh{a_{j-1}}(\xi)\wh{b^{n}_\ell}(2^{j-1}\xi)= \wh{a}(\xi)\wh{a}(2\xi)\cdots\wh{a}(2^{j-2}\xi)\wh{b^{n}_\ell}(2^{j-1}\xi).
\label{ctfR:bn}
\end{align}
Note that $a_1=a, b^{p}_{\ell,1}=b_{\ell}^{p}$ and $b^{n}_{\ell,1}=b^{n}_\ell$.
We also define
\[
a_{j;k}:=2^{j/2}a_j(\cdot-2^j k),\quad
b^{p}_{\ell,j;k}:=2^{(j+1)/2} b^{p}_{\ell,j}(\cdot-2^{j+1} k),
\quad b^{n}_{\ell,j;k}:=2^{(j+1)/2} b^{n}_{\ell,j}(\cdot-2^{j+1} k)
\]
for $\ell=1,2$, $j\in \N$, and $k\in \Z$.
Then its associated one-dimensional $J$-level discrete affine system is given by
\begin{align*}
\das_J(\{a \sampled 2; b^{p}_1\sampled 4, b^{p}_2\sampled 4, b^{n}_1\sampled 4, b^{n}_2\sampled 4\})=
\{a_{J;k} \setsp k\in \Z\}
\cup
\{
b^{p}_{\ell,j;k}, b^{n}_{\ell,j;k} \setsp k\in \Z,\ell=1,2, j=1,\ldots, J\}.
\end{align*}

A detailed construction of $\ctf_6^{\reduced}$ is given in the following result by
defining the filters $a$ and $b^{p}_1, b^{p}_2, b^{n}_1, b^{n}_2$ as in \eqref{ctf} with $s=2$ and $a^p, a^n$ as in \eqref{apn}.

\begin{theorem}\label{thm:ctfR6}
Let $0<c_0<c_1<c_2<\pi$ and $\gep_0,\gep_1,\gep_2,\gep_3$ be positive real numbers.
The filters $a, a^p, b^{p}_1, b^{p}_2$ are constructed by defining their $2\pi$-periodic Fourier
series on the basic interval $[-\pi,\pi)$ as follows:
\be \label{ctfR}
\wh{a}:=\chi_{[-c_1,c_1];\gep_1,\gep_1}, \quad \wh{a^p}:=\chi_{[0,c_1];\gep_0,\gep_1}
\quad \mbox{and} \quad \wh{b^{p}_1}:=\chi_{[c_1,c_2];\gep_1,\gep_2},\quad
\wh{b^{p}_2}:=\chi_{[c_2,\pi];\gep_2,\gep_3}.
\ee
Define
\be \label{filter:pn}
a^n:=\ol{a^p},\quad b^{n}_1:=\ol{b^{p}_1}, \quad b^{n}_2:=\ol{b^{p}_2}.
\ee
If
\be \label{ctfR:condition}
\gep_0+\gep_1\le c_1 \le \tfrac{\pi}{2}-\gep_0-\gep_1,\quad
\tfrac{\pi}{2}+\gep_2+\gep_3\le c_2\le \pi-\gep_2-\gep_3,
\quad \gep_1+\gep_2\le c_2-c_1\le \tfrac{\pi}{2}-\gep_1-\gep_2,
\ee
then both $\{a \sampled 2; b^{p}_1\sampled 4, b^{p}_2\sampled 4, b^{n}_1\sampled 4, b^{n}_2\sampled 4\}$ and $\{a^p \sampled 4, a^n \sampled 4; b^{p}_1\sampled 4, b^{p}_2\sampled 4, b^{n}_1\sampled 4, b^{n}_2\sampled 4\}$ are tight framelet filter banks.
If both \eqref{ctfR:condition} and the following additional conditions are satisfied:
\be \label{ctfR:fs}
\tfrac{1}{2}c_2+\tfrac{1}{2} \gep_2+c_1+\gep_1\le \pi \quad\mbox{and}\quad
c_1+\gep_1+\tfrac{1}{2}\gep_3\le \tfrac{\pi}{2},
\ee
then all the high-pass filters $b^{p}_{1,j;k},b^{p}_{2,j;k}, b^{n}_{1,j;k} b^{n}_{2,j;k},k\in \Z$ at all scale levels $j\ge 2$ in the
$J$-level discrete affine system
$\das_J(\{a \sampled 2; b^{p}_1\sampled 4, b^{p}_2\sampled 4, b^{n}_1\sampled 4, b^{n}_2\sampled 4\})$
have the ideal frequency separation property for every $J\ge 2$, more precisely,
\be \label{ctfR:ifs}
\wh{b^{p}_{\ell,j}}(\xi)=0, \quad \forall\, \xi\in [-\pi,0] \quad \mbox{and}\quad
\wh{b^{n}_{\ell,j}}(\xi)=0,\quad \forall\, \xi\in [0,\pi]\qquad \mbox{for all}\; j\ge 2 \quad \mbox{and}\quad \ell=1,2,
\ee
where $\wh{b^{p}_{\ell,j}}$ and $\wh{b^{n}_{\ell,j}}$ are defined in \eqref{ctfR:bp} and \eqref{ctfR:bn}, respectively.
\end{theorem}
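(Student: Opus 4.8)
The plan is in two stages: first show that both filter banks in the statement are tight framelet filter banks by verifying the conditions of Theorem~\ref{thm:dft:pr}(iii), and then deduce the ideal frequency separation \eqref{ctfR:ifs} by a ``pinching'' estimate on the cascade products in \eqref{ctfR:bp}--\eqref{ctfR:bn}.

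For the first stage I would check the perfect reconstruction conditions \eqref{fb:pr:1} and \eqref{fb:pr:0}. All the $2\pi$-periodic symbols involved are real-valued bump functions of the form $\chi_{[c_L,c_R];\gep_L,\gep_R}$ or their reflections $\chi_{[c_L,c_R];\gep_L,\gep_R}(-\cdot)$, and by \eqref{filter:pn} the multiset of their squared moduli is invariant under $\xi\mapsto-\xi$; hence for \eqref{fb:pr:1} it suffices to verify the partition of unity on $[0,\pi]$. On each transition window about a breakpoint the two abutting symbols are built with a common half-width there ($\gep_1$ at $c_1$ for $\wh a$, $\wh{a^p}$ and $\wh{b^p_1}$; $\gep_2$ at $c_2$ for $\wh{b^p_1}$ and $\wh{b^p_2}$; $\gep_3$ at $\pi$ for $\wh{b^p_2}$ and $\wh{b^n_2}$; $\gep_0$ at $0$ for $\wh{a^p}$ and $\wh{a^n}$), so their squares add to $\cos^2(t)+\cos^2(\tfrac{\pi}{2}-t)=\cos^2(t)+\sin^2(t)=1$, while on the flat stretches exactly one symbol equals $1$; the inequalities in \eqref{ctfR:condition} are precisely what guarantees that consecutive windows abut with no gap and that the pieces of $\wh{b^p_2}$ and $\wh{b^n_2}=\wh{b^p_2}(-\cdot)$ splice correctly across $\xi=\pi\equiv-\pi$. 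For the aliasing identities \eqref{fb:pr:0}, the relevant set of shifts is $(\Omega_2\cup\Omega_4)\setminus\{0\}=\{\tfrac{1}{4},\tfrac{1}{2},\tfrac{3}{4}\}$, so each $2\pi\omega$ is $\pm\tfrac{\pi}{2}$ or $\pi$ modulo $2\pi$. The key point is that \eqref{ctfR:condition} forces the support, viewed as an arc in $\R/2\pi\dZ$, of each of the high-pass symbols $b^p_1,b^p_2,b^n_1,b^n_2$ and of the auxiliary symbols $a^p,a^n$ to have length at most $\tfrac{\pi}{2}$ (using $c_2-c_1\le\tfrac{\pi}{2}-\gep_1-\gep_2$, $\pi-c_2\le\tfrac{\pi}{2}-\gep_2-\gep_3$ and $c_1\le\tfrac{\pi}{2}-\gep_0-\gep_1$), whereas $\operatorname{supp}\wh a$ has length $2c_1+2\gep_1<\pi$; since moreover each bump vanishes at the endpoints of its support, each summand of \eqref{fb:pr:0} is identically zero (the $\wh a$-term occurs only for $\omega=\tfrac{1}{2}$, where the bound $<\pi$ suffices). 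This disposes of both $\{a\sampled2;b^p_1\sampled4,b^p_2\sampled4,b^n_1\sampled4,b^n_2\sampled4\}$ and $\ctf_6^{\reduced}$ simultaneously.

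For the second stage, fix $\ell\in\{1,2\}$ and $j\ge2$. Since $\wh a$ is real and even, $\wh{a_{j-1}}$ is even, and since $b^n_\ell=\ol{b^p_\ell}$ with $\wh{b^p_\ell}$ real we have $\wh{b^n_\ell}(\xi)=\wh{b^p_\ell}(-\xi)$; thus $\wh{b^n_{\ell,j}}(\xi)=\wh{b^p_{\ell,j}}(-\xi)$, the two assertions in \eqref{ctfR:ifs} are equivalent, and it suffices to prove $\wh{b^p_{\ell,j}}(\xi)=0$ for $\xi\in[-\pi,0]$. Write $L:=c_1+\gep_1$; then \eqref{ctfR:condition} gives $L\le\tfrac{\pi}{2}-\gep_0<\tfrac{\pi}{2}$, so $\wh a$ vanishes outside $(-L,L)+2\pi\dZ$ and $2L<\pi$. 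If $\wh{b^p_{\ell,j}}(\xi)\ne0$ for some $\xi\in[-\pi,0]$, then $\wh a(2^i\xi)\ne0$ for $i=0,\dots,j-2$, and I would prove by induction on $k$ that this forces $\xi\in(-2^{-k}L,0]$: for $k=0$ it is $(-L,L)\cap[-\pi,0]=(-L,0]$; in the step, $\xi\in(-2^{-(k-1)}L,0]$ gives $2^k\xi\in(-2L,0]\subseteq(-\pi,0]$, so no modular reduction is needed and $\wh a(2^k\xi)\ne0$ forces $2^k\xi\in(-L,0]$, i.e.\ $\xi\in(-2^{-k}L,0]$. Taking $k=j-2$ and scaling by $2^{j-1}$ yields $2^{j-1}\xi\in(-2L,0]$, an arc independent of $j$. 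But then $\wh{b^p_\ell}(2^{j-1}\xi)\ne0$ as well, and the two conditions in \eqref{ctfR:fs} are exactly what makes $(-2L,0]$ disjoint from $\operatorname{supp}\wh{b^p_\ell}$: for $\ell=1$, $\operatorname{supp}\wh{b^p_1}=[c_1-\gep_1,c_2+\gep_2]\subset(0,\pi)$ and its translate by $-2\pi$ lies in $(-2\pi,-\pi)$, so the arc avoids it precisely when $-2L\ge c_2+\gep_2-2\pi$, i.e.\ $\tfrac{1}{2}c_2+\tfrac{1}{2}\gep_2+c_1+\gep_1\le\pi$; for $\ell=2$, $\operatorname{supp}\wh{b^p_2}$ wraps past $\pm\pi$ down to $-\pi+\gep_3$, so the arc avoids it precisely when $-2L\ge-\pi+\gep_3$, i.e.\ $c_1+\gep_1+\tfrac{1}{2}\gep_3\le\tfrac{\pi}{2}$. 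This contradiction establishes \eqref{ctfR:ifs}.

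I expect the main difficulties to lie in the two places where arcs wrap around the circle: in the first stage, verifying that $\wh{b^p_2}$ and $\wh{b^n_2}$ join up correctly at $\xi=\pi$ in \eqref{fb:pr:1} and that it is the \emph{circular} support lengths that \eqref{ctfR:condition} controls for \eqref{fb:pr:0}; and in the second stage, handling the wrap-around of $\operatorname{supp}\wh{b^p_2}$ past $-\pi$ in the final disjointness step while keeping every $2^k\xi$ inside $(-\pi,\pi)$ throughout the induction (which is exactly what $2L<\pi$ provides). The remaining computations are the same routine bump-function bookkeeping already carried out for $\ctf_m$ in \cite{Han:mmnp:2013,HanZhao:siims:2014}.
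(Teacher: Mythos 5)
Your proposal is correct and follows essentially the same route as the paper: both reduce the tight framelet property to the perfect reconstruction conditions of Theorem~\ref{thm:dft:pr} and kill every aliasing summand individually by showing that each symbol's (circular) support, controlled by \eqref{ctfR:condition}, is disjoint from its relevant shifts, with the partition of unity coming from the $\cos^2+\sin^2$ splicing of adjacent bumps. For the ideal frequency separation the paper merely asserts that \eqref{ctfR:ifs} follows ``by calculation'' from \eqref{ctfR:condition} and \eqref{ctfR:fs}; your induction pinning $2^{j-1}\xi$ into $(-2(c_1+\gep_1),0]$ and then invoking the two inequalities of \eqref{ctfR:fs} to separate this arc from $\operatorname{supp}\wh{b^{p}_\ell}$ modulo $2\pi$ is a correct and complete version of exactly that calculation.
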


\begin{proof} By Theorem~\ref{thm:dft:pr},
$\{a \sampled 2; b^{p}_1\sampled 4, b^{p}_2\sampled 4, b^{n}_1\sampled 4, b^{n}_2\sampled 4\}$
is a tight framelet filter bank if and only if
\begin{align}
&|\wh{a}(\xi)|^2+|\wh{b^{p}_1}(\xi)|^2+|\wh{b^{p}_2}(\xi)|^2
+|\wh{b^{n}_1}(\xi)|^2+|\wh{b^{n}_2}(\xi)|^2
=1,\label{partition}\\
&
\wh{a}(\xi)\ol{\wh{a}(\xi+\pi)}+ \sum_{\ell=1}^2 \Big(\wh{b^{p}_\ell}(\xi)\ol{\wh{b^{p}_\ell}(\xi+\pi)}+
\wh{b^{n}_\ell}(\xi)\ol{\wh{b^{n}_\ell}(\xi+\pi)}\Big)=0,\label{cond0:1}\\
&
\sum_{\ell=1}^2 \Big( \wh{b^{p}_\ell}(\xi)\ol{\wh{b^{p}_\ell}(\xi+\tfrac{\pi}{2})}+
\wh{b^{n}_\ell}(\xi)\ol{\wh{b^{n}_\ell}(\xi+\tfrac{\pi}{2})}\Big)=0,\label{cond0:2}\\
&
\sum_{\ell=1}^2\Big(\wh{b^{p}_\ell}(\xi)\ol{\wh{b^{p}_\ell}(\xi+\tfrac{3\pi}{2})}
+\wh{b^{n}_\ell}(\xi)\ol{\wh{b^{n}_\ell}(\xi+\tfrac{3\pi}{2})}\Big)=0.\label{cond0:3}
\end{align}
By the definition of the bump function, it is easy to check that the identity in \eqref{partition} holds. By our assumption in \eqref{ctfR:condition}, we see that for all $\xi\in \R$,
\be \label{a:nonoverlapping}
\wh{a}(\xi)\wh{a}(\xi+\pi)=0,\quad \wh{a^p}(\xi)\wh{a^p}(\xi+\tfrac{\gamma\pi}{2})=0,\quad
\wh{a^n}(\xi)\wh{a^n}(\xi+\tfrac{\gamma\pi}{2})=0, \qquad \forall\; \gamma=1,2,3
\ee
and
\be \label{b:nonoverlapping}
\wh{u}(\xi)\wh{u}(\xi+\tfrac{\gamma \pi}{2})=0, \qquad \forall\, \gamma=1,2,3,\; u\in \{b^{p}_1, b^{p}_2, b^{n}_1, b^{n}_2\}.
\ee
Therefore, all the three identities in \eqref{cond0:1}--\eqref{cond0:3} trivially hold. Thus, $\{a \sampled 2; b^{p}_1\sampled 4, b^{p}_2\sampled 4, b^{n}_1\sampled 4, b^{n}_2\sampled 4\}$
is a tight framelet filter bank.

By Theorem~\ref{thm:dft:pr}, $\{a^p \sampled 4, a^n \sampled 4; b^{p}_1\sampled 4, b^{p}_2\sampled 4, b^{n}_1\sampled 4, b^{n}_2\sampled 4\}$ is a tight framelet filter bank if and only if
\be \label{partition2}
|\wh{a^p}(\xi)|^2+|\wh{a^n}(\xi)|^2+|\wh{b^{p}_1}(\xi)|^2
+|\wh{b^{p}_2}(\xi)|^2
+|\wh{b^{n}_1}(\xi)|^2+|\wh{b^{n}_2}(\xi)|^2
=1
\ee
and for all $\gamma=1,2,3$,
\be \label{cond0}
\wh{a^{p}}(\xi)\ol{\wh{a^{p}}(\xi+\tfrac{\gamma\pi}{2})}+
\wh{a^{n}}(\xi)\ol{\wh{a^{n}}(\xi+\tfrac{\gamma\pi}{2})}+
\sum_{\ell=1}^2\Big(\wh{b_{\ell}^{p}}(\xi)\ol{\wh{b_{\ell}^{p}}(\xi+\tfrac{\gamma\pi}{2})}
+
\wh{b_{\ell}^{n}}(\xi)\ol{\wh{b_{\ell}^{n}}(\xi+\tfrac{\gamma\pi}{2})}
\Big)=0.
\ee
By the definition of the bump function, it is easy to check that the identity in \eqref{partition2} holds. It also follows directly from \eqref{a:nonoverlapping} and \eqref{b:nonoverlapping} that \eqref{cond0} trivially holds.
Hence, $\{a^p \sampled 4, a^n \sampled 4; b^{p}_1\sampled 4, b^{p}_2\sampled 4, b^{n}_1\sampled 4, b^{n}_2\sampled 4\}$ is a tight framelet filter bank.

Using \eqref{ctfR:condition} and \eqref{ctfR:fs}, by calculation we can
directly check that the ideal frequency separation property in \eqref{ctfR:ifs} holds.
\end{proof}

We now discuss the tensor product tight framelet filter bank $\tpctf_6^{\reduced}$ derived from the one-dimensional tight framelet filter banks in Theorem~\ref{thm:ctfR6}.
Define $\tpctf\mbox{-HP}^{\reduced}_{6}$ to be the set consisting of total
$6^d-2^d$ complex-valued high-pass filters as follows:
\[
\tpctf\mbox{-HP}^{\reduced}_{6}:=\left(\otimes^d \{a^p, a^n, b^{p}_1, b^{p}_2, b^{n}_1, b^{n}_2\}\right)\bs \left(\otimes^d \{a^p, a^n\}\right).
\]
Then the directional tensor product complex tight framelet filter bank $\tpctf_6^{\reduced}$ (called reduced $\tpctf_6$ or $\tpctf_6$ down $4$) for dimension $d$ is defined to be
\be\label{tpctf6R}
\tpctf_6^{\reduced}:=\{\otimes^d a \sampled 2I_d; u \sampled 4I_d\; \mbox{with}\; u\in \tpctf\mbox{-HP}^{\reduced}_{6}\}.
\ee
Note that the low-pass filter $\otimes^d a$ is real-valued and
due to the relations in \eqref{filter:pn}, we see that
$\ol{u}\in \tpctf\mbox{-HP}_{6}^{\reduced}$ for any $u\in \tpctf\mbox{-HP}_{6}^{\reduced}$.
Therefore, we can always rewrite the tight framelet filter bank $\tpctf_6^{\reduced}$ as
\[
\tpctf_6^{\reduced}=\{\otimes^d a\sampled 2I_d;
u\sampled 4I_d, \ol{u}\sampled 4I_d\; \mbox{with}\; u\in \tpctf\mbox{-CHP}_{6}^{\reduced}\},
\]
where $\tpctf\mbox{-CHP}_{6}^{\reduced}$ is a subset of $\tpctf\mbox{-HP}_{6}^{\reduced}$ and has exactly $\tfrac{6^d-2^d}{2}$ filters.
Consequently, the complex-valued tight framelet filter bank $\tpctf_6^{\reduced}$ is essentially equivalent to the following real-valued tight framelet filter bank:
\be \label{ctfR:real}
\{\otimes^d a \sampled 2I_d; \sqrt{2}\re(u) \sampled 4I_d, \sqrt{2}\im(u) \sampled 4I_d \; \mbox{with}\; u\in \tpctf\mbox{-CHP}_{6}^{\reduced}\}.
\ee
Therefore, we essentially have only total $(6^d-2^d)/2$ number of complex-valued high-pass filters in $\tpctf\mbox{-HP}_{6}^{\reduced}$.  Thus, the number of real coefficients (by counting a complex number as two real numbers) produced by all the complex-valued filters in $\tpctf_6^{\reduced}$ is the same as those produced by
the real-valued tight framelet filter bank in \eqref{ctfR:real}.
That is, up to a multiplicative constant $\sqrt{2}$, $\tpctf\mbox{-HP}_{6}^{\reduced}$ produces exactly the same set of real coefficients (by identifying a complex number with two real numbers: its real and imaginary parts) as the $6^d-2^d$ real-valued filters in \eqref{ctfR:real} do.
Note that the sampling matrix is $4I_d$ for all high-pass filters from $\otimes^d\{a^p, a^n, b^{p}_1, b^{p}_2, b^{n}_1, b^{n}_2\}$, while we only perform sampling by $2I_d$ for the low-pass filter $\otimes^d a$. Consequently, regardless of the decomposition level, the redundancy rate of the fast framelet transform employing $\tpctf_6^{\reduced}$ for dimension $d$ is no more than
\[
\frac{6^d-2^d}{4^d} \sum_{j=0}^\infty \frac{1}{2^{dj}}=\frac{3^d-1}{2^d-1}.
\]
For example, the redundancy rates of $\tpctf_6^{\reduced}$ are $2, 2\mathord{\frac{2}{3}}, 3\mathord{\frac{5}{7}}, 5\mathord{\frac{1}{3}}$ and $7\mathord{\frac{25}{31}}$ for $d=1,\ldots, 5$, respectively. See Table~\ref{tab:redundancy} for more details on the redundancy rates of $\tpctf_6^{\reduced}$. Note that the redundancy rate of the original $\tpctf_6$ is $2^d$ times that of $\tpctf_6^{\reduced}$ for dimension $d$.

\section{Numerical Experiments on Image and Video Processing}
\label{sec:experiments}

In this section, we shall test the performance of our constructed directional tensor product complex tight framelet $\tpctf_6^{\reduced}$ with low redundancy in Section~\ref{sec:tpctf} and compare it with many other frame-based methods for image and video processing such as the denoising and inpainting problems.


For the directional tensor product complex tight framelet $\tpctf_6^{\reduced}$ with low redundancy that will be used in this section for image and video processing, the parameters in Theorem~\ref{thm:ctfR6} are set to be
\be \label{ctf6R:parameters}
c_1 = \tfrac{\pi}{2}-0.425, \quad
c_2 = 2.0,\quad
\varepsilon_0 = 0.125, \quad
\varepsilon_1 = 0.3, \quad
\varepsilon_2 = 0.35, \quad
\varepsilon_3 = 0.0778.
\ee
Note that the above parameters satisfy the conditions in both \eqref{ctfR:condition} and \eqref{ctfR:fs}. The parameters for other $\tpctf_m$ are set to be the same as those in the paper \cite{HanZhao:siims:2014}. For the convenience of the reader, we explicitly list these parameters here: For $\tpctf_3$, we set
\[
c_1=\tfrac{33}{32}, \quad c_2=\pi, \quad \gep_1=\tfrac{69}{128}, \quad \gep_2=\tfrac{51}{512}.
\]
For $\tpctf_6$, we set
\[
c_1=\tfrac{119}{128}, \quad c_2=\tfrac{\pi}{2}+\tfrac{119}{256},\quad c_3=\pi,\quad \gep_0=\tfrac{35}{128},\quad \gep_1=\tfrac{81}{128}, \quad \gep_2=\tfrac{115}{256},\quad \gep_3=\tfrac{115}{256}.
\]
To have some ideas about the filters in $\ctf_6^{\reduced}$, see Figure~\ref{fig:ctf6R:1D} for the frequency response of the filters in $\ctf_6^{\reduced}$.
For the directionality of $\tpctf_6^{\reduced}$ in dimension two, see Figure~\ref{fig:ctf6R:2D} for some elements of $\das_J(\tpctf_6^{\reduced})$ in dimension two with $J=5$.

\begin{figure}[th]
\centering
\includegraphics[width=9cm,height=4cm]{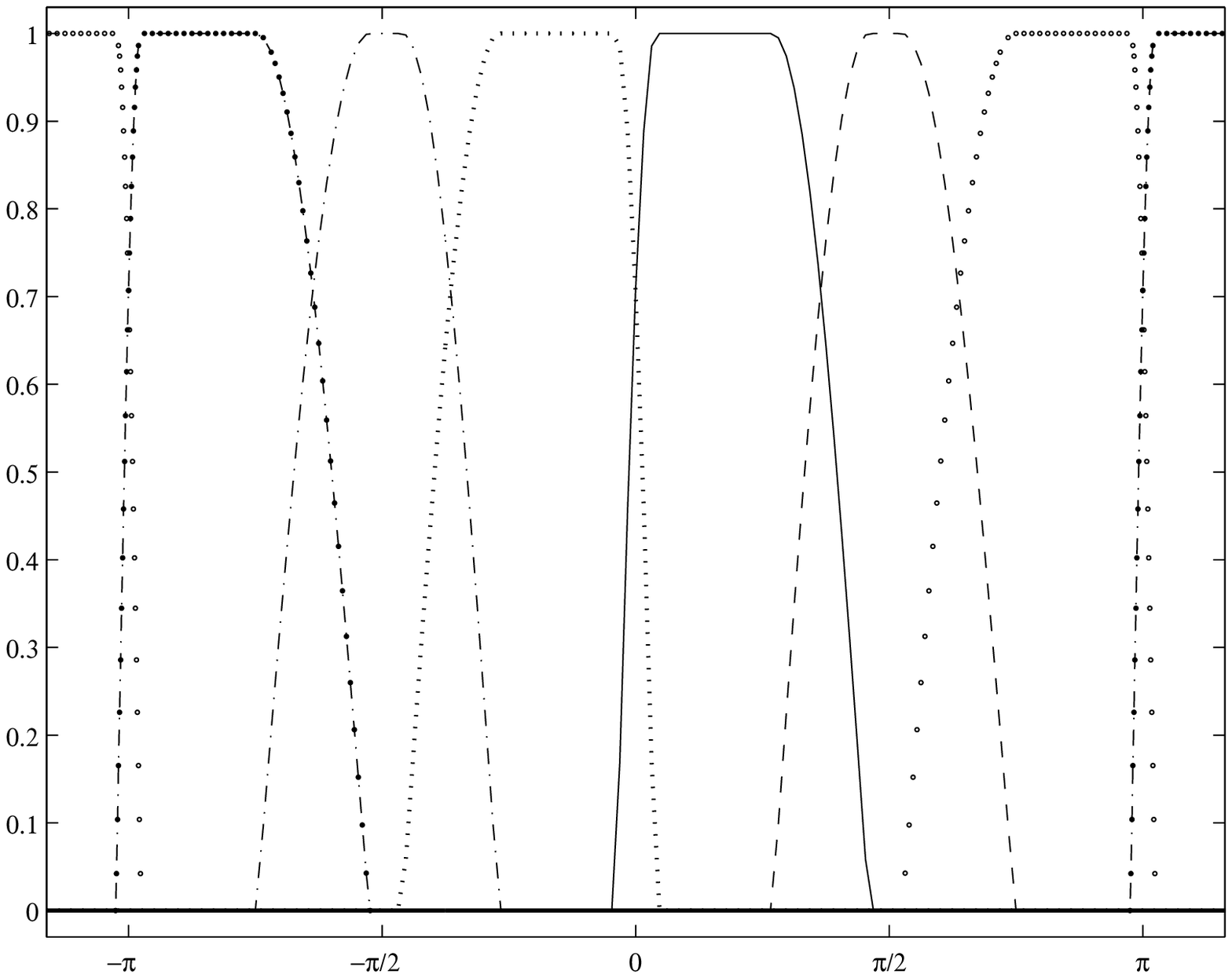}
\caption{The one-dimensional tight framelet filter bank $\ctf_6^{\reduced}=\{a^p \sampled 4,a^n \sampled 4; b^{p}_1 \sampled 4,b^{p}_2\sampled 4,b^{n}_1\sampled 4,b^{n}_2\sampled 4\}$ in Theorem~\ref{thm:ctfR6} with parameters in \eqref{ctf6R:parameters}.  Solid line for $\wh{a^p}$, dotted line for $\wh{a^n}$, dashed line for $\wh{b^{p}_1}$, dash-dotted line for $\wh{b^{n}_1}$, circled line for $\wh{b^{p}_2}$, and
circle-dotted line for $\wh{b^{n}_2}$.}\label{fig:ctf6R:1D}
\end{figure}

\begin{figure}[th]
\centering
\includegraphics[width=16cm,height=8cm]{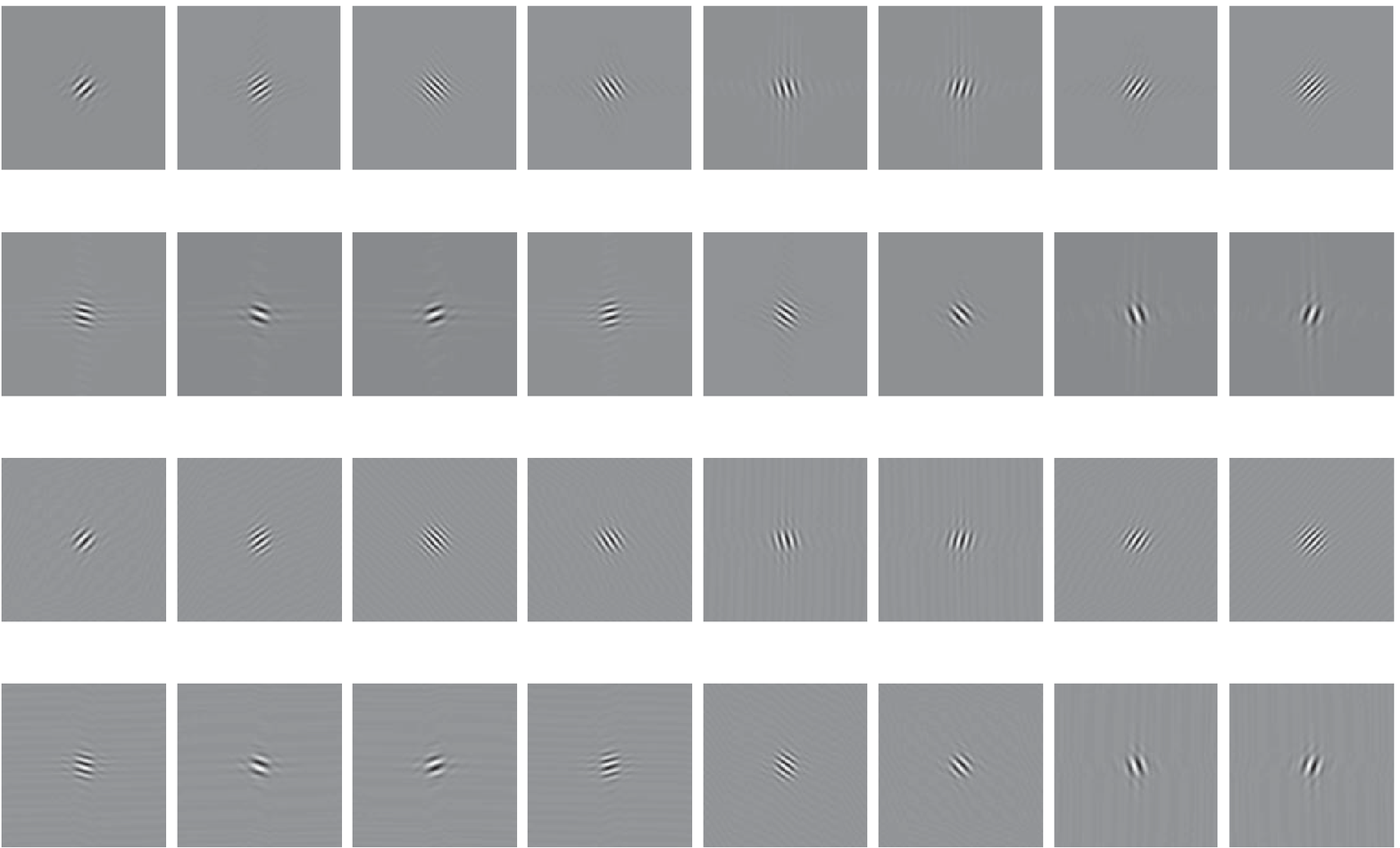}
\caption{The first two rows show the real part and the last two rows show the imaginary part of the 2D high-pass filters at the level $4$ in $\das_5(\tpctf_6^{\reduced})$ for dimension two. Among these $16$ graphs for the first two rows or the last two rows, the directions along $\pm 45^\circ$ are repeated once. Hence, there are a total of $14$ directions in the 2D discrete affine system $\das_5(\tpctf_6^{\reduced})$.}\label{fig:ctf6R:2D}
\end{figure}

As usual, the performance is measured by the peak signal-to-noise ratio (PSNR) which is defined to be
\begin{equation}\label{def:PSNR}
\mathrm{PSNR}(u,\mathring u) = 10\log_{10}\frac{255^2}{\mathrm{MSE}(u-\mathring u)},
\end{equation}
where $u$ is an original/true image supported on $[1,N]^2$, $\mathring u$ is a reconstructed data, and $\mathrm{MSE}(u-\mathring{u})=\frac{1}{N^2}\sum_{j=1}^{N} \sum_{k=1}^{N} |u(j,k)-\mathring{u}(j,k)|^2$ is the mean squared error.

\subsection{Image denoising and image inpainting}
We first compare the performance of $\tpctf_6^{\reduced}$ for image denoising. We compare the performance of $\tpctf_6^{\reduced}$ with two groups of different approaches. The first group uses tensor-product approach including
$\tpctf_3$ (which has the same redundancy rate $2\frac23$ as that of $\tpctf_6^{\reduced}$) and $\tpctf_6$ (which has the same directionality as $\tpctf_6^{\reduced}$ but has a higher redundancy rate $10\frac{2}{3}$), as well as the dual tree complex wavelet transform ($\dtcwt$) in \cite{SBK} (which has the redundancy rate $4$). The second group employs non-tensor-product approach including curvelets, shearlets, and smooth affine shear tight frames. Curvelets in \cite{FDCT} and compactly supported shearlets in \cite{LimIEEE,LimIEEE2} can be downloaded from the corresponding authors' websites. We download each of their packages and run their denoising codes for test images. Smooth affine shear tight frames (ASTF) are developed by two of the authors of this paper in \cite{HanZhuang:acha:2014}.

The CurveLab package at \texttt{http://www.curvelab.org} has two subpackages: one uses un-equispace FFT and the other uses frequency wrapping. Here we use the frequency wrapping package; detailed information on CurveLab package can be found in \cite{FDCT}. The  performance of these two subpackages are very close to each other (less than $0.2$dB differences) and here we choose the one with the frequency wrapping for comparison. The total number of scales is 5. At the finest scale level, the CurveLab uses an isotropic wavelet transform to avoid checkerboard effect.  At the scale level 4, 32 (angular) directions are used. At the scale levels 3 and 2, 16 (angular) directions are used. At the coarsest scale level, 8 (angular) directions are used. The redundancy rate of the CurveLab wrapping package is about 2.8.

 The ShearLab package at \texttt{http://www.shearlab.org} also has many subpackages for different implementations. Here we choose two subpackages using compactly supported shearlets. One is DST as described in \cite{LimIEEE} and the other is DNST as described in \cite{LimIEEE2}.  The DNST in \cite{LimIEEE2} has the best performance so far in the ShearLab package. For DST, the total number of scales is 5. Ten shear directions are used across all scale levels. The redundancy rate of the DST is 40. For DNST, the total number of scales is 4. Sixteen shear directions are used for the finest scale levels 4 and 3; while 8 shear directions are used for the other two scale levels. All filters are implemented in an undecimated fashion. The redundancy rate of DNST is 49.

For the smooth affine shear tight frames (ASTF) in \cite{HanZhuang:acha:2014},  we use total 16 shear directions for the finest scale level.  For the next three scales, we use 8 shear directions, and for the coarsest scale level, we use 4 shear directions. The redundancy rate of this system is about 5.4. See \cite{HanZhuang:acha:2014} for more details.

The decomposition levels for all directional tensor product complex tight framelets $\tpctf_m$ are set to be $J=5$, while the decomposition level for the dual tree complex wavelet transform is set to be $J=6$ (see \cite{SBK,SS:bslocal}). We use symmetric boundary extension for all test images to avoid the boundary effect with the boundary extension size for all test images being $16$ pixels. The strategy for processing frame coefficients for all tensor-product transforms is the bivariate shrinkage proposed in \cite{SS:bslocal} with window size $7\times 7$ and constant $\sqrt{3}$. Let $\sigma$ denote the standard deviation of the i.i.d. Gaussian noise.
More precisely, a frame coefficient $c$ is processed by the bivariate shrinkage function $\eta_\lambda^{bs}$ as follows:
\be \label{bs}
\eta^{bs}_\lambda(c)=\eta_{\lambda_c}^{soft}(c)
=\begin{cases}
c-\lambda_c \tfrac{c}{|c|}, &|c|> \lambda_c,\\
0, &\text{otherwise},
\end{cases}
\qquad \mbox{with}\quad
\lambda_c:=\frac{\sqrt{3}\sigma_n^2}{\sigma_c \sqrt{1+|c_p/c|^2}},
\ee
where $\sigma_n:=\sigma \|b\|_2$ with $b$ being the high-pass filter inducing the frame coefficient $c$,
the frame coefficient $c_p$ is the parent coefficient of $c$ in the immediate higher scale, and
\[
\sigma_c:=\begin{cases} \sqrt{\breve{\sigma}_c^2-\sigma_n^2}, &\breve{\sigma}_c> \sigma_n,\\
0, &\text{otherwise}
\end{cases}
\qquad \mbox{with}\qquad
\breve{\sigma}_c^2:=\frac{1}{\#N_c}\sum_{j\in N_c} |c_j|^2,
\]
where $\#N_c$ is the cardinality of the set $N_c$ which is the $[-3,3]^2$ window centering around the frame coefficient $c$ at the band induced by the filter $b$.

See Figure~\ref{fig:testimages} for the four $512\times 512$ grayscale test images: \emph{Barbara}, \emph{Lena}, \emph{Fingerprint}, and \emph{Boat}. The comparison results of performance are reported in
Table~\ref{tab:imagedenoising} for image denoising under independent identically distributed
Gaussian noise with noise standard deviation $\sigma=5,10,25,40,50,80,100$.

\begin{figure}[th]
\centering
\subfigure[Barbara]{\includegraphics[width=2.5cm]{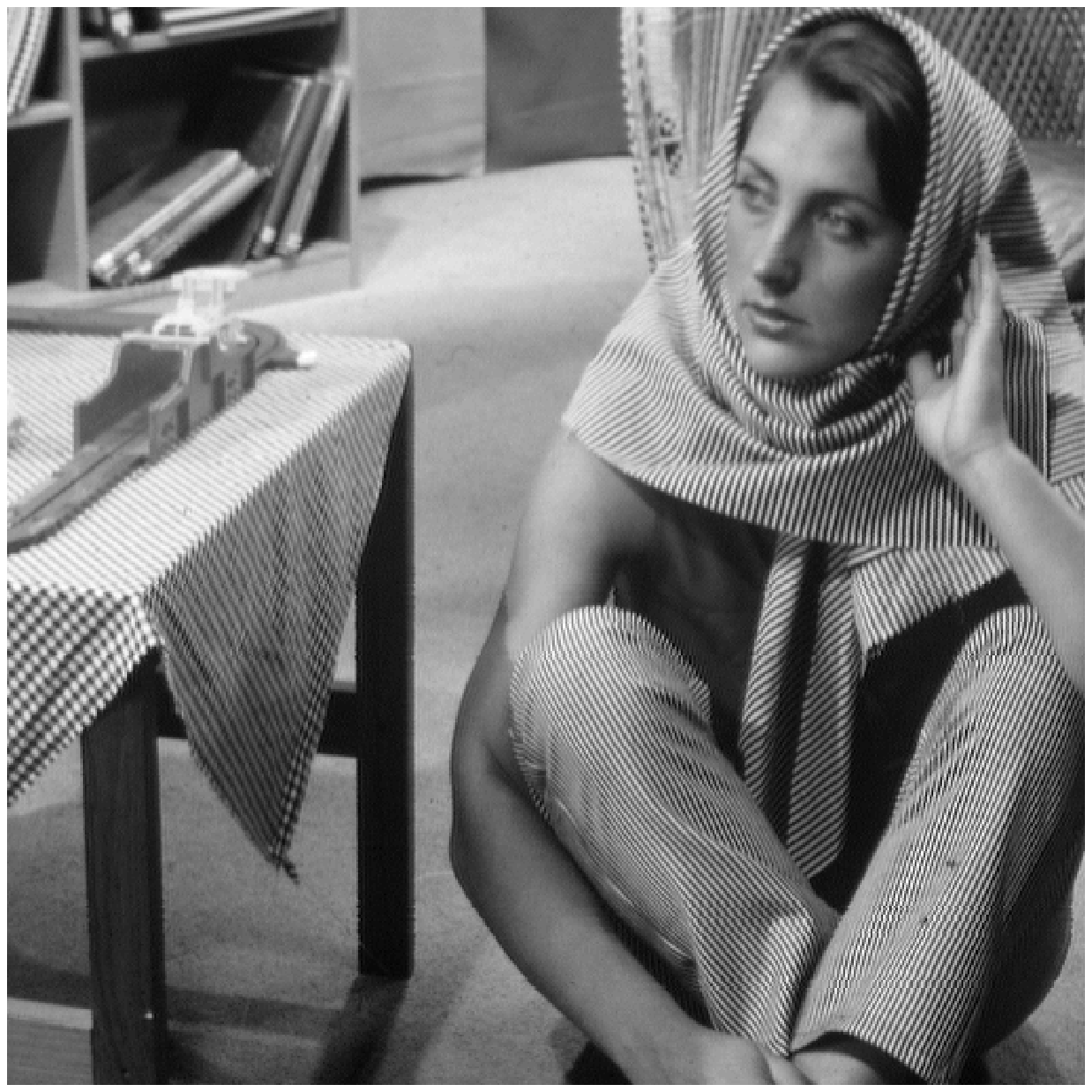}}
\subfigure[Lena]{\includegraphics[width=2.5cm]{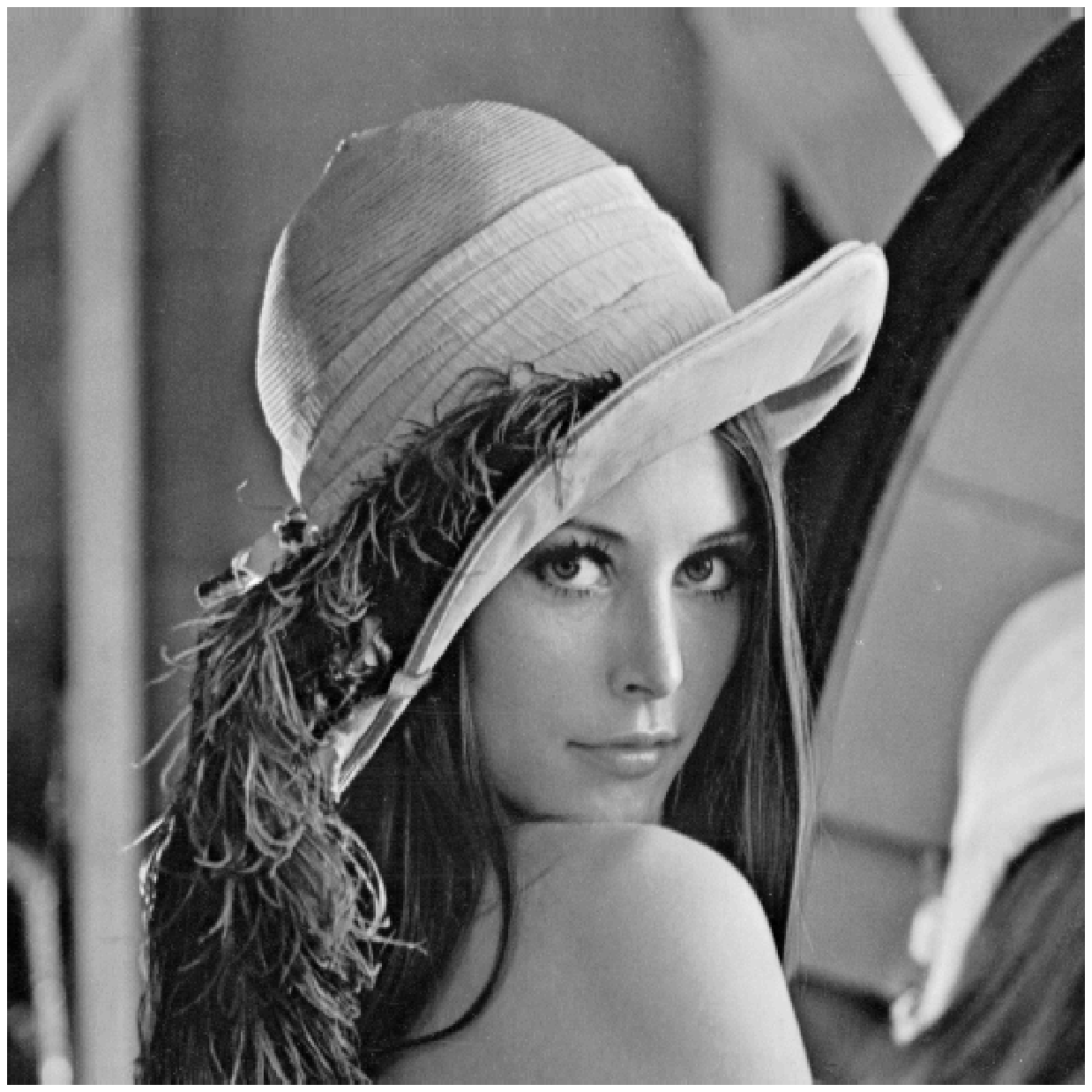}}
\subfigure[Fingerprint]{\includegraphics[width=2.5cm]{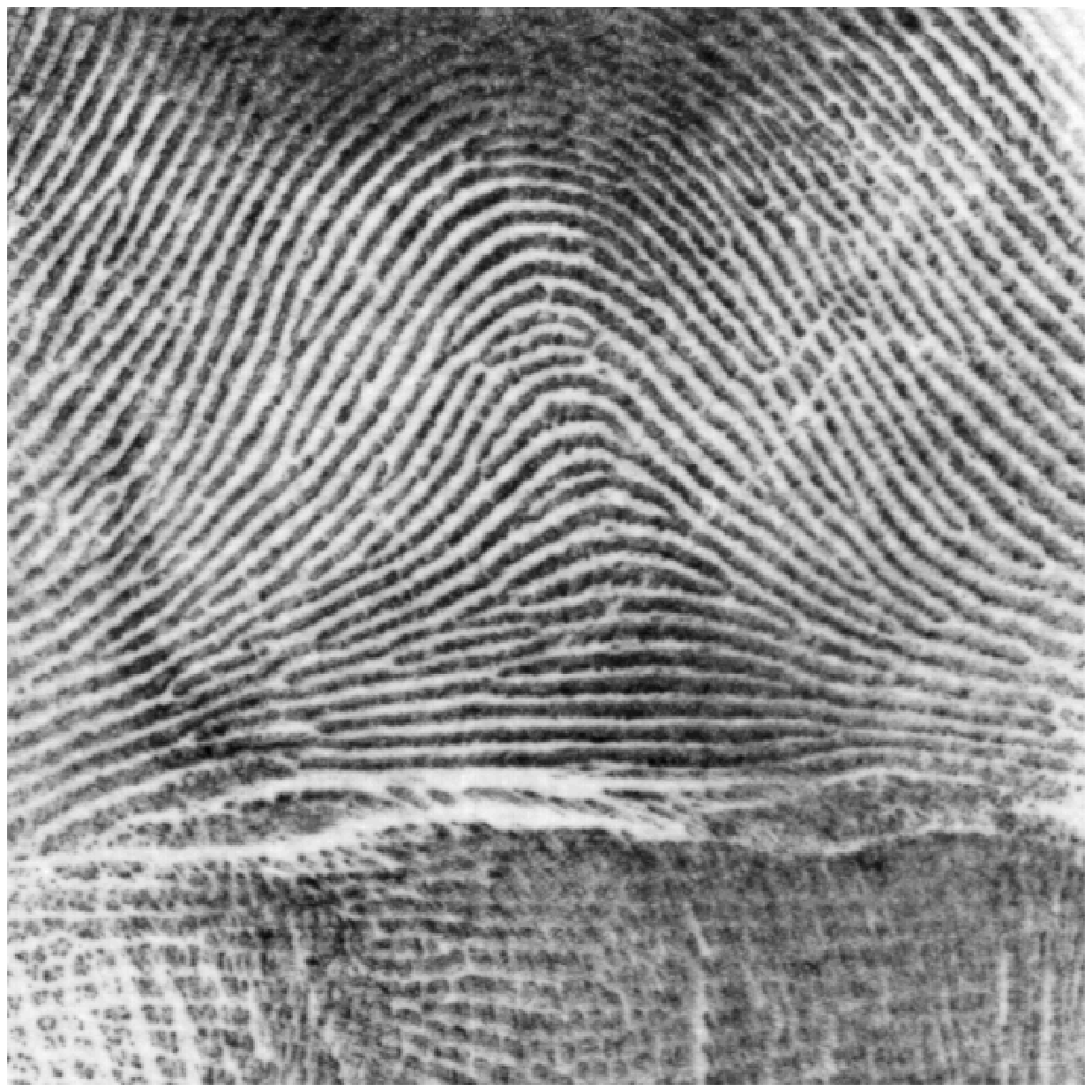}}
\subfigure[Boat]{\includegraphics[width=2.5cm]{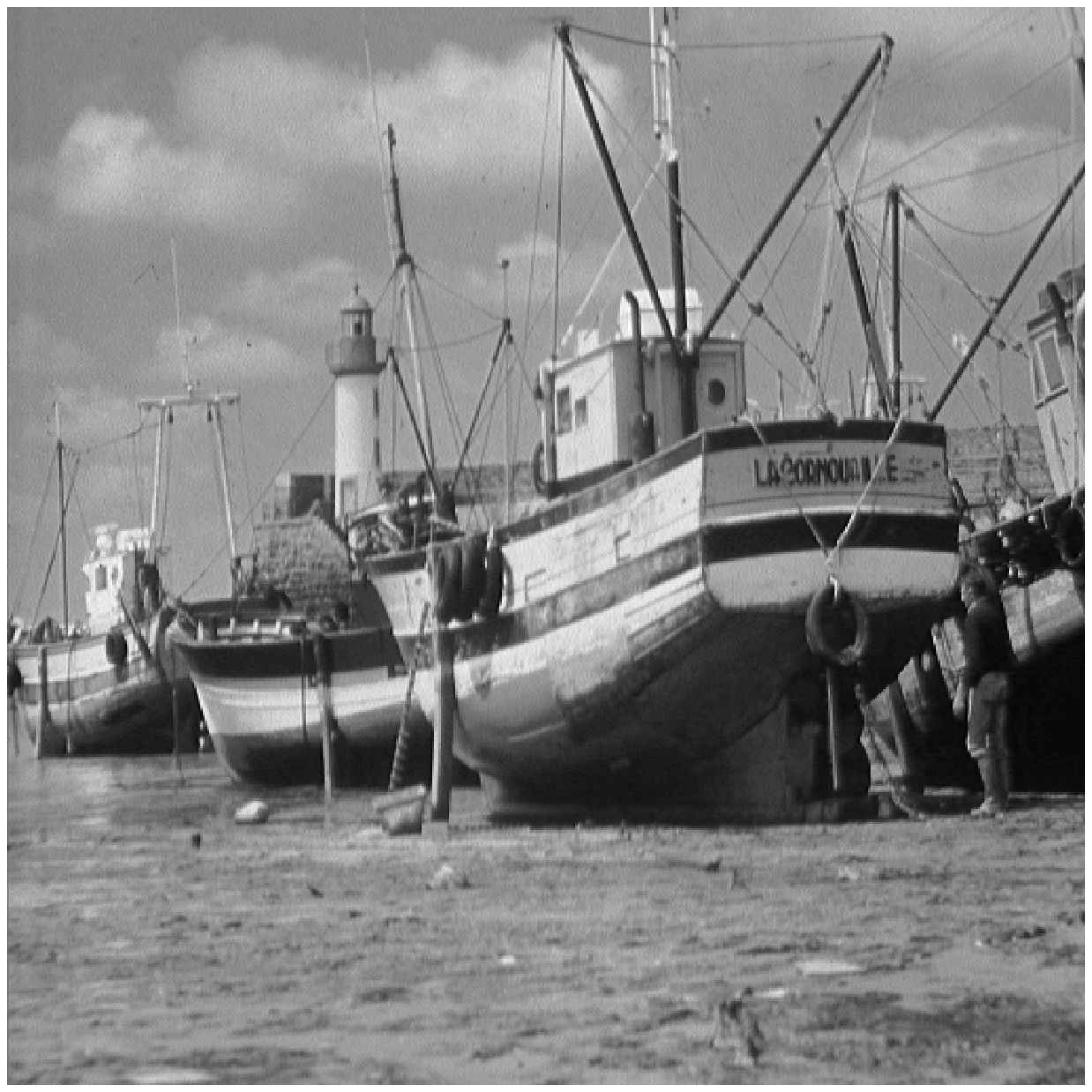}}\\
\subfigure[Mobile]{\includegraphics[width=2.5cm]{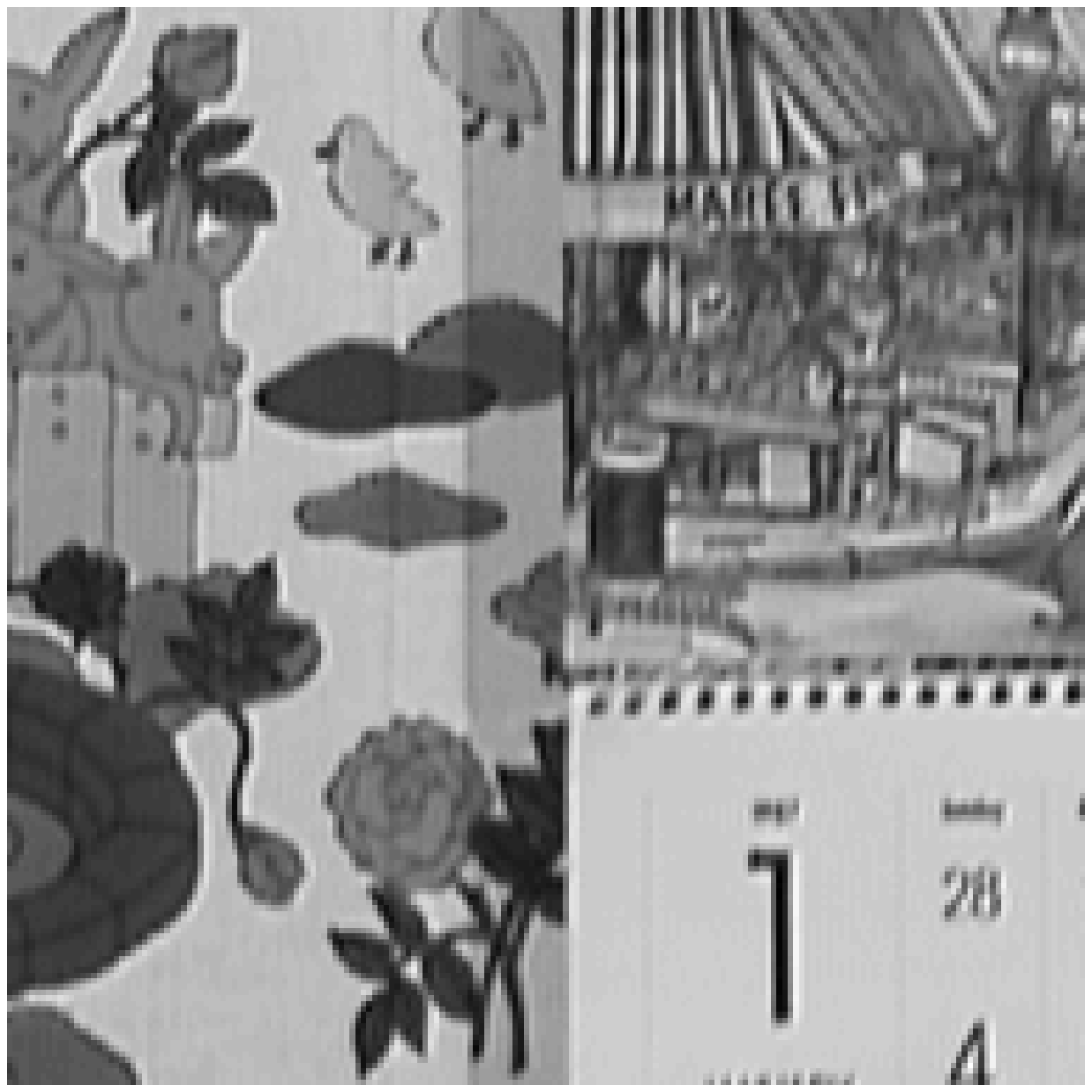}}
\subfigure[Coastguard]{\includegraphics[width=2.5cm]{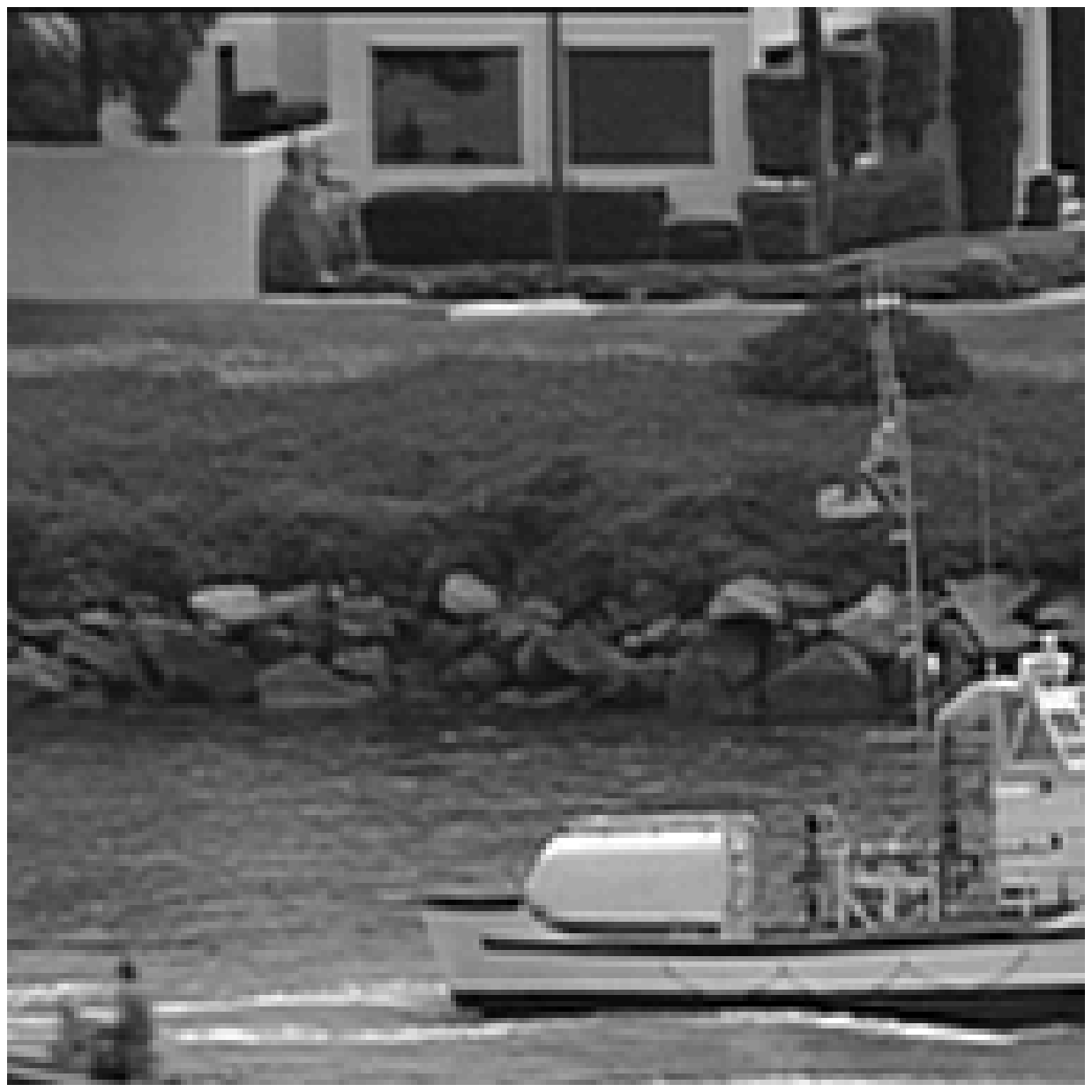}}
\subfigure[Text~1]{\includegraphics[width=2.5cm]{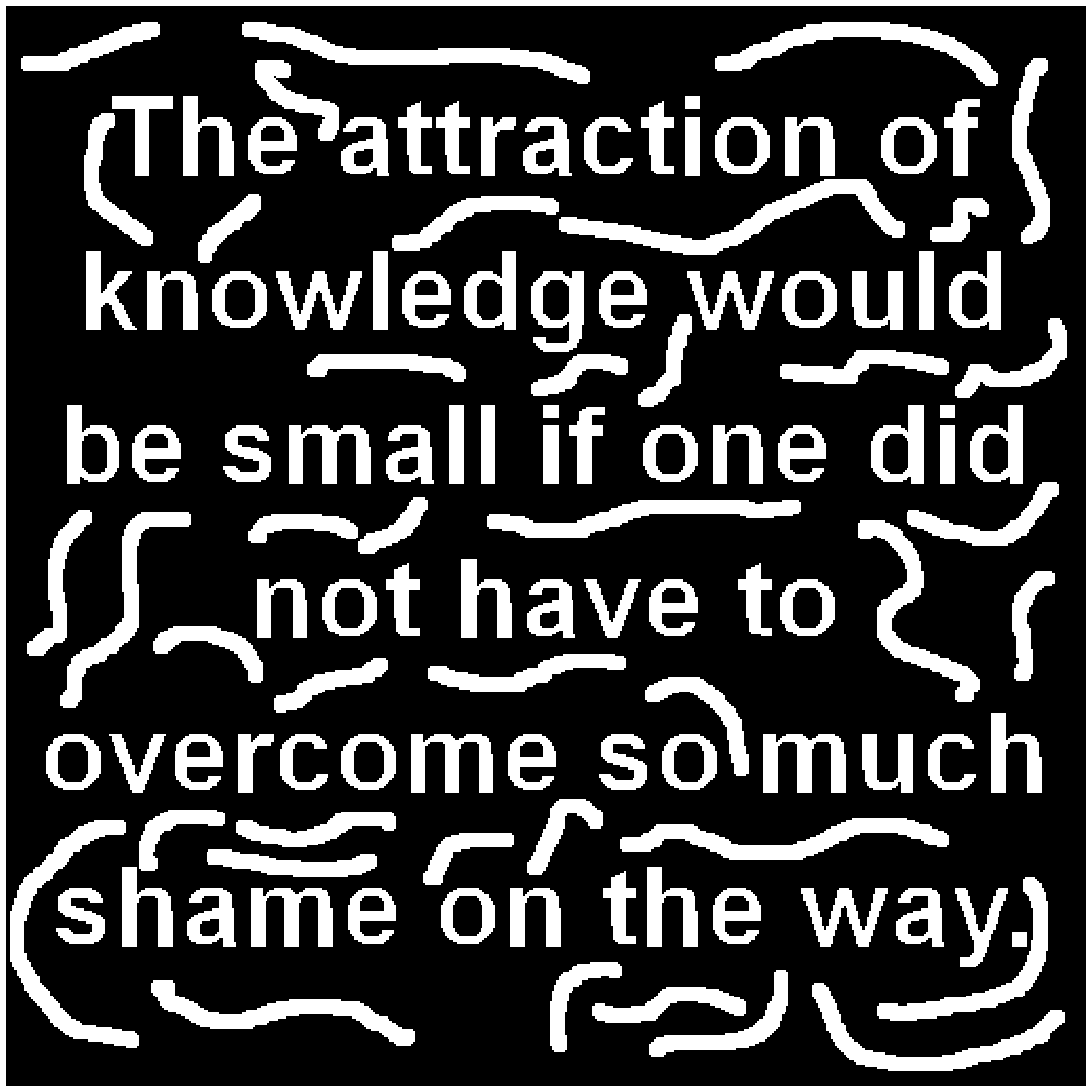}}
\subfigure[Text~2]{\includegraphics[width=2.5cm]{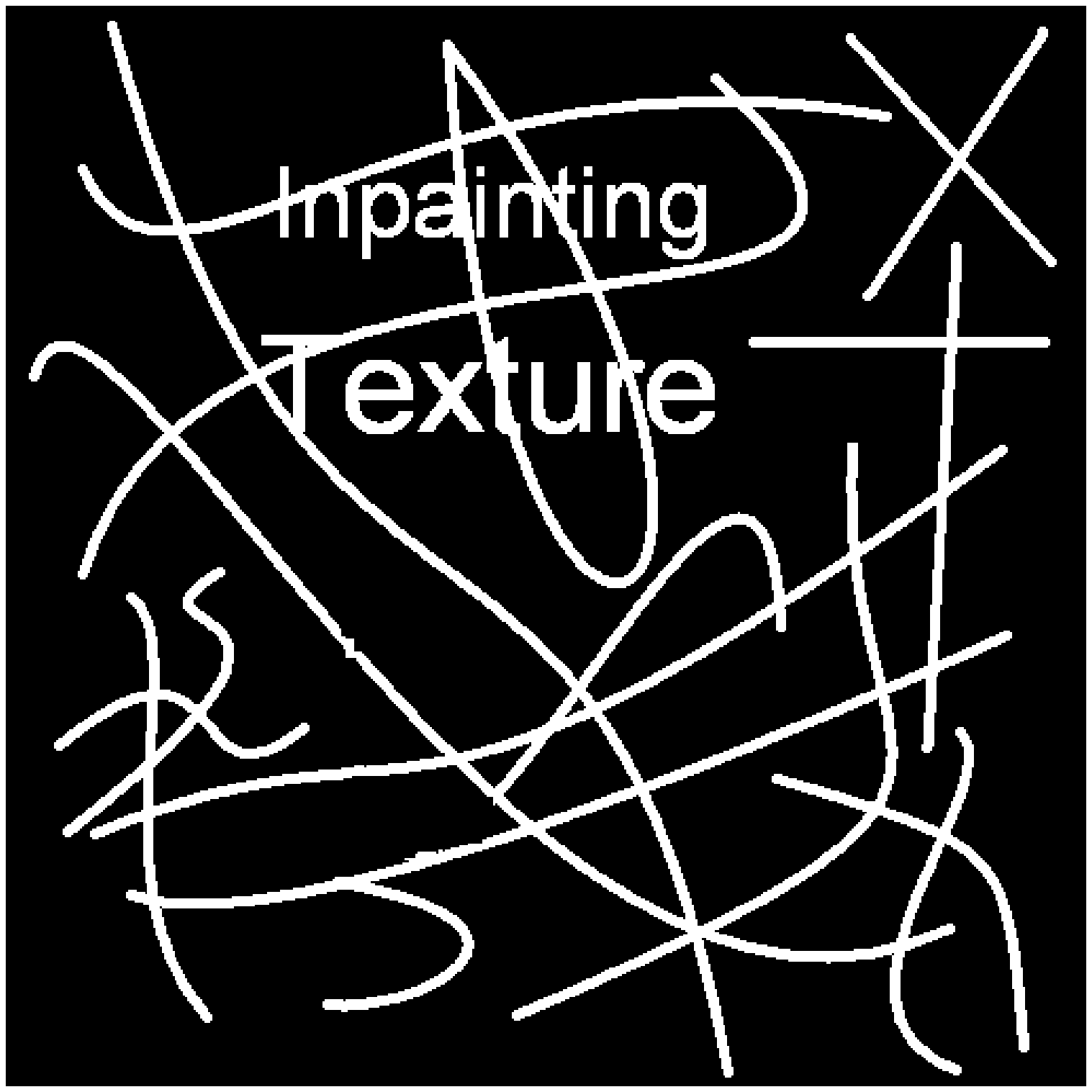}}
\caption{(a)-(d) are the four $512\times 512$ grayscale test images: \emph{Barbara}, \emph{Lena}, \emph{Fingerprint}, and \emph{Boat}. (e)-(f) are the first frame of the $192\times 192\times 192$ videos: \emph{Mobile} and \emph{Coastguard}. (e) and (f) are inpainting masks of size $512\times 512$.
}\label{fig:testimages}
\end{figure}

\begin{table}[htbp]
  \begin{small}
  \begin{center}
    \begin{tabular}{|c||c||c|c|c||c|c|c|c|}
\hline
    &\multicolumn{8}{|c|}{$512\times 512$ Barbara} \\
    \hline
    $\sigma$ & $\tpctf_6^{\reduced}$  & $\tpctf_6$& $\tpctf_3$& $\dtcwt$& CurveLab  & DST& DNST& ASTF\\
    \hline
    5     & 37.63 & 37.84(-0.21) & 37.16(0.47) & 37.37(0.26) & 33.83(3.80) & 37.76(-0.13) & 37.17(0.46) & 37.40(0.23)\\
    10    & 33.97 & 34.18(-0.21) & 33.19(0.78) & 33.54(0.43) & 29.17(4.80) & 33.94(0.03) & 33.62(0.35) & 33.74(0.23)\\
    25    & 29.28 & 29.35(-0.07) & 28.04(1.24) & 28.81(0.47) & 24.83(4.45) & 28.90(0.38) & 28.93(0.35) & 29.29(-0.01)\\
    40    & 26.85 & 26.86(-0.01) & 25.53(1.32) & 26.45(0.40) & 23.87(2.98) & 26.36(0.49) & 26.48(0.37) & 27.08(-0.23)\\
    50    & 25.73 & 25.71(0.02)  & 24.48(1.25) & 25.36(0.37) & 23.38(2.35) & 25.22(0.51) & 25.31(0.42) & 26.05(-0.32)\\
    80    & 23.51 & 23.53(-0.02) & 22.82(0.69) & 23.27(0.24) & 22.22(1.29) & 23.11(0.40) & 22.96(0.55) & 23.97(-0.46)\\
    100   & 22.58 & 22.64(-0.06) & 22.25(0.33) & 22.42(0.16) & 21.61(0.97) & 22.23(0.35) & 22.06(0.52) & 23.02(-0.44)\\ \hline
    &\multicolumn{8}{|c|}{$512\times 512$ Lena}\\
    \hline
    5      & 38.16 & 38.37(-0.21) & 37.98(0.18) & 38.25(-0.09) & 35.77(2.39) & 38.22(-0.06	) & 38.01(0.15) & 38.19(-0.03) \\
    10     & 35.22 & 35.48(-0.26) & 34.93(0.29) & 35.19(0.03)  & 33.37(1.85) & 35.19(0.03) & 35.35(-0.13) & 35.18(0.04) \\
    25     & 31.20 & 31.60(-0.40) & 31.17(0.03) & 31.29(-0.09) & 30.07(1.13) & 31.09(0.11) & 31.51(-0.31) & 31.40(-0.20) \\
    40     & 29.10 & 29.52(-0.42) & 29.24(-0.14) & 29.22(-0.12)& 28.15(0.95) & 28.92(0.18) & 29.32(-0.22) & 29.40(-0.30) \\
    50     & 28.11 & 28.54(-0.43) & 28.34(-0.23) & 28.22(-0.11)& 27.19(0.92) & 27.89(0.22) & 28.21(-0.10) & 28.46(-0.35) \\
    80     & 26.11 & 26.47(-0.36) & 26.42(-0.31) & 26.15(-0.04)& 25.16(0.95) & 25.71(0.40) & 25.78(0.33) & 26.44(-0.34) \\
    100    & 25.21 & 25.52(-0.31) & 25.52(-0.31) & 25.20(0.01) & 24.22(0.99) & 24.67(0.54) & 24.58(0.63) & 25.48(-0.27) \\ \hline
    &\multicolumn{8}{|c|}{$512\times 512$ Fingerprint} \\
    \hline
    5     & 36.29 & 36.27(0.02) & 35.29(1.00) & 35.82(0.47) & 33.35(2.94) & 36.02(0.27) & 35.28(1.01) & 35.20(1.09) \\
    10    & 32.23 & 32.10(0.13) & 30.97(1.26) & 31.74(0.49) & 30.61(1.62) & 31.95(0.28) & 31.76(0.47) & 30.97(1.26) \\
    25    & 27.27 & 26.98(0.29) & 26.56(0.71) & 27.26(0.01) & 26.03(1.24) & 27.04(0.23) & 27.10(0.17) & 26.95(0.32) \\
    40    & 25.02 & 24.68(0.34) & 24.75(0.27) & 24.98(0.04) & 23.92(1.10) & 24.79(0.23) & 24.82(0.20) & 25.01(0.01) \\
    50    & 24.01 & 23.67(0.34) & 23.84(0.17) & 23.95(0.06) & 23.00(1.01) & 23.77(0.24) & 23.78(0.23) & 24.07(-0.06) \\
    80    & 21.99 & 21.66(0.33) & 21.73(0.26) & 21.91(0.08) & 21.18(0.81) & 21.65(0.34) & 21.63(0.36) & 22.11(-0.12) \\
    100   & 21.09 & 20.75(0.34) & 20.69(0.40) & 21.01(0.08) & 20.37(0.72) & 20.63(0.46) & 20.56(0.53) & 21.22(-0.13) \\ \hline
    &\multicolumn{8}{|c|}{$512\times 512$ Boat}\\
    \hline
    5      & 36.74 & 36.92(-0.18) & 36.45(0.29) & 36.73(0.01)   & 33.59(3.15) & 36.51(0.23	) & 36.04(0.70) & 36.66(0.08) \\
    10     & 33.10 & 33.41(-0.31) & 32.97(0.13) & 33.19(-0.09)  & 30.60(2.50) & 33.07(0.03) & 33.15(-0.05) & 33.07(0.03) \\
    25     & 28.81 & 29.26(-0.45) & 28.98(-0.17) & 29.03(-0.22) & 27.51(1.30) & 28.75(0.06	) & 29.23(-0.42) & 29.10(-0.29) \\
    40     & 26.72 & 27.19(-0.47) & 26.98(-0.26) & 26.99(-0.27) & 25.96(0.76) & 26.71(0.01	) & 27.20(-0.48) & 27.14(-0.42) \\
    50     & 25.79 & 26.25(-0.46) & 26.07(-0.28) & 26.06(-0.27) & 25.18(0.61) & 25.78(0.01	) & 26.23(-0.44) & 26.23(-0.44) \\
    80     & 24.05 & 24.41(-0.36) & 24.29(-0.24) & 24.22(-0.17) & 23.55(0.50) & 23.90(0.15	) & 24.17(-0.12) & 24.41(-0.36) \\
    100    & 23.27 & 23.58(-0.31) & 23.50(-0.23) & 23.39(-0.12) & 22.79(0.48) & 23.05(0.22	) & 23.17(0.10) & 23.57(-0.30) \\
    \hline
\end{tabular}%
\medskip
\caption{Comparison results, in terms of PSNR values, of several image denoising methods using our proposed directional tensor product complex tight framelet $\tpctf_6^{\reduced}$ with the redundancy rate $2\tfrac{2}{3}$, tensor product complex tight framelets $\tpctf_6$ with the redundancy rate $10\tfrac{2}{3}$ (having the same directionality as $\tpctf_6^{\reduced}$), $\tpctf_3$ with redundancy rate $2\tfrac{2}{3}$ (having the same redundancy rate as $\tpctf_6^{\reduced}$), dual tree complex wavelet transform with the redundancy rate $4$ in \cite{SBK},
CurveLab (Wrap) with redundancy rate 2.8 in \cite{FDCT}, DST with redundancy rate 40 in \cite{LimIEEE}, DNST with redundancy rate 49 in \cite{LimIEEE2}, and ASTF with redundancy rate 5.8 in \cite{HanZhuang:acha:2014}. The $\tpctf_6^{\reduced}, \tpctf_6,\tpctf_3,\dtcwt$ are separable transforms using tensor product tight frames while the CurveLab, DST, DNST, ASTF are nonseparable transforms using 2D non-tensor-product (tight) frames.
The values in parentheses are the PSNR gain/loss of $\tpctf_6^{\reduced}$ over the compared transform: positive numbers in parentheses mean that $\tpctf_6^{\reduced}$ performs better than the corresponding transform, while negative numbers in parentheses mean that $\tpctf_6^{\reduced}$ performs worse than the corresponding transform.}
\label{tab:imagedenoising}%
\end{center}
\end{small}
\end{table}%

For texture-rich test images such as \emph{Barbara} and \emph{Fingerprint}, we can see from Table~\ref{tab:imagedenoising} that $\tpctf_6^{\reduced}$ outperforms  $\tpctf_3$,  $\dtcwt$, CurveLab, DST and DNST. It can have up to 1.32dB PSNR value improvement over $\tpctf_3$ for \emph{Barbara} at $\sigma = 40$ and about 0.5dB improvement over $\dtcwt$ for \emph{Fingerprint} at $\sigma = 10$. In comparison with $\tpctf_6$,  $\tpctf_6^{\reduced}$ outperforms $\tpctf_6$ for the test image \emph{Fingerprint} for all $\sigma$ noise levels but has slightly worse performance than $\tpctf_6$ for the test image of \emph{Barbara}. CurveLab (Wrap) also has low redundancy rate, yet its performance is not as good as others for all the test images. DST and DNST have high redundancy rates almost 20 times of that of $\tpctf_6^{\reduced}$. However, for such  images of \emph{Barbara} and \emph{Fingerprint}, the performance of DST and DNST is not as good as our $\tpctf_6^\reduced$. With redundancy rate about 2 times of $\tpctf_6^\reduced$, the performance of ASTF is better than $\tpctf_6^\reduced$ only when the noise level is high $\sigma>40$.

It can be seen from the test images of \emph{Lena} and \emph{Boat} in Figure~\ref{fig:testimages} that most of their edges are concentrating along the horizontal, the vertical, or the two diagonal directions. For such test images, when $\sigma$ is small ($\sigma<40$), the performance of $\tpctf_6^{\reduced}$ is almost the same as  $\tpctf_3$ and $\dtcwt$. Only when $\sigma$ is high ($\sigma\ge 40$), $\tpctf_6^{\reduced}$ performs not as well as $\tpctf_3$ and $\dtcwt$, but generally within less than 0.3dB loss of performance. For comparison among $\tpctf_6^{\reduced}$ and $\tpctf_6$, DNST, ASTF, we see at most 0.48dB loss of performance of $\tpctf_6^{\reduced}$ for both \emph{Lena} and \emph{Boat}. $\tpctf_6^\reduced$ outperform DST and CurveLab  for the test images of \emph{Lena} and \emph{Boat}.

$\tpctf_6$ has recently been used in \cite{SHB:2014} for the image inpainting problem with impressive
performance over many other inpainting algorithms.
Here we simply use the same inpainting algorithm as developed in \cite{SHB:2014} but with $\tpctf_6$
being replaced by $\tpctf_6^{\reduced}$. As similar to most frame-based inpainting algorithms in the literature,
the inpainting algorithm in \cite{SHB:2014} uses iterative thresholding algorithm with gradually decreasing threshold values.
For a detailed description of the inpainting algorithm using $\tpctf_6$,
see \cite{SHB:2014}. For image inpainting without noise,
here we only compare the performance of our $\tpctf_6^{\reduced}$ with three state-of-the-art inpainting algorithms: (1) \cite{SHB:2014} using $\tpctf_6$
with redundancy rate $10\tfrac{2}{3}$.
(2) \cite{LiShenSuter:ieee:2013} using a tight frame built
from the undecimated DCT-Haar wavelet filter which is derived from the discrete cosine transform (DCT) with a block size $7\times 7$ and has the redundancy rate $49$.
(3) DNST in \cite{LimIEEE2} using undecimated compactly supported nonseparable shearlet frames which has the redundancy rate $49$ with $16, 16, 8, 8$ high-pass filters and one low-pass filter. See \cite{CCS:framelet:inpainting,CCS:ipi:2010,LiShenSuter:ieee:2013,LimIEEE2,SHB:2014} for image inpainting and comparison results with other frame-based image inpainting algorithms. The inpainting algorithms in \cite{LiShenSuter:ieee:2013,LimIEEE2,SHB:2014} have been generously provided to us by their own authors.
The numerical results on image inpainting without noise are presented in Table~\ref{tab:image:inpainting}.

\begin{table}[htbp]
\begin{small}
\begin{center}
\begin{tabular}{|c||c||c|c|c||c||c|c|c|}
\hline
&\multicolumn{4}{|c||}{$512\times 512$ Barbara} &\multicolumn{4}{c|}{$512\times 512$ Lena}\\
\hline
& \!$\tpctf_6^{\reduced}$\!  & \! \cite{SHB:2014}($\tpctf_6$)\! &\! \cite{LiShenSuter:ieee:2013}\! & \!\cite{LimIEEE2}(DNST)\!
& \!$\tpctf_6^{\reduced}$\!  & \!\cite{SHB:2014}($\tpctf_6$)\! & \!\cite{LiShenSuter:ieee:2013}\!
& \! \cite{LimIEEE2}(DNST)\! \\ \hline
\mbox{Text~1} & 36.68 & 36.59(0.09) & 35.03(1.65) & 35.17(1.51) & 37.71 & 38.02(-0.31) & 36.73(0.98) & 38.17(-0.46)\\
\mbox{Text~2} & 32.99 & 32.68(0.31) & 31.51(1.48) & 32.45(0.54) & 33.92 & 34.31(-0.39) & 32.10(1.82) & 34.10(-0.18) \\
50\% & 35.75 & 35.73(0.02) & 33.85(1.90) & 34.13(1.62) & 37.68 & 38.00(-0.32) & 37.65(0.03) & 36.49(1.19) \\
80\%  & 28.55 & 28.16(0.39) & 26.39(2.16) & 28.22(0.33) & 31.99 & 32.33(-0.34) & 30.55(1.44) & 31.64(0.35)\\
   \hline
&\multicolumn{4}{|c||}{$512\times 512$ Fingerprint} &\multicolumn{4}{c|}{$512\times 512$ Boat}\\ \hline
\mbox{Text~1} & 31.87 & 31.35(0.52) & 30.44(1.43) & 31.05(0.82) & 34.57 & 34.96(-0.18) & 34.62(0.29) & 34.66(0.01) \\
\mbox{Text~2} & 28.36 & 27.78(0.58) & 26.11(2.25) & 27.17(1.19) & 30.39 & 30.80(-0.31) & 30.35(0.13) & 30.65(-0.09) \\
50\% & 34.19 & 34.12(0.07) & 33.26(0.93) & 31.18(3.01) & 34.00 & 34.42(-0.45) & 34.08(-0.17) & 33.07(-0.22) \\
80\% & 26.77 & 26.00(0.77) & 25.72(1.05) & 25.38(1.39) & 28.03 & 28.58(-0.55) & 27.89(0.14) & 28.01(0.02) \\ \hline
\end{tabular}%
\medskip
\caption{Performance in terms of PSNR values of several image inpainting algorithms without noise.
The first two rows for Text~1 and Text~2 are for the inpainting masks Text~1 and Text~2
in Figure~\ref{fig:testimages}. The last two rows are for $50\%$ or $80\%$ randomly
missing pixels. \cite{SHB:2014} uses $\tpctf_6$ with the redundancy rate $10\tfrac{2}{3}$. $\tpctf_6^{\reduced}$ uses
the same inpainting algorithm as in \cite{SHB:2014} but with $\tpctf_6$
being replaced by $\tpctf_6^{\reduced}$ which has the redundancy rate $2\tfrac{2}{3}$.
\cite{LiShenSuter:ieee:2013} uses a tight frame built
from the undecimated DCT-Haar wavelet filter with redundancy rate $49$.
\cite{LimIEEE2} uses undecimated compactly supported nonseparable shearlet frames with redundancy rate $49$.
}
\label{tab:image:inpainting}%
\end{center}
\end{small}
\end{table}%

We now look at the image inpainting problem with i.i.d. Gaussian noise.
The image inpainting algorithm proposed in \cite{SHB:2014} using $\tpctf_6$ not only performs well for image inpainting without noise but also is stable and works well for the image inpainting problem with noise. On the other hand, most available image inpainting algorithms (e.g., \cite{CCS:framelet:inpainting,CCS:ipi:2010,LiShenSuter:ieee:2013,LimIEEE2} and references therein) are not stable and barely work well for image inpainting with noise.
Similar to image denoising, for image inpainting with noise, though the noise level $\sigma$ can be effectively estimated, we assume for simplicity that the noise level $\sigma$ is known in advance. Such an assumption is commonly adopted in the literature.
As pointed out in \cite{SHB:2014}, there are no parameters to be tuned in the image inpainting algorithm proposed in \cite{SHB:2014} and the image inpainting without noise is simply a special case by taking $\sigma=0$.

As we did for image inpainting without noise, for image inpainting with noise,
we not only test the performance of the inpainting algorithm in \cite{SHB:2014} using $\tpctf_6$ and our modified algorithm using $\tpctf_6^{\reduced}$ but also run the inpainting algorithms in
\cite{LiShenSuter:ieee:2013} and \cite{LimIEEE2}. However, for the noise levels $\sigma=10,\ldots,50$, the inpainting algorithms in \cite{LiShenSuter:ieee:2013} and \cite{LimIEEE2}
often have significantly lower performance than \cite{SHB:2014}. For example, for the test image of Barbara with inpainting mask Text~1 and with the noise level $\sigma=50$, the performance of PNSR values are
$24.91$, $24.91$, $14.66$, $14.48$ for $\tpctf_6^{\reduced}$, \cite{SHB:2014} using $\tpctf_6$, \cite{LiShenSuter:ieee:2013}, \cite{LimIEEE2}, respectively. This indicates that the inpainting algorithms in \cite{LiShenSuter:ieee:2013} and \cite{LimIEEE2} are mainly designed for the image inpainting problem without noise. As a consequence, for image inpainting with noise, we do not report the comparison results using inpainting algorithms in \cite{LiShenSuter:ieee:2013} and \cite{LimIEEE2}. Instead, we only report experimental comparison results using $\tpctf_6^{\reduced}$ with the inpainting algorithm in \cite{SHB:2014} using $\tpctf_6$ in Table~\ref{tab:imageinpaintingnoise}.

\begin{table}[htbp]
  \begin{small}
  \begin{center}
    \begin{tabular}{|c||c|c||c|c||c|c||c|c|} \hline
    &\multicolumn{8}{c|}{$512\times512$ Barbara}\\ \hline
    &\multicolumn{2}{|c||}{Text~1} &\multicolumn{2}{c||}{Text~2} &\multicolumn{2}{c||}{50\% missing} &\multicolumn{2}{c|}{80\% missing}\\ \hline
    $\sigma$ & \!$\tpctf_6^{\reduced}$\!& \!\cite{SHB:2014}($\tpctf_6$)\! & \! $\tpctf_6^{\reduced}$\! &\!\cite{SHB:2014}($\tpctf_6$)\! & \!$\tpctf_6^{\reduced}$\!  & \!\cite{SHB:2014}($\tpctf_6$)\! &\!$\tpctf_6^{\reduced}$\! & \! \cite{SHB:2014}($\tpctf_6$)\!\\
    \hline
    10    & 31.76 & 31.81(-0.05) & 29.99 & 29.85(0.14) & 30.94 & 31.11(-0.17) & 26.56 & 26.70(-0.14) \\
    20    & 29.00 & 28.99(0.01) & 27.76 & 27.71(0.05) & 27.94 & 28.00(-0.06) & 24.48 & 24.70(-0.22) \\
    30    & 27.21 & 27.18(0.03) & 26.24 & 26.24(0.00) & 25.95 & 25.95(0.00) & 23.18 & 23.34(-0.16) \\
    40    & 25.91 & 25.88(0.03) & 25.10 & 25.14(-0.04) & 24.58 & 24.56(0.02) & 22.14 & 22.45(-0.31) \\
    50    & 24.91 & 24.91(0.00) & 24.18 & 24.30(-0.12) & 23.59 & 23.60(-0.01) & 21.42 & 21.90(-0.48) \\
   \hline
    &\multicolumn{8}{c|}{$512\times512$ Lena}\\ \hline
    10    & 33.08 & 33.42(-0.34) & 31.32 & 31.80(-0.48) & 32.86 & 33.40(-0.54) & 29.34 & 30.25(-0.91) \\
    20    & 30.83 & 31.26(-0.43) & 29.58 & 30.10(-0.42) & 30.19 & 30.84(-0.65) & 27.33 & 28.36(-1.03) \\
    30    & 29.32 & 29.81(-0.49) & 28.34 & 28.89(-0.55) & 28.52 & 29.18(-0.66) & 25.94 & 26.95(-1.01) \\
    40    & 28.21 & 28.72(-0.51) & 27.37 & 27.97(-0.60) & 27.35 & 27.98(-0.63) & 24.92 & 25.93(-1.01) \\
    50    & 27.33 & 27.85(-0.52) & 26.60 & 27.22(-0.62) & 26.39 & 27.06(-0.67) & 24.11 & 25.15(-1.04) \\
   \hline
    &\multicolumn{8}{c|}{$512\times512$ Fingerprint}\\ \hline
    10    & 28.77 & 28.46(0.31) & 26.67 & 26.24(0.43) & 29.09 & 28.88(0.21) & 24.60 & 24.12(0.48) \\
    20    & 26.46 & 26.20(0.26) & 25.11 & 24.72(0.39) & 26.10 & 25.76(0.34) & 22.93 & 22.49(0.54) \\
    30    & 24.98 & 24.70(0.28) & 23.99 & 23.59(0.40) & 24.43 & 24.07(0.36) & 21.81 & 21.51(0.30) \\
    40    & 23.90 & 23.61(0.29) & 23.10 & 22.72(0.38) & 23.10 & 22.91(0.19) & 20.96 & 20.68(0.28) \\
    50    & 23.05 & 22.76(0.29) & 22.39 & 22.00(0.39) & 22.39 & 22.01(0.38) & 20.26 & 19.96(0.30) \\
    \hline
    &\multicolumn{8}{c|}{$512\times512$ Boat}\\ \hline
    10    & 30.65 & 31.04(-0.39) & 28.40 & 28.80(-0.40) & 30.11 & 30.65(-0.54) & 26.23 & 27.08(-0.85) \\
    20    & 28.40 & 28.84(-0.44) & 26.88 & 27.32(-0.44) & 27.61 & 28.20(-0.59) & 24.76 & 25.56(-0.80) \\
    30    & 26.95 & 27.41(-0.46) & 25.79 & 26.24(-0.45) & 26.07 & 26.66(-0.59) & 23.75 & 24.46(-0.71) \\
    40    & 25.90 & 26.38(-0.48) & 24.98 & 25.43(-0.45) & 25.01 & 25.56(-0.55) & 23.05 & 23.60(-0.55) \\
    50    & 25.11 & 25.57(-0.46) & 24.32 & 24.80(-0.48) & 24.23 & 24.75(-0.52) & 22.41 & 22.96(-0.55) \\
   \hline
\end{tabular}%
\medskip
\caption{Performance in terms of PSNR values for image inpainting with the noise level $\sigma=10,\ldots,50$. $\tpctf_6^{\reduced}$ uses the same inpainting algorithm as in \cite{SHB:2014} by replacing $\tpctf_6$ with $\tpctf_6^{\reduced}$.
}
\label{tab:imageinpaintingnoise}%
\end{center}
\end{small}
\end{table}%

The experimental results in Tables~\ref{tab:image:inpainting} and~\ref{tab:imageinpaintingnoise}
show that $\tpctf_6^{\reduced}$ performs as well as $\tpctf_6$ in \cite{SHB:2014} for image inpainting with or without noise.
Both $\tpctf_6^{\reduced}$ and $\tpctf_6$ often outperform the state-of-the-art inpainting algorithms in \cite{LiShenSuter:ieee:2013,LimIEEE2}.

\subsection{Video denoising and video inpainting}
For video denoising in three dimensions, the directional tensor product complex tight framelet $\tpctf_6^{\reduced}$ has the redundancy rate $3\tfrac{5}{7}$.
We compare the performance of $\tpctf_6^{\reduced}$ with
the directional tensor product complex tight framelet $\tpctf_6$ (which has the same directionality as $\tpctf_6^{\reduced}$ but has the redundancy rate $29\frac57$), $\tpctf_3$ (which has the same redundancy rate $3\frac{5}{7}$ as $\tpctf_6^{\reduced}$), the 3D dual tree complex wavelet transform ($\dtcwt$, which has the redundancy rate $8$), the 3D nonseparable surfacelets in \cite{LD} (which has the redundancy rate $6.4$), and the 3D nonseparable compactly supported shearlet frames $\mbox{DNST}^{3D}$-$42$ and $\mbox{DNST}_2^{3D}$-$154$ in \cite{LimIEEE2} in ShearLab with $\mbox{DNST}^{3D}$-$42$ and $\mbox{DNST}_2^{3D}$-$154$ having the redundancy rates $42$ and $154$, respectively.
The decomposition level for all tensor product complex tight framelets $\tpctf_m$ is set to be $J=4$ and the boundary extension size for all $\tpctf_m$ is set to be $16$ pixels.
The strategy for processing frame coefficients for all $\tpctf_m$ and $\dtcwt$ is the same bivariate shrinkage as outlined in \eqref{bs} but with window size $3$ instead of $7$.
The constant $\sqrt{3}$ in the bivariate shrinkage function in \eqref{bs} for $\dtcwt$ is still set to be $\sqrt{3}$, but this constant is replaced by $\sqrt{4}$ for $\tpctf_m$ (though there are no significant performance differences if the constant $\sqrt{3}$ is used for $\tpctf_m$). All parameters for 3D surfacelets and the two 3D shearlets $\mbox{DNST}^{3D}$-$42$ and $\mbox{DNST}_2^{3D}$-$154$ are the same as those described in \cite{LimIEEE2,LD}.
The two video sequences \emph{Mobile} and \emph{Coastguard} are used for comparison, which are the same test videos as used in the paper \cite{LimIEEE2} and can be downloaded from the ShearLab 3D package at \texttt{http://www.shearlab.org}.
See Figure~\ref{fig:testimages} for the first frame of these two videos \emph{Mobile} and \emph{Coastguard}.
The comparison results of performance are reported in
Table~\ref{tab:video} under independent identically distributed
Gaussian noise with noise standard deviation $\sigma=10,20,30,40,50,80,100$.

\begin{table}[htbp]
\begin{small}
  \centering
    \begin{tabular}{|c||c||c|c|c|c|c|c|}
    \hline
    & \multicolumn{7}{c|}{$192\times 192\times 192$ Mobile} \\
    \hline
    $\sigma$ & $\tpctf_6^{\reduced}$&
$\tpctf_6$
 & $\tpctf_3$
 &$\dtcwt$
 &\mbox{Surfacelets} &$\mbox{DNST}^{3D}$-$42$  &$\mbox{DNST}^{3D}$-$154$ \\
    \hline
    10    & 35.26 & 35.52(-0.26) & 33.40(1.86) & 34.11(1.15) & 32.79(2.47) & 35.27(-0.01) & 35.91(-0.65) \\
    20    & 31.58 & 31.77(-0.19) & 29.90(1.68) & 30.53(1.05) & 29.95(1.63) & 31.32(0.26) & 32.18(-0.60) \\
    30    & 29.52 & 29.66(-0.14) & 28.03(1.51) & 28.55(0.97) & 28.26(1.26) & 29.00(0.52) & 29.99(-0.47) \\
    40    & 28.10 & 28.20(-0.10) & 26.76(1.34) & 27.17(0.93) & 27.05(1.05) & 27.37(0.73) & 28.42(-0.32) \\
    50    & 27.01 & 27.08(-0.07) & 25.79(1.22) & 26.15(0.86) & 26.11(0.90) & 26.13(0.88) & 27.22(-0.21) \\
    80    & 24.82 & 24.82(0.00) & 23.87(0.95) & 24.03(0.79) & 24.25(0.57) & 23.69(1.13) & 24.75(0.07) \\
    100   & 23.87 & 23.82(0.05) & 23.06(0.81) & 23.06(0.81) & 23.40(0.47) & 22.63(1.24) & 23.62(0.25) \\
    \hline
    & \multicolumn{7}{c|}{$192\times 192\times 192$ Coastguard} \\
\hline
    10    & 33.86 & 34.15(-0.29) & 32.59(1.27) & 33.16(0.70) & 30.86(3.00) & 33.13(0.73) & 33.81(0.05) \\
    20    & 30.26 & 30.62(-0.36) & 29.21(1.05) & 29.66(0.60) & 28.26(2.00) & 29.45(0.81) & 30.28(-0.02) \\
    30    & 28.38 & 28.73(-0.35) & 27.46(0.92) & 27.82(0.56) & 26.87(1.51) & 27.50(0.88) & 28.40(-0.02) \\
    40    & 27.13 & 27.45(-0.32) & 26.28(0.85) & 26.58(0.53) & 25.91(1.21) & 26.17(0.96) & 27.13(-0.00) \\
    50    & 26.18 & 26.48(-0.30) & 25.40(0.78) & 25.66(0.52) & 25.17(1.01) & 25.17(1.01) & 26.17(0.01) \\
    80    & 24.30 & 24.53(-0.23) & 23.67(0.63) & 23.84(0.46) & 23.61(0.69) & 23.17(1.13) & 24.17(0.13) \\
    100   & 23.47 & 23.65(-0.18) & 22.91(0.56) & 22.98(0.49) & 22.87(0.60) & 22.24(1.23) & 23.22(0.25) \\
    \hline
    \end{tabular}%
\medskip
\caption{
Comparison results, in terms of PSNR values, of several video denoising methods using our proposed 3D directional tensor product complex tight framelet $\tpctf_6^{\reduced}$ with the redundancy rate $3\tfrac{5}{7}$, 3D tensor product complex tight framelet $\tpctf_6$ with the redundancy rate $29\tfrac{5}{7}$ (having the same directionality as $\tpctf_6^{\reduced}$), $\tpctf_3$ with the redundancy rate $3\tfrac{5}{7}$ (having the same redundancy rate as $\tpctf_6^{\reduced}$),
the 3D dual tree complex wavelet transform ($\dtcwt$) with the redundancy rate $8$, the 3D nonseparable
surfacelets in \cite{LD} with the redundancy rate $6.4$, and the 3D nonseparable compactly supported shearlet frames $\mbox{DNST}^{3D}$-$42$ and $\mbox{DNST}_2^{3D}$-$154$ with the redundancy rates $42$ and $154$, respectively.
}\label{tab:video}%
\end{small}
\end{table}%

From Table~\ref{tab:video}, we see that the loss of performance of $\tpctf_6^{\reduced}$ is not significant in comparison with $\tpctf_6$ for both \emph{Mobile} and \emph{Coastguard}. $\tpctf_6^{\reduced}$ can outperform $\mbox{DNST}_2^{3D}$-$154$ when the noise level $\sigma$ is high ($\sigma>50$) despite the fact that $\mbox{DNST}_2^{3D}$-$154$ has the highest redundancy rate $154$ which is $41.5$ times the redundancy rate of $\tpctf_6^{\reduced}$. Generally, $\tpctf_6^{\reduced}$ outperforms all other methods (excluding $\tpctf_6$) for any noise level $\sigma$ (except a slightly worse performance at $\sigma=10$ comparing with  $\mbox{DNST}^{3D}$-$42$ for \emph{Mobile}).
Significant improvement can be seen in comparison with the nonseparable 3D surfacelets in \cite{LD} (up to 2.47dB for \emph{Mobile} and 3dB for \emph{Coastguard})
and $\mbox{DNST}^{3D}$-$42$ in \cite{LimIEEE2} (up to 1.24dB for \emph{Mobile} and 1.23dB for \emph{Coastguard}).

For video inpainting, we  use the same inpainting algorithm as developed in \cite{SHB:2014} but with 2D tensor product complex tight framelet $\tpctf_6$ and $\tpctf_6^{\reduced}$ being replaced by 3D tensor product complex tight framelet $\tpctf_6$ and $\tpctf_6^{\reduced}$, respectively. We  compare the performance of $\tpctf_6^{\reduced}$ with  surfacelets in \cite{LD} and 3D nonseparable compactly supported shearlet frames $\mbox{DNST}^{3D}$-$42$ and $\mbox{DNST}^{3D}$-$154$  in ShearLab 3D package.  The numerical results on video inpainting are presented in Table~\ref{tab:video:inpainting}.

\begin{table}[htbp]
\begin{small}
\begin{center}
\begin{tabular}{|c||c||c|c|c|c|}
\hline
&\multicolumn{5}{|c|}{$192\times 192 \times 192$ Mobile (50\% missing)} \\
\hline
$\sigma$& \!$\tpctf_6^{\reduced}$\!  & $\tpctf_6$ &Surfacelets & $\mbox{DNST}^{3D}$-$42$ & $\mbox{DNST}^{3D}$-$154$\\ \hline
0     & 41.15 & 41.74(-0.59)   & 32.09(9.06) & 39.54(1.61)  &  40.71(0.44)\\
10    & 32.65 & 33.09(-0.44)   & 24.70(7.95) & 28.94(3.71)  &  29.20(3.45)  \\
30    & 27.56 & 27.87(-0.31)   & 16.35(11.21) & 20.08(7.48) &  20.35(7.21) \\
   \hline
&\multicolumn{5}{|c|}{$192\times 192 \times 192$ Mobile (80\% missing)} \\ \hline
0     & 28.22 & 28.61(-0.39)   & 22.27(5.95) & 31.09(-2.87) &  33.21(-4.99)  \\
10    & 27.32 & 27.84(-0.52)   & 20.47(6.85) & 27.60(-0.28) &  28.45(-1.13)  \\
30    & 22.89 & 23.53(-0.64)   & 15.81(7.08) & 21.27(1.62)  &  21.60(1.29)\\
\hline
&\multicolumn{5}{|c|}{$192\times 192 \times 192$ Coastguard (50\% missing)} \\
\hline
0     & 37.19 & 37.75(-0.56)   & 28.67(8.52)  & 35.74(1.45)     & 36.69(0.50)\\
10      & 30.88 & 31.48(-0.60)   & 23.61(7.27)  & 28.17(2.71)   & 28.51(2.37)\\
30       & 26.59 & 27.15(-0.56)   & 16.13(10.46)  & 19.92(6.67) & 20.17(6.42)\\
   \hline
&\multicolumn{5}{|c|}{$192\times 192 \times 192$ Coastguard (80\% missing)} \\ \hline
0       & 26.63 & 27.41(-0.78)   & 20.96(5.67)  & 28.56(-1.93)   & 30.02(-3.39)\\
10       & 26.07 & 26.67(-0.60)   & 19.73(6.34)  & 26.18(-0.11)  & 26.92(-0.85)\\
30       & 22.68 & 23.29(-0.61)   & 15.81(6.87)  & 20.87(1.81)   & 21.10(1.58) \\
\hline
\end{tabular}%
\medskip
\caption{
Performance in terms of PSNR values of several video inpainting algorithms.
Gaussian noise with noise levels are taken to be $\sigma=0,10,30$, where $\sigma=0$ means no noise.
50\% and 80\% are experiments with 50\% and 80\% randomly missing pixels, respectively. Comparisons are among 3D tensor product complex tight framelet $\tpctf_6^{\reduced}$ with the redundancy rate $3\tfrac{5}{7}$, 3D tensor product complex tight framelet $\tpctf_6$ with the redundancy rate $29\tfrac{5}{7}$ (having the same directionality as $\tpctf_6^{\reduced}$), the 3D nonseparable
surfacelets in \cite{LD} with the redundancy rate $6.4$, the 3D nonseparable compactly supported shearlet frame $\mbox{DNST}^{3D}$-$42$ with the redundancy rates $42$. the 3D nonseparable compactly supported shearlet frame $\mbox{DNST}^{3D}$-$154$ with the redundancy rates $154$. The masks for inpainting  are $50\%$ or $80\%$ uniformly randomly missing pixels.
}\label{tab:video:inpainting}%
\end{center}
\end{small}
\end{table}%

From Table~\ref{tab:video:inpainting}, we see that the loss of performance of $\tpctf_6^{\reduced}$ is acceptable in comparison with $\tpctf_6$ for both \emph{Mobile} and \emph{Coastguard} in view of the redundancy rate of $\tpctf_6^{\reduced}$. Surfacelets do not perform well in the inpainting tests even though its redundancy rate is about twice of that of $\tpctf_6^\reduced$. When the missing pixels are 50\%, $\tpctf_6^\reduced$ outperforms $\mbox{DNST}^{3D}$-$42$ and $\mbox{DNST}^{3D}$-$154$, especially when the noise level is high ($\sigma=30$). When the missing pixels are 80\%, $\mbox{DNST}^{3D}$-$42$ and $\mbox{DNST}^{3D}$-$154$  have better performance with low noise level ($\sigma=0,10$) comparing to $\tpctf_6^\reduced$. However, when the noise level is high ($\sigma=30$), they no longer produce good results as $\tpctf_6^\reduced$ probably due to the reason that $\mbox{DNST}^{3D}$ employs undecimated transforms.

In summary, the proposed directional tensor product complex tight framelet $\tpctf_6^{\reduced}$ with low redundancy often performs better than other directional representation systems when an image or video is texture-rich, while it performs comparably with other directional representation systems for most other types of images and videos with significantly low redundancy rate of $\tpctf_6^{\reduced}$ in comparison with many other separable or nonseparable systems.


\begin{thebibliography}{10}



\bibitem{FDCT}
E. J. Cand\`{e}s, L. Demannet, D. Donoho, and L. Ying, Fast discrete curvelet transforms, \emph{Multiscale Model. Simul.} {\bf 5} (3) (2006) 861--899.


\bibitem{CCS:framelet:inpainting}
J.-F. Cai, R.~H. Chan, and Z.~Shen.
A framelet-based image inpainting algorithm.
{\em Appl. Comput. Harmon. Anal.},\textbf{24} (2008), 131--149.


\bibitem{CCS:ipi:2010}
J.-F. Cai, R.~H. Chan, and Z.~Shen.
Simultaneous cartoon and texture inpainting.
\emph{Inverse Problems Imaging}, \textbf{4} (2010), 379--395.

\bibitem{CDOS}
J.-F. Cai, B.~Dong, S.~Osher, and Z.~Shen, Image restoration: total variation, wavelet frames, and beyond. \emph{J. Amer. Math. Soc.} \textbf{25} (2012), 1033--1089.


\bibitem{CRSS}
R.~H.~Chan, S.~D.~Riemenschneider, L.~Shen, and Z.~Shen, Tight frame: an efficient way for high-resolution image reconstruction. \emph{Appl. Comput. Harmon. Anal.} \textbf{17} (2004), 91--115.

\bibitem{CHS} C. K. Chui,  W. He and  J. St\"ockler,
Compactly supported tight and sibling frames with maximum vanishing
moments, {\em Appl. Comput. Harmon. Anal.} {\bf 13} (2002), 224--262.

\bibitem{CoifDo}
R. R. Coifman and D. L. Donoho, Translation-invariant de-noising, in Wavelets and Statistics, Lecture Notes in Statistics, \textbf{103} (1995), 125--150, Springer-Verlag New York.



\bibitem{DHRS}
I.~Daubechies, B. ~Han, A.~Ron, and Z.~Shen, Framelets: MRA-based
constructions of wavelet frames, {\em Appl. Comput. Harmon. Anal.}
{\bf 14} (2003), 1--46.

\bibitem{DV}
M. N. Do and M. Vetterli, Contourlets, in G. V. Welland, editor,
Beyond Wavelets, Academic Press, (2008), 83--105.

\bibitem{DongShen:tutorial}
B.~Dong and Z.~Shen, MRA-based wavelet frames and applications, IAS/Park City Mathematics Series, \textbf{19}, (2010).


\bibitem{Han:acha:1997}
B.~Han, On dual wavelet tight frames, {\em Appl. Comput. Harmon.
Anal.}, \textbf{4} (1997), 380--413.

\bibitem{Han:acha:2010}
B.~Han, Pairs of frequency-based nonhomogeneous dual wavelet frames in the distribution space. \emph{Appl. Comput. Harmon. Anal.} \textbf{29} (2010), 330--353.

\bibitem{Han:acha:2012}
B.~Han, Nonhomogeneous wavelet systems in high dimensions, \emph{Appl. Comput. Harmon. Anal.} \textbf{32} (2012), 169--196.

\bibitem{Han:mmnp:2013}
B.~Han, Properties of discrete framelet transforms, \emph{Math. Model. Nat. Phenom.}
\textbf{8} (2013), 18--47.

\bibitem{Han:acha:2013}
B.~Han, Matrix splitting with symmetry and symmetric tight framelet filter banks with two high-pass filters, \emph{Appl. Comput. Harmon. Anal.}, \textbf{35} (2013), 200--227.

\bibitem{Han:acha:2014}
B. Han, Symmetric tight framelet filter banks with three high-pass filters,
\emph{Appl. Comput. Harmon. Anal.} \textbf{37} (2014), 140--161.

\bibitem{HanMoZhao:2013}
B.~Han, Q.~Mo, and Z.~Zhao, Compactly supported tensor product complex tight framelets with directionality, (2013), arXiv:1307.2599

\bibitem{HanZhao:siims:2014}
B.~Han and Z.~Zhao,
Tensor product complex tight framelets with increasing directionality, \emph{SIAM J. Imaging Sci.}, \textbf{7} (2014), 997--1034.

\bibitem{HanZhuang:acha:2014}
B.~Han and X.~Zhuang, Smooth affine shear tight frames with MRA structure, \emph{Appl. Comput. Harmon. Anal.} (2014), http://dx.doi.org/10.1016/j.acha.2014.09.005

\bibitem{K:acha:2001}
N. G. Kingsbury, Complex wavelets for shift invariant analysis and filtering of signals, \emph{Appl. Comput. Harmon. Anal.}, \textbf{10} (2001), 234--253.

\bibitem{KL}
G.~Kutyniok and D.~Labate, Shearlets: Multiscale Analysis for Multivariate Data, Birkh\"auser, 2012.

\bibitem{KSZ}
G. Kutyniok, M. Shahram, and X.~Zhuang,
ShearLab: a rational design of a digital parabolic scaling algorithm, \emph{SIAM J. Imaging Sci.} \textbf{5} (2012), 1291--1332.

\bibitem{LiShenSuter:ieee:2013}
Y.-R.~Li, L.~Shen, Lixin, and B.~W.~Suter, Adaptive inpainting algorithm based on DCT induced wavelet regularization. \emph{IEEE Trans. Image Process.} \textbf{22} (2013), 752--763.


\bibitem{LimIEEE}
W.-Q. Lim, The discrete shearlet transform: A new directional transform and compactly supported shearlet frames,
\emph{IEEE Trans. Image Process.} {\bf 19} (2010), 1166--1180.

\bibitem{LimIEEE2}
W.-Q. Lim, Nonseparable shearlet transform. \emph{IEEE Trans. Image Proc.} {\bf22} (2013), 2056 - 2065.


\bibitem{LD}
Y. M. Lu and M. N. Do, Multidimensional directional filter banks and
surfacelets, \emph{IEEE Trans. Image Process.}, \textbf{16} (2007), 918--931.

\bibitem{RonShen:twf}
A.~Ron and Z.~Shen, Affine systems in $L_2(\R^d)$: the analysis of the analysis operator, \emph{J. Funct. Anal.},  \textbf{148}  (1997),  408--447.



\bibitem{SBK}
I. W. Selesnick, R. G. Baraniuk, and N. G. Kingsbury, The dual-tree
complex wavelet transform, \emph{IEEE Signal Process. Mag.} {\bf 22}
(6) (2005) 123--151.


\bibitem{SS:bslocal}
L.~Sendur and I.~W.~Selesnick,
Bivariate shrinkage with local variance estimation,
\emph{IEEE Signal Process. Letters}, \textbf{9} (2002), 438--441.


\bibitem{SHB:2014}
Y. Shen, B. Han, and E. Braverman,
Image inpainting using directional tesnor product complex tight framelets, (2014), arXiv:1407.3234.
\end{thebibliography}
\end{document}